\newcommand*{\mysquare}{\rule[0.18em]{0.36em}{0.36em}}
\newcommand*{\mytriangle}{\raisebox{0.12em}{\resizebox{0.48em}{0.48em}{$\blacktriangleright$}}}
\newcommand*{\mybar}{\rule[0.32em]{0.62em}{0.08em}}
\newcommand*{\mydot}{\raisebox{0.14em}{\resizebox{0.44em}{!}{$\bullet$}}}
\setlist{%
  align=left,%
  labelindent=0mm, %
  leftmargin=!,%
  itemindent=0mm, %
  listparindent=\parindent,%
  parsep=0mm,%
  topsep=1mm,%
  itemsep=1mm%
}
\setlist[itemize,1]{label={\mysquare}, labelwidth=\widthof{\mysquare\ }}%
\setlist[itemize,2]{label={\mytriangle}, labelwidth=\widthof{\mytriangle\ }}%
\setlist[itemize,3]{label={\mybar}, labelwidth=\widthof{\mybar\ }}%
\setlist[itemize,4]{label={\mydot}, labelwidth=\widthof{\mydot\ }}%
\setlist[enumerate,1]{label=\arabic*), labelwidth=\widthof{9)}}%
\setlist[enumerate,2]{label=\arabic{enumi}.\arabic*), labelwidth=\widthof{9.9)}}%
\setlist[enumerate,3]{label=\arabic{enumi}.\arabic{enumii}.\arabic*), labelwidth=\widthof{9.9.9)}}%
\setlist[enumerate,4]{label=\arabic{enumi}.\arabic{enumii}.\arabic{enumiii}.\arabic*), labelwidth=\widthof{9.9.9.9)}}%
\newcommand*{\abstractnoindent}{}%
\let\abstractnoindent\abstract
\renewcommand*{\abstract}{\let\quotation\quote\let\endquotation\endquote
  \abstractnoindent}
\definecolor{blue}{RGB}{58, 95, 205}%
\definecolor{red}{RGB}{205, 41, 144}%
\definecolor{orange}{RGB}{238, 118, 0}%
\definecolor{chocolate}{RGB}{205, 102, 29}%
\lstdefinestyle{input}{
  backgroundcolor=\color{black!12},%
  commentstyle=\itshape\color{black!50},%
  keywordstyle=\bfseries\color{black},%
  stringstyle=\color{black}%
}
\lstdefinestyle{output}{
  backgroundcolor=\color{black!6}%
}
\lstdefinestyle{codestyle}{
  language={},%
  keywords={},%
  otherkeywords={}%
}
\let\csname Sinput\endcsname\relax
\let\csname endSinput\endcsname\relax
\let\csname Soutput\endcsname\relax
\let\csname endSoutput\endcsname\relax
\lstdefinestyle{Rstyle}{
  language=R,%
  keywords={},%
  otherkeywords={}%
}
\lstdefinestyle{Cstyle}{
  language=C,%
  keywords={},%
  otherkeywords={}%
}
\lstdefinestyle{Bashstyle}{
  language=bash,%
  keywords={},%
  otherkeywords={}%
}
\lstdefinestyle{LaTeXstyle}{
  language=[LaTeX]{TeX},%
  texcs={},%
  keywords={},%
  otherkeywords={}%
}
\renewcommand*{\cite}[2][]{\textcite[#1]{#2}}%
\newif\ifstarttheorem
\declaretheoremstyle[%
spaceabove=0.5em,
spacebelow=0.5em,
headfont=\sffamily\bfseries\global\starttheoremtrue,
notefont=\sffamily\bfseries,
notebraces={(}{)},
headpunct={},
bodyfont=\normalfont,
postheadspace=\newline%
]{myMainStyle}
\declaretheorem[style=myMainStyle, numberwithin=section]{definition}%
\declaretheorem[style=myMainStyle, sibling=definition]{proposition}
\declaretheorem[style=myMainStyle, sibling=definition]{lemma}
\declaretheorem[style=myMainStyle, sibling=definition]{remark}
\declaretheorem[style=myMainStyle, sibling=definition]{example}
\declaretheorem[style=myMainStyle, sibling=definition]{algorithm}
\preto\itemize{%
  \if@inlabel
    \ifstarttheorem
      \mbox{}\par\nobreak\vskip\glueexpr-\parskip-\baselineskip+0.25em\relax\hrule\@height\z@
    \fi%
  \fi%
  \global\starttheoremfalse%
  \def\tempa{proof}%
  \ifx\tempa\mycurrenvir
    \ifstarttheorem
      \mbox{}\par\nobreak\vskip\glueexpr-\parskip-\baselineskip+0.25em\relax\hrule\@height\z@
    \fi%
  \fi%
  \global\starttheoremfalse%
}
\preto\enditemize{\global\starttheoremfalse}
\preto\enumerate{%
  \if@inlabel
    \ifstarttheorem
      \mbox{}\par\nobreak\vskip\glueexpr-\parskip-\baselineskip+0.25em\relax\hrule\@height\z@
    \fi%
  \fi%
  \global\starttheoremfalse%
  \def\tempa{proof}%
  \ifx\tempa\mycurrenvir
    \ifstarttheorem
      \mbox{}\par\nobreak\vskip\glueexpr-\parskip-\baselineskip+0.25em\relax\hrule\@height\z@
    \fi%
  \fi%
  \global\starttheoremfalse%
}
\preto\endenumerate{\global\starttheoremfalse}
\newcommand{\tb}[2]{\substack{#1\\#2}}
\NewDocumentCommand{\tmb}{O{0.1mm} O{0.1mm} O{0.88} m m m}{%
  \mathrel{%
    \vbox{\offinterlineskip\m@th
      \ialign{%
        \hfil##\hfil\cr
        $\scriptscriptstyle\text{\scalebox{#3}{#4}}\mathstrut$\cr%
        \noalign{\vspace{#1}}%
        \vtop{%
          \ialign{%
            \hfil##\hfil\cr
            $#5$\cr\noalign{\vspace{#2}}%
            $\scriptscriptstyle\text{\scalebox{#3}{#6}}\mathstrut$\cr%
          }%
        }\cr
      }%
    }%
  }%
}
\NewDocumentCommand{\tmbc}{O{0.1mm} O{0.1mm} O{0.88} m m m}{%
  \mathrel{%
    \vbox{\offinterlineskip\m@th
      \ialign{%
        \hfil##\hfil\cr
        $\scriptscriptstyle\mathclap{\text{\scalebox{#3}{#4}}}\mathstrut$\cr%
        \noalign{\vspace{#1}}%
        \vtop{%
          \ialign{%
            \hfil##\hfil\cr
            $#5$\cr\noalign{\vspace{#2}}%
            $\scriptscriptstyle\mathclap{\text{\scalebox{#3}{#6}}}\mathstrut$\cr%
          }%
        }\cr
      }%
    }%
  }%
}
\newcommand*{\T}{^{\top}}
\newcommand*{\isim}{\tmb{\tiny ind.}{\sim}{}}%
\newcommand*{\IN}{\mathbb{N}}
\newcommand*{\IR}{\mathbb{R}}
\newcommand*{\Beta}{\operatorname{Beta}}
\newcommand*{\U}{\operatorname{U}}
\newcommand*{\B}{\operatorname{B}}
\newcommand*{\N}{\operatorname{N}}
\newcommand*{\I}{\mathbbm{1}}
\newcommand*{\rd}{\mathrm{d}}
\newcommand*{\ran}{\operatorname{ran}}
\renewcommand*{\P}{\mathbb{P}}
\newcommand*{\E}{\mathbb{E}}
\newcommand*{\var}{\operatorname{var}}
\newcommand*{\cor}{\operatorname{cor}}
\newcommand*{\VaR}{\operatorname{VaR}}
\newcommand*{\ES}{\operatorname{ES}}
\newcommand*{\logit}{\operatorname*{logit}}
\newcommand*{\logiti}{{\operatorname*{logit}}^{-1}}
\newcommand*{\ARMA}{\operatorname{ARMA}}
\newcommand*{\GARCH}{\operatorname{GARCH}}
\newcommand*{\R}{\textsf{R}}
\newcommand*{\eps}{\varepsilon}
\begin{document}

\thispagestyle{plain}
\begin{center}
  \sffamily
  {\bfseries\LARGE Bernoulli amputation\par}
  \bigskip\smallskip
  {\Large Marius Hofert\footnote{Department of Statistics and Actuarial Science, The University of
      Hong Kong,
      \href{mailto:mhofert@hku.hk}{\nolinkurl{mhofert@hku.hk}}},
    James Jackson\footnote{The Alan Turing Institute, London, UK, \href{mailto:jjackson@turing.ac.uk}{\nolinkurl{jjackson@turing.ac.uk}}},
    Niels Hagenbuch\footnote{F.\ Hoffmann-La Roche AG, Basel, Switzerland, \href{mailto:niels.hagenbuch@roche.com}{\nolinkurl{niels.hagenbuch@roche.com}}}
    \par
    \bigskip
    \today\par}
\end{center}
\par\smallskip
\begin{abstract}
A novel, stochastic approach to amputation, the process of introducing missing values to a complete dataset, is presented. It allows one to construct a wide variety of missingness patterns by only having to specify distributions of missingness indicators as opposed to specifying each missingness pattern manually. Missingness indicators are modeled in a principled way via copulas and Bernoulli margins, thus allowing one to incorporate dependence in missingness patterns.  Besides more classical missingness mechanisms such as missing completely at random, missing at random, and missing not at random, the approach is able to model structured missingness such as block missingness and, via mixtures, monotone missingness, which are patterns of missing data frequently found in real-life datasets. Properties such as joint missingness probabilities or missingness correlation are derived mathematically. The flexibility of the approach in capturing different missingness patterns while only requiring to specify distributional assumptions on missingness indicators is demonstrated with mathematical examples and empirical illustrations in terms of a well-known example dataset of sufficiently small sample size that allows to identify each missing data point visually. Finally, an example application to multivariate financial time series is provided.
\end{abstract}
\minisec{Keywords}
Amputation, copula, missingness indicator matrix, missingness mechanisms, structured missingness
\minisec{MSC2010} 62D10, 62H99, 65C60 %

\section{Introduction}\label{sec:intro}
Owing to the ubiquity of missing data in real-world datasets, \emph{imputation}
algorithms are an important area of research; see \cite{Molenberghs2014},
\cite{LittleRubin2020}, or \cite{Carpenter2023}. The effectiveness of such
algorithms is typically evaluated with simulation studies. However, somewhat
paradoxically, empirical datasets affected by missingness are unsuitable
as inputs for such simulations. This is because neither the missing values' true
underlying (but unobserved) values, nor the mechanism which led to their
missingness, is known. As a result, standard practice in such studies is to
take a complete dataset with known underlying distribution and introduce missing
values to it. This process is called \emph{amputation}; see
\cite{Schouten2018}. Analogously to multiple imputation, we introduce the term
\emph{multiple amputation} when constructing several amputed datasets, which
allows performance to be assessed over replications of a distributional missingness pattern.

In this paper, we address the question of how to create stochastic (including
deterministic) missingness patterns for amputation (rather than which type of
pattern is more appropriate for which dataset and amputation task at hand). Our
proposed approach is based on the stochastic modelling of missingness indicators
\emph{marginally}, i.e., each individually, via Bernoulli random variables
(indicating with the value $1$ the missingness of the respective value) and
\emph{jointly}, i.e., the stochastic dependence among the Bernoulli random
variables, via \emph{copulas}, i.e., joint distribution functions with standard
uniform univariate margins. This allows us to model dependencies among
missingness indicators in a natural and principled way without having
to specify missingness patterns manually.

We assume the reader to be familiar with basic stochastic modelling concepts,
including almost sure equality, equality in distribution, the notion of
multivariate distribution functions, their marginal distribution functions, and
copulas, as well as random elements (random variables, random vectors, and
random matrices; in particular, ``random'' does not necessarily mean independent
and identically distributed (iid) here unless stated). As our modelling approach
is stochastic in nature, we follow classical statistical notation and introduce
each properly on first
appearance. %

There exists a number of different missingness mechanisms in the literature
(e.g., missing completely at random (MCAR), missing at random (MAR), missing not
at random (MNAR), and, more recently, (un)structured
missingness). Section~\ref{sec:missingness:mechanisms} provides respective
definitions and clarifies what we understand under these missingness
mechanisms. To be able to focus on our main contribution, two existing
approaches to amputation are summarized in Section~\ref{sec:lit}, and
proofs are deferred to Section~\ref{sec:proofs}.

The paper is organised as follows. Section~\ref{sec:not:ass} provides important
aspects for understanding how missingness can be modelled in our stochastic
approach, i.e., through random variables via their joint distribution. It also
introduces the notation and selected assumptions we refer to throughout the
paper. Our stochastic approach to amputation is introduced in
Section~\ref{sec:main}. Section~\ref{sec:mtcars} demonstrates its feasibility
and capability in capturing missingness patterns in terms of the publicly
available \R\ dataset \texttt{mtcars} which allows for visual identification of
each missing data point, as well as in terms of the publicly available \R\ dataset \texttt{SP500\_const}
to study the effect on risk measure estimates. Section~\ref{sec:concl} provides concluding remarks.

\section{Understanding missingness, notation, and initial assumptions}\label{sec:not:ass}
\subsection{Probabilistic setup}\label{sec:prob:setup}
Consider the random matrix
$Y=(Y_{i,j})_{i=1,\dots,n,\ j=1,\dots,d}\in\IR^{n\times d}$ %
with $d$-dimensional rows $\bm{Y}_{i,}=(Y_{i,1},\dots,Y_{i,d})$, $i=1,\dots,n$,
and $n$-dimensional columns $\bm{Y}_{,j}=(Y_{1,j},\dots,Y_{n,j})$,
$j=1,\dots,d$; %
note that our results do not depend on the form of $Y$, e.g., $Y$ could also be a
single long vector $\bm{Y}=(Y_1,\dots,Y_{nd})$ as sometimes found in the missingness literature. %
The random matrix $Y$ modelling $n$ observations of dimension $d$ is
considered \emph{complete} in that its realisations produce datasets
without missing values.  A realisation $y$ of $Y$ would ideally be the
starting point of any statistical analysis. Yet, instead of
$y$, a statistician often only observes $x\in\IR_{\ast}^{n\times d}$ for
$\IR_{\ast}=\IR\cup\{\ast\}$, where the symbol $\ast$ is arbitrarily chosen to indicate a missing value;
in statistical software this is typically NA (``not
available''). Thus, there exists a matrix
$m=(\I_{\{x_{i,j}=\ast\}})_{i=1,\dots,n,\ j=1,\dots,d}$ that indicates
missingness in the matrix $x$. In terms of realisations, this describes an
entirely deterministic setup (\textit{ex post}, after realisations
are obtained). To understand how missingness arises and how we can
model it, we need a stochastic setup (\textit{ex ante},
before realisations); confusing the \textit{ex ante} with
the \textit{ex post} has led to ambiguous definitions in the
missingness literature, as pointed out by \cite[Section~3]{Seaman2013}
with various examples.

Given $Y$, we imagine there is a random matrix $M\in\{0,1\}^{n\times d}$, which
indicates with values $1$ the components of (the unobservable, complete) $Y$
that will be missing. The resulting random matrix $X$ has components
$X_{i,j}=\ast$ if $M_{i,j}=1$ and $X_{i,j}=Y_{i,j}$ if $M_{i,j}=0$.
With canonical operations $\ast\cdot 0=0$, $\ast+0=\ast$, and
$\ast\cdot 1=\ast$, we have $X=\ast\,M + Y\odot (1-M)$, where $\odot$ is the
Hadamard product. $M$ is called \emph{missingness indicator matrix}; see also
\cite[p.~8]{LittleRubin2020}. Note that, instead, \cite{meallirubin2015} consider the
\emph{response indicator matrix} or \emph{availability indicator matrix}
$R=1-M = (1-M_{i,j})_{i=1,\dots,n,\ j=1,\dots,d}$. \cite{Rubin1976} and
\cite{Seaman2013} consider the response/availability indicator matrix, but
denote it by $M$ and refer to it as missingness indicator matrix. We indicate
and model missingness (rather than availability) as that is the goal in
amputation, but availability can equally well be modelled \emph{mutatis mutandis}. %

Our main goal is to present a stochastic approach for producing $X$ with realistic
realisations of missingness patterns, including classical ones as well as structured missingness (SM)
such as \ref{SM1} and \ref{SM2}; see Section~\ref{app:SM:def} for these concepts. This is achieved
by constructing conditional distributions $F_{M|Y}$ and thus realisations of
$M$ given $Y$; note that the probability mass function $f_{M|Y}(m\,|\,y)$ of $M$ given
$Y=y$ at $m$ corresponds to $g_{\bm{\phi}}(1-m\,|\,u)$ in the notation of
\cite{Rubin1976} and \cite{Seaman2013}, their $\bm{\phi}$ being a parameter we
will denote by $\bm{\theta}$, and their $u$ representing our $y$.  To include
the case of (subsets of) random variables in $M$ being $0$ or $1$
deterministically (with probability in $\{0,1\}$, so capturing \ref{SM2}), we
allow distributions of entries of $M$ to be \emph{degenerate} (unit jumps).

For each $i=1,\dots,n$ and $j=1,\dots,d$, we know that
$M_{i,j}\sim\B(1,p_{i,j})$, i.e., each entry of $M$ is Bernoulli distributed
with success (so missingness) probability $p_{i,j}\in[0,1]$.  Associated with
the random matrix $M$ is the deterministic matrix
$P=(p_{i,j})_{i,j}\in[0,1]^{n\times d}$ of \emph{marginal missingness
  probabilities}. %
In this context, ``marginal'' refers to the marginal distribution of a single
entry $M_{i,j}$ as part of the random matrix $M$, e.g., the entry with index
pair $(i,j)$ has a (marginal) Bernoulli distribution, characterized by its
distribution function
$F_{i,j}(x)=(1-p_{i,j})\I_{[0,\infty)}(x)+p_{i,j}\I_{[1,\infty)}(x)$, $x\in\IR$,
or the corresponding quantile function
$F_{i,j}^{-1}(u)=\inf\{x\in\IR:F_{i,j}(x)\ge u\}=\I_{(1-p_{i,j},1]}(u)$,
$u\in(0,1]$. The quantile function $F_{i,j}^{-1}$ of $F_{i,j}$ can be used to
construct the stochastic representation $F_{i,j}^{-1}(U_{i,j})$ of $F_{i,j}$
(i.e., $F_{i,j}^{-1}(U_{i,j})\sim F_{i,j}$) which we will use for constructing
stochastic representations for $M$; this result is also known as
\emph{inversion method} for sampling univariate distributions and frequently
applied in statistical software.

The goal of modelling $M$ given $Y$ is aligned with the goal of the
\emph{amputer} (the person responsible for removing values), but
understanding the construction principle of $M$ given $Y$ is also
important for the \emph{imputer} (the person responsible for imputing
values) in order to understand how missingness patterns arise in case
no such knowledge is available otherwise.

\subsection{The treated as observed and treated as missing
  parts}\label{sec:obs:parts}
For amputation, it is convenient to think of the
components of the complete $Y$ and the resulting $X$ to be grouped in two
parts. To this end, let $I=\{1,\dots,n\}\times\{1,\dots,d\}$, and let
$I_Y^{\text{mis}},I_Y^{\text{obs}}\subseteq I$ be a partition of $I$
($I_Y^{\text{mis}}\cup I_Y^{\text{obs}}=I$,
$I_Y^{\text{mis}}\cap I_Y^{\text{obs}}=\emptyset$), %
where $I_Y^{\text{mis}}$ ($I_Y^{\text{obs}}$) denotes the set of indices of those entries of $Y$ to be
treated as missing (observed). We then have the
\emph{treated as missing part}
$Y^{\text{mis}}=(Y_{i,j}^{\text{mis}})_{i=1,\dots,n,\ j=1,\dots,d}$ with
$Y_{i,j}^{\text{mis}}=Y_{i,j}$ if $(i,j)\in I_Y^{\text{mis}}$ and $0$
otherwise, %
and the \emph{treated as observed part}
$Y^{\text{obs}}=(Y_{i,j}^{\text{obs}})_{i=1,\dots,n,\ j=1,\dots,d}$ with
$Y_{i,j}^{\text{obs}}=Y_{i,j}$ if $(i,j)\in I_Y^{\text{obs}}$ and $0$ otherwise.
This gives the decomposition $Y=Y^{\text{mis}}+Y^{\text{obs}}$. Based on
$Y^{\text{mis}}$ and $Y^{\text{obs}}$, the amputer constructs $M$, from which
$X$ results.

Similarly, $X$ can be divided into two parts. With
$I_X^{\text{mis}}=\{(i,j)\in I:X_{i,j}=\ast\}$ and
$I_X^{\text{obs}}=\{(i,j)\in I:X_{i,j}\neq\ast\}$
we obtain a partition of $I$ and thus of $X$ into the
\emph{missing part}
$X^{\text{mis}}=(X_{i,j}^{\text{mis}})_{i=1,\dots,n,\ j=1,\dots,d}$ with
$X_{i,j}^{\text{mis}}=X_{i,j}=\ast$ if $(i,j)\in I_X^{\text{mis}}$ and $0$
otherwise, %
and the \emph{observed part}
$X^{\text{obs}}=(X_{i,j}^{\text{obs}})_{i=1,\dots,n,\ j=1,\dots,d}$ with
$X_{i,j}^{\text{obs}}=X_{i,j}=Y_{i,j}$ if $(i,j)\in I_X^{\text{obs}}$ and $0$
otherwise.
Identifying $0$'s from missing data rather than non-missing $Y$ values
that are $0$ in $X^{\text{obs}}$ can be done with $M$.

In contrast to $Y^{\text{mis}}$ and $Y^{\text{obs}}$, we can also define
$X^{\text{mis}}$ and $X^{\text{obs}}$ depending on $M$ via
$I_X^{\text{mis}}=\{(i,j)\in I:M_{i,j}=1\}$ and
$I_X^{\text{obs}}=\{(i,j)\in I:M_{i,j}=0\}$.
However, $Y^{\text{mis}}$ and $Y^{\text{obs}}$ do not depend on
$M$. %
This is important so as to avoid ambiguous %
definitions that the missing data literature is plagued with; see again
the various examples in \cite[Section~3]{Seaman2013}, partly also addressed in Section~\ref{sec:missingness:mechanisms}, in particular Remark~\ref{rem:everywhere:vs:realised}. %

\begin{remark}[Amputation vs.\ imputation]
  In imputation, even though the imputer thinks of $Y$ as complete, only a
  fraction of values of $Y$ (the non-$\ast$ entries in $X$) are available. In
  contrast, in amputation, the amputer has access to all of
  $Y$, %
  via either its distribution (specified through a distribution function, density, stochastic
  representation, characteristic function, etc.) or a realisation $y$ of
  $Y$. Thinking of $Y$ as having missing entries $\ast$ is confusing at best.
  Missing components of $Y$ would force distributional restrictions on $M$
  which, when combined with $Y$ to form $X$, may result in $X$ not compatible
  with the target missingness pattern of the amputer. For example, thinking of
  $Y_{1,1}$ as almost surely (i.e., with probability 1) missing would force
  $M_{1,1}=1$ almost surely, because if $\P(M_{1,1}=1)<1$, then
  $\P(X_{1,1}\neq\ast)>0$, which contradicts the target $\P(X_{1,1}=\ast)=1$.

  Figure~\ref{fig:concepts} summarises the
  conceptual framework for amputation and imputation; we use lowercase letters
  for the imputation framework as the imputer has one realisation $x$ of $X$
  and, thus $m$ of $M$, available, which typically leads to one (or more)
  estimate(s) $\hat{y}$ of $Y$ by (multiple)
  imputation. %
  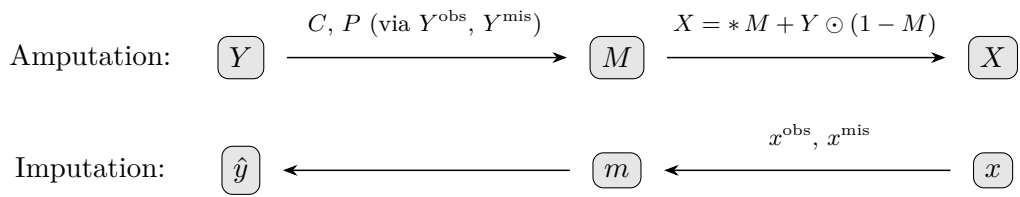
\begin{figure}[htbp]
    \centering
    \begin{tikzpicture}[
      mynode/.style={draw, rectangle, rounded corners, inner sep=1.5mm, fill=gray!20},
      myarrow/.style={->, >={Stealth}, shorten <= 3mm, shorten >= 3mm, line width=0.2mm}
      ]
      \node at (-7,0) {Amputation:};
      \node [mynode] (Y)  at (-5, 0) {$Y$};
      \node [mynode] (M)  at ( 0, 0) {$M$};
      \node [mynode] (X)  at ( 5, 0) {$X$};
      \draw [myarrow] (Y.east)--(M.west);
      \draw [myarrow] (M.east)--(X.west);
      \node [label=above:{\footnotesize $C$, $P$ (via $Y^{\text{obs}}$, $Y^{\text{mis}}$)}] (YobsYmis) at ($ (Y)!0.49!(M) $) {};
      \node [label=above:{\footnotesize $X = \ast\,M + Y\odot(1-M)$}] (one) at ($ (M)!0.5!(X) $) {};
      \node at (-7,-1.5) {Imputation:};
      \node [mynode] (y)  at (-5, -1.5) {$\hat{y}$};
      \node [mynode] (m)  at ( 0, -1.5) {$m$};
      \node [mynode] (x)  at ( 5, -1.5) {$x$};
      \draw [myarrow] (x.west)--(m.east);
      \draw [myarrow] (m.west)--(y.east);
      \node [label=above:{\footnotesize $x^{\text{obs}}$, $x^{\text{mis}}$}] (xobsxmis) at ($ (m)!0.54!(x) $) {};
    \end{tikzpicture}
    \caption{Conceptual framework for amputation and imputation.}
    \label{fig:concepts}
  \end{figure}
  The function $C$ for capturing dependence between missingness
  indicators will be introduced in
  Section~\ref{sec:main:basics}. Note that our work focuses exclusively on
  its contributions to amputation (not imputation).
\end{remark}

\subsection{Initial assumptions}\label{sec:init:ass}
To generate distributions of $M\,|\,Y=y$, as amputers we frequently refer to one
of the following two assumptions (for the exact
assumptions of our approach, see Section~\ref{sec:main}):
\begin{enumerate}[label=(A\arabic*), labelwidth=\widthof{(A2)}]
\item\label{A1} The rows $\bm{M}_{1,},\dots, \bm{M}_{n,}$ of $M$ are
  independent given $Y$.
\item\label{A2} The rows $\bm{M}_{1,},\dots,\bm{M}_{n,}$ of $M$ are
  iid given $Y$.
\end{enumerate}
Under~\ref{A1}, the amputer can still freely choose the entries in the matrix
$P\in[0,1]^{n\times d}$ containing the marginal probabilities of missingness,
whereas, under~\ref{A2}, the rows of $P$
must be equal as vectors, so componentwise, in order to not violate the
assumption of rows being identically distributed. In other words, whilst
under~\ref{A1} $P$ is arbitrary, under~\ref{A2} each row of $P$ must be equal to
$\bm{p}$ for some fixed
$\bm{p}=(p_1,\dots,p_d)\in[0,1]^d$.

Note that~\ref{A2} %
allows to incorporate SM. Under \ref{A2}, in each column, all entries are equally
distributed and one essentially only needs to model one row $\bm{M}$ of
$M$ (a joint distribution of a random vector $\bm{M}$ with $d$ components). In
this case, these $d$ components are referred to as the $d$ margins (or marginal
distributions) of $\bm{M}$ or $M$. If at least one margin of $M$ is degenerate (one
column with index $j$), all $M_{1,j},\dots,M_{n,j}$ indicate missingness (if
$p_j=1$) or all entries indicate no missingness (if $p_j=0$). This will motivate
the introduction of Assumption~\ref{A3} in-between \ref{A1} and \ref{A2} in
Section~\ref{sec:AssumptionsOnDependenceAndMargins}.

So far, our considerations only concerned the marginal distributions
$\B(1,p_{i,j})$, $i=1,\dots,n$, $j=1,\dots,d$, of $M$. In
Section~\ref{sec:main}, we will consider \emph{dependence} amongst
these marginal distributions to obtain a stochastic approach for
constructing missingness indicator matrices $M$ that does not
require specifying specific missingness patterns.

\section{Copula-based multivariate amputation}\label{sec:main}

\subsection{Specifying distributions for random vectors via copulas}\label{sec:main:basics}
To create joint distributions for generating missingness indicator matrices $M$,
we utilise a decomposition of any such distribution into its margins (or
marginal distribution functions) and a \emph{copula}, i.e., a distribution
function with $\U(0,1)$ margins; see \cite{schweizersklar1983} or
\cite{nelsen2006}. %
This result is known as \emph{Sklar's theorem}, see \cite{Sklar1959} or
\cite{sklar1996}, and it consists of a decomposition (the first) and a
composition (the second) part. By the first part, any $d$-dimensional
distribution function $F$ allows for the decomposition
\begin{align}
  F(\bm{x})=C(F_1(x_1),\dots,F_d(x_d)),\quad\bm{x}=(x_1,\dots,x_d)\in\IR^d,\label{eq:sklar}
\end{align}
for some copula $C$ and the margins $F_1,\dots,F_d$ of $F$.  If
$\ran F_j=\{F_j(x_j):x_j\in\IR\}$ denotes the range of $F_j$, $C$ is uniquely
determined on $\prod_{j=1}^d\ran F_j$. We see from~\eqref{eq:sklar} that $C$ is
the function that combines the margins $F_1,\dots,F_d$ to the joint distribution
function $F$, so it contains the information about the dependence between the
margins. By the second part, combining any copula $C$ with margins
$F_1,\dots,F_d$ as in~\eqref{eq:sklar} leads to a proper multivariate
distribution function $F$. Starting from any copula $C$, we can thus combine it
with Bernoulli margins (in the appropriate stochastic way via compositions with
Bernoulli quantile functions) to obtain a joint distribution function $F$ with
such margins. %
This idea lies a the core of our approach to amputation, we will combine copulas
with Bernoulli margins to create a stochastic description of missingness patterns for $M$.

We now list important elementary copula constructions needed later; see
\cite[Sections~7.1.2, 7.4]{mcneilfreyembrechts2015} for several of these models.
\begin{example}[Examples of copulas]\label{ex:copulas}
  \begin{enumerate}
  \item
  The \emph{independence copula} $C^{\Pi}(\bm{u})=\prod_{j=1}^d u_j$ is the
  copula of independent continuous random variables. It has stochastic
  representation $(U_1,\dots,U_d)\sim C^{\Pi}$ for $U_1,\ldots,U_d\isim\U(0,1)$.

  \item
  The \emph{comonotone copula} $C^{\text{M}}(\bm{u})=\min_{j=1,\dots,d}\{u_j\}$
  is the copula of comonotone continuous random variables. It has stochastic
  representation $(U,\dots,U)\sim C^{\text{M}}$ for $U\sim\U(0,1)$.

  \item
  For $d=2$, the \emph{countermonotone copula}
  $C^{\text{W}}(u_1,u_2)=\max\{u_1+u_2-1,0\}$ is the copula of
  countermonotone continuous random variables. It has stochastic
  representation $(U,1-U)\sim C^{\text{W}}$ for $U\sim\U(0,1)$.

  \item
  The \emph{Gauss copula}
  $C^{\text{Ga}}(\bm{u})=\Phi_{P^{\text{Ga}}}(\Phi^{-1}(u_1),\dots,\Phi^{-1}(u_d))$
  is the copula of the $\N_d(\bm{0},P^{\text{Ga}})$ distribution
  function $\Phi_{P^{\text{Ga}}}$ for a correlation matrix
  $P^{\text{Ga}}$ and for $\Phi$ being the $\N(0,1)$ distribution
  function; the correlation matrix $P^{\text{Ga}}$ acts as parameter
  matrix of $C^{\text{Ga}}$. The Gauss copula has stochastic
  representation $(\Phi(W_1),\dots,\Phi(W_d))\sim C^{\text{Ga}}$ for
  $\bm{W}=(W_1,\dots,W_d)\sim \N_d(\bm{0},P^{\text{Ga}})$.

  \item
  The \emph{survival copula} of a copula $C$ is the copula
  $\check{C}(\bm{u})=\sum_{\bm{i}\in\{0,1\}^d}(-1)^{\sum_{j=1}^di_j}C((1-u_1)^{i_1},\dots,(1-u_d)^{i_d})$,
  with stochastic representation $(1-U_1,\dots,1-U_d)\sim\check{C}$ for
  $\bm{U}=(U_1,\dots,U_d)\sim C$.

  \item
  The \emph{convex combination}
  $C(\bm{u})=\lambda C_1(\bm{u})+(1-\lambda)C_2(\bm{u})$,
  $\bm{u}\in[0,1]^d$, $\lambda\in[0,1]$, of two $d$-dimensional
  copulas $C_1,C_2$ is again a copula. It has stochastic
  representation
  $\bm{U}=\I_{\{V\in[0,\lambda]\}}\bm{U}^{C_1}+\I_{\{V\in(\lambda,1]\}}\bm{U}^{C_2}\sim
  C$, with $V\sim\U(0,1)$ and $\bm{U}^{C_k}\sim C_k$, $k=1,2$. One can
  sample convex combinations by sampling $V\sim\U(0,1)$ and,
  independently of $V$, sample $\bm{U}^{C_1}$ if $V\le\lambda$ and sample
  $\bm{U}^{C_2}$ if $V>\lambda$.
  \end{enumerate}
\end{example}

\subsection{Specifying distributions for $M$ via copulas}\label{sec:joint:distr:M}
For constructing missingness indicator matrices $M$, we can utilise
the second part of Sklar's theorem as explained in the last section. The joint distribution function
is now $nd$-dimensional and given by
\begin{align}
  F(\bm{x})=C(F_{1,1}(x_{1,1}),\dots,F_{1,d}(x_{1,d}),F_{2,1}(x_{2,1}),\dots,F_{n,d}(x_{n,d})),\label{eq:FM:most:general}
\end{align}
for all $\bm{x}=(x_{1,1},\dots,x_{1,d},x_{2,1},\dots,x_{n,d})\in\IR^{nd}$, and
some $nd$-dimensional copula $C$, as well as Bernoulli margins
$F_{i,j}(x_{i,j})=(1-p_{i,j})\I_{[0,\infty)}(x_{i,j})+p_{i,j}\I_{[1,\infty)}(x_{i,j})$,
$x_{i,j}\in\IR$, for $p_{i,j}\in[0,1]$. The margin with index $(i,j)$ identifies
the corresponding entry in $M$, so $F$ is the joint distribution function of
$M$. A stochastic representation of $F$ is
\begin{align}
  (F_{1,1}^{-1}(U_{1,1}),\dots,F_{1,d}^{-1}(U_{1,d}),F_{2,1}^{-1}(U_{2,1}),\dots,F_{n,d}^{-1}(U_{n,d})),\label{eq:stoch:rep:most:general}
\end{align}
where $\bm{U}=(U_{1,1},\dots,U_{1,d},U_{2,1},\dots,U_{n,d})\sim C$ and
$F_{i,j}^{-1}(u_{i,j})=\I_{(1-p_{i,j},1]}(u_{i,j})$, $u_{i,j}\in(0,1]$, $i=1,\dots,n$,
$j=1,\dots,d$. Casting these entries into $M$ such that
$\bm{M}_{i,}=(F_{i,1}^{-1}(U_{i,1}),\dots,$ $F_{i,d}^{-1}(U_{i,d}))$,
$i=1,\dots,n$, leads to the stochastic representation
\begin{align}
  M=(\bm{M}_{1,},\dots,\bm{M}_{n,})\T=\begin{pmatrix}
    F_{1,1}^{-1}(U_{1,1}) & \dots & F_{1,d}^{-1}(U_{1,d})\\
    \vdots & & \vdots \\
    F_{n,1}^{-1}(U_{n,1}) & \dots & F_{n,d}^{-1}(U_{n,d})
  \end{pmatrix},\label{eq:stoch:rep:M}
\end{align}
which can be sampled easily to obtain realisations of $M$.
The following algorithm summarises our copula-based
amputation approach for generating $M$ termed \emph{Bernoulli amputation}.
\begin{algorithm}[Bernoulli amputation]\label{alg:Bern:amp}
  Let $K\in\IN$ denote the number of realised missingness matrices $M$ to be simulated.
  Fix the $nd$-dimensional copula $C$ and its parameters, e.g., a Gauss copula
  $C^{\text{Ga}}$ with correlation parameter matrix $P^{\text{Ga}}$. Fix the marginal missingness
  probabilities $P=(p_{i,j})_{i=1,\dots,n,\ j=1,\dots,d}$.
  \begin{enumerate}
  \item Generate $\bm{U}_k=(U_{k,1,1},\dots,U_{k,1,d},U_{k,2,1},\dots,U_{k,n,d})\sim C$, $k=1,\dots,K$.
  \item Compute and return $M_k$ with $(i,j)$th element
    $M_{k,i,j}=\I_{\{U_{k,i,j}>1-p_{i,j}\}}=\I_{\{1-U_{k,i,j}\le p_{i,j}\}}$, %
    $k=1,\dots,K$, $i=1,\dots,n$, $j=1,\dots,d$.
  \end{enumerate}
\end{algorithm}

We now consider examples that show how $C$ can introduce dependence
between the elements of $M$.
\begin{example}[Amputation examples]\label{ex:cop:based:amputation}
  \begin{enumerate}
  \item \emph{Independent homogeneous amputation}. Suppose
  $p_{i,j}=p\in[0,1]$, $i=1,\dots,n$, $j=1,\dots,d$. If $C=C^{\Pi}$,
  then $M$ has stochastic representation (interpreted as a matrix as in \eqref{eq:stoch:rep:M})
  $(\I_{(1-p,1]}(U_{1,1}),$ $\dots,\I_{(1-p,1]}(U_{1,d}),\I_{(1-p,1]}(U_{2,1}),\dots,\I_{(1-p,1]}(U_{n,d}))$
  for $U_{i,j}\isim \U(0,1)$, $i=1,\dots,n$, $j=1,\dots,d$. $M$ thus
  consists of independent entries, each satisfying $\P(M_{i,j}=0)=1-p$
  and $\P(M_{i,j}=1)=p$. In survey statistics, this corresponds to
  item-nonresponse if participants do not respond to single questions
  independently of other questions (same row) and other participants (other rows).
  \item \emph{Independent comonotone homogeneous amputation}. Suppose
  $p_{i,j}=p\in(0,1)$, $i=1,\dots,n$, $j=1,\dots,d$.  If
  $C(\bm{u})=C^{\Pi}(C^{\text{M}}(\bm{u}_1),\dots,C^{\text{M}}(\bm{u}_n))=\prod_{i=1}^n\min_{j=1,\dots,d}\{u_{i,j}\}$,
  then, independently for all rows $i$,
  $\P(\bm{M}_{i,}=(0,\dots,0))=1-p=1-\P(\bm{M}_{i,}=(1,\dots,1))$, so
  that $M$ has stochastic representation
  $(\I_{(1-p,1]}(U_1),\dots,\I_{(1-p,1]}(U_1),\I_{(1-p,1]}(U_2),\dots,$ $\I_{(1-p,1]}(U_n))$
  for $U_1,\dots,U_n\isim \U(0,1)$, where each $U_i$ appears $d$
  times. $M$ thus contains independent rows, each of which consists of
  $0$'s (with probability $1-p$) or $1$'s (with probability $p$). In
  survey statistics, this corresponds to full unit-response or full
  unit-nonresponse, independently of other participants. $C^{\Pi}$ can
  also be replaced by any other copula in this construction to model
  more realistic unit-response/non-response patterns.
  \item \emph{Comonotone set homogeneous amputation}. Let
  $S\subseteq\{1,\dots,n\}\times\{1,\dots,d\}$ be a set of indices of
  $M$ and $p_{i,j}=p\in[0,1]$ for all $(i,j)\in S$. If the marginal
  copula of $U_{i,j}$, $(i,j)\in S$ (the copula obtained by setting
  all components with indices $(i,j)\notin S$ to $1$) is the
  comonotone copula, then
  $\P(M_{i,j}=0\ (i,j)\in S)=1-p=1-\P(M_{i,j}=1\ (i,j)\in S)$, irrespective of those entries in $M$
  with indices $(i,j)\notin S$. This is a case of~\ref{SM1}, and for
  $p\in\{0,1\}$ it is an example of~\ref{SM2}; see Section~\ref{app:SM:def}. A fixed (non-mixed)
  monotone missingness pattern as in
  Definition~\ref{def:monotone:miss} also falls under the latter setup %
  and can serve as an example from survey statistics for such
  missingness patterns.
  \item \emph{Grouped comonotone set homogeneous amputation}. This is a
  generalisation of the last example to multiple sets $S$.  Let
  $S_1,\dots,S_K\subseteq\{1,\dots,n\}\times\{1,\dots,d\}$ be disjoint
  sets, with $U_{i,j}$, $(i,j)\in S_k$, being jointly distributed
  according to the comonotone
  copula. %
  For each $S_k$, let $p_k\in[0,1]$ denote the corresponding
  homogeneous marginal probability of missingness for all components
  with indices $(i,j)\in S_k$. Besides the behaviour within each set
  $S_k$ described in the last example, we can further specify the
  dependence between the $K$ groups by any $K$-dimensional copula
  $C_{S_1,\dots,S_K}$; as before, this would still need to be
  compatible with the copula $C$ on all $nd$ components to lead to a
  proper model ($C_{S_1,\dots,S_K}$ would need to be the marginal
  copula of $C$ for any $K$ components with indices belonging to
  distinct $S_1,\dots,S_K$); see
    Section~\ref{sec:vis:SM} for an illustration. An example from
    survey statistics arises for $K=2$ with
    $S_1=\{(i,j_1):i=1,\dots,n_1\}$ and $S_2=\{(i,j_2):i=n_1+1,\dots,n\}$
    for $j_1\neq j_2$ and $p_1=p_2=1$, where the first $n_1$ units are
    male and do not respond to question $j_1$ about pregnancy, and the
    remaining units are female and do not respond to question $j_2$ on
    prostate cancer, say.
  \end{enumerate}
\end{example}

The following proposition allows one to compute the probability of joint
missingness for any collection of components of $M$, under any
dependence $C$ and any marginal Bernoulli distributions; see Section~\ref{sec:proofs} for its proof.
To this end, note that the $C$-volume $\Delta_{(\bm{a},\bm{b}]}C$ is the probability $\P(\bm{U}\in(\bm{a},\bm{b}])$
for $\bm{U}\sim C$ and $\bm{0}\le\bm{a}\le\bm{b}\le\bm{1}$. %
\begin{proposition}[Joint missingness probabilities]\label{prop:joint:missingness:prob}
  Let $S\subseteq\{1,\dots,n\}\times\{1,\dots,d\}$ be any set of pairs of
  indices and $S^c$ its complement in $\{1,\dots,n\}\times\{1,\dots,d\}$.
  \begin{enumerate}
  \item For $m_S=(\I_{(i,j)\in S})\in\{0,1\}^{n\times d}$ and $\bm{p}_S=(p_{i,j})_{(i,j)\in S}$, one has
    \begin{align*}
      \P(M=m_S)=\Delta_{\left(\tb{\bm{1}-\bm{p}_S,\,\bm{1}}{\bm{0},\,\bm{1}-\bm{p}_{S^c}}\right]}C=\Delta_{\left(\tb{\bm{0},\,\bm{p}_{S}}{\bm{p}_{S^c},\,\bm{1}}\right]}\check{C},
    \end{align*}
    where the interval corresponding to the component $(i,j)$ of $C$ in the first volume
    is understood as $(1-p_{i,j},1]$ if $(i,j)\in S$ and as $(0,1-p_{i,j}]$ if
    $(i,j)\in S^c$, and similarly for the second volume. In particular,
    the expected proportion of missingness is
    $\frac{1}{nd}\sum_{S\subseteq\{1,\dots,n\}\times\{1,\dots,d\}}|S|\,\P(M=m_S)$. %
  \item Focusing only on those
    components with indices in $S$, we have
    \begin{align*}
      \P(M_{i,j}=1\ (i,j)\in S)=\check{C}_S(\bm{p}_S),
    \end{align*}
    where $\check{C}_S$ is the
    survival copula of the marginal copula $C_S$ of $C$ corresponding to all
    indices in $S$. In particular, for two entries $S=\{(i_1,j_1),(i_2,j_2)\}$, one has
    $\P(M_{i_1,j_1}=1,M_{i_2,j_2}=1)=-1+p_{i_1,j_1}+p_{i_2,j_2}+C_{(i_1,j_1),(i_2,j_2)}(1-p_{i_1,j_1},1-p_{i_2,j_2})$,
    where $C_{(i_1,j_1),(i_2,j_2)}$ denotes the marginal copula of $C$
    corresponding to $(M_{i_1,j_1},M_{i_2,j_2})$.
  \end{enumerate}
\end{proposition}
If $C_S$ is \emph{radially symmetric} (i.e., $\check{C}_S=C_S$), then
$\P(M_{i,j}=1\ (i,j)\in S)=C_S(\bm{p}_S)$. This applies to the independence
copula, the comonotone copula, the countermonotone copula, the Gauss copula, and
convex combinations of radially symmetric copulas.

Besides missingness probabilities, the amputer may also be interested in the correlation
between two entries of the random missingness matrix $M$, addressed in the following
proposition; see Section~\ref{sec:proofs} for its
proof. %
\begin{proposition}[Pairwise correlations]\label{pro:pairwise:cor}
  For any $(i_1,j_1),(i_2,j_2)\in \{1,\dots,n\}\times\{1,\dots,d\}$, we have
  \begin{align*}
    \rho_{(i_1,j_1),(i_2,j_2)}=\cor(M_{i_1,j_1}, M_{i_2,j_2})=\frac{\check{C}_{(i_1,j_1),(i_2,j_2)}(p_{i_1,j_1},p_{i_2,j_2})-p_{i_1,j_1}p_{i_2,j_2}}{\sqrt{p_{i_1,j_1}(1-p_{i_1,j_1})p_{i_2,j_2}(1-p_{i_2,j_2})}}.
  \end{align*}
  In particular, $\rho_{(i_1,j_1),(i_2,j_2)}\in[\rho^{\text{min}}_{(i_1,j_1),(i_2,j_2)},\ \rho^{\text{max}}_{(i_1,j_1),(i_2,j_2)}]$ for
  \begin{align*}
    \rho^{\text{min}}_{(i_1,j_1),(i_2,j_2)}&=\frac{C^{\text{W}}(p_{i_1,j_1},p_{i_2,j_2})-p_{i_1,j_1}p_{i_2,j_2}}{\sqrt{p_{i_1,j_1}(1-p_{i_1,j_1})p_{i_2,j_2}(1-p_{i_2,j_2})}},\\
    \rho^{\text{max}}_{(i_1,j_1),(i_2,j_2)}&=\frac{C^{\text{M}}(p_{i_1,j_1},p_{i_2,j_2})-p_{i_1,j_1}p_{i_2,j_2}}{\sqrt{p_{i_1,j_1}(1-p_{i_1,j_1})p_{i_2,j_2}(1-p_{i_2,j_2})}}.
  \end{align*}
\end{proposition}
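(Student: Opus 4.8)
The plan is to reduce the statement to the already-established joint-missingness formula of Proposition~\ref{prop:joint:missingness:prob}, combined with the two defining moments of a Bernoulli variable. Since each $M_{i,j}\sim\B(1,p_{i,j})$, I would first record $\E[M_{i,j}]=p_{i,j}$ and $\var(M_{i,j})=p_{i,j}(1-p_{i,j})$, which immediately produces the denominator $\sqrt{p_{i_1,j_1}(1-p_{i_1,j_1})p_{i_2,j_2}(1-p_{i_2,j_2})}$ of the correlation via the definition $\cor(M_{i_1,j_1},M_{i_2,j_2})=\cov(M_{i_1,j_1},M_{i_2,j_2})/\sqrt{\var(M_{i_1,j_1})\var(M_{i_2,j_2})}$.

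For the numerator, I would write $\cov(M_{i_1,j_1},M_{i_2,j_2})=\E[M_{i_1,j_1}M_{i_2,j_2}]-p_{i_1,j_1}p_{i_2,j_2}$ and use that a product of $\{0,1\}$-valued indicators equals $1$ exactly when both factors equal $1$, so that $\E[M_{i_1,j_1}M_{i_2,j_2}]=\P(M_{i_1,j_1}=1,M_{i_2,j_2}=1)$. Applying Proposition~\ref{prop:joint:missingness:prob} with $S=\{(i_1,j_1),(i_2,j_2)\}$ identifies this joint probability as $\bar{C}_{\{(i_1,j_1),(i_2,j_2)\}}(p_{i_1,j_1},p_{i_2,j_2})$, which yields the stated correlation formula after dividing by the denominator above.

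The two-sided bound then follows from the Fréchet--Hoeffding inequalities. The key observation is that $\bar{C}_S$ is itself a bivariate copula (the survival copula of a copula is a copula; see Example~\ref{ex:copulas}), so that every copula $D$ satisfies $C^{\text{W}}(u,v)\le D(u,v)\le C^{\text{M}}(u,v)$ pointwise. Applying this with $D=\bar{C}_S$ at the argument $(p_{i_1,j_1},p_{i_2,j_2})$ bounds the numerator $\bar{C}_S(p_{i_1,j_1},p_{i_2,j_2})-p_{i_1,j_1}p_{i_2,j_2}$ between the corresponding expressions formed with $C^{\text{W}}$ and $C^{\text{M}}$. Since the denominator is strictly positive, dividing through preserves both inequalities and gives $\rho_{(i_1,j_1),(i_2,j_2)}\in[\rho^{\text{min}}_{(i_1,j_1),(i_2,j_2)},\rho^{\text{max}}_{(i_1,j_1),(i_2,j_2)}]$.

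I expect no substantial obstacle here, as the content is essentially a moment computation chained to the preceding proposition. The only point that requires care is the degenerate case $p_{i_k,j_k}\in\{0,1\}$, where $\var(M_{i_k,j_k})=0$ and the correlation (together with its bounds) is undefined; I would therefore restrict to $p_{i_1,j_1},p_{i_2,j_2}\in(0,1)$ so that the denominator does not vanish.
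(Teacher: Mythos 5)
Your proof is correct and follows essentially the same route as the paper's: both reduce $\cor(M_{i_1,j_1},M_{i_2,j_2})$ to the Bernoulli moments plus $\E(M_{i_1,j_1}M_{i_2,j_2})=\bar{C}_{\{(i_1,j_1),(i_2,j_2)\}}(p_{i_1,j_1},p_{i_2,j_2})$ and then apply the Fr\'echet--Hoeffding bounds; the only (cosmetic) difference is that you cite Proposition~\ref{prop:joint:missingness:prob} for the joint probability where the paper re-derives it from the stochastic representation. Your added caveat restricting to $p_{i_1,j_1},p_{i_2,j_2}\in(0,1)$ so the denominator is nonzero is a sensible precaution the paper leaves implicit.
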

Concerning the correlation bounds given in Proposition~\ref{pro:pairwise:cor},
the correlation coefficient is well known to reach its extreme values $\{-1,1\}$ if and only if
the two underlying random variables are linearly dependent; see
\cite{embrechtsmcneilstraumann2002}. If they are not linearly dependent, the
range of possible pairwise missingness correlations is the compact interval
$[\rho^{\text{min}}_{(i_1,j_1),(i_2,j_2)},\
\rho^{\text{max}}_{(i_1,j_1),(i_2,j_2)}]$ of
Proposition~\ref{pro:pairwise:cor}; %
this limited range is a known deficiency of the correlation coefficient as a
measure of (only) linear dependence.

\subsection{Assumptions on the dependence and margins}\label{sec:AssumptionsOnDependenceAndMargins}
In Section~\ref{sec:init:ass} we already introduced Assumptions~\ref{A1} and
\ref{A2} and addressed what necessary conditions they imply on the marginal
Bernoulli probabilities. %
We now proceed similarly but with focus on the dependence structure.

Under~\ref{A1}, the distribution function $F$ of $M$ is
$F(\bm{x})=\prod_{i=1}^n
C_i(F_{i,1}(x_{i,1}),\dots,F_{i,d}(x_{i,d}))$, with stochastic
representation as in~\eqref{eq:stoch:rep:most:general} for independent
$\bm{U}_{i,}=(U_{i,1},\dots,U_{i,d})\sim C_i$, $i=1,\dots,n$, where
$C_i$ denotes a copula of $\bm{M}_{i,}$, i.e., a copula of the $i$th
row of $M$; the meaning of ``a copula'' in comparison to ``the
copula'' will become clear in Section~\ref{sec:non:unique}.
Proposition~\ref{prop:joint:missingness:prob} implies that
$\P(M_{i,j}=1\ (i,j)\in S)$ splits into a product of probabilities of
the form as $\check{C}_{S}(\bm{p}_S)$ across different rows. In
particular, for all $i_1\neq i_2$, we have
$\P(M_{i_1,j_1}=1, M_{i_2,j_2}=1)=p_{i_1,j_1}p_{i_2,j_2}$ for all
$j_1,j_2\in\{1,\dots,d\}$, and for all $j_1\neq j_2$, we have
$\P(M_{i,j_1}=1, M_{i,j_2}=1)=\check{C}_i(p_{i,j_1},p_{i,j_2})$ for all
$i=1,\dots,n$.

Under~\ref{A2}, we can model $F$ as
$F(\bm{x})=\prod_{i=1}^n C(F_1(x_{i,1}),\dots,F_d(x_{i,d}))$, where
$F_{1,j}=\ldots=F_{n,j}=F_j$ denotes the (equal) $j$th marginal distribution
function and $C$ is a $d$-dimensional copula for each of $M$'s rows. As such, a
stochastic representation for $M$ consists of iid rows, each of which has
stochastic representation $(F_1^{-1}(U_1),\dots,F_d^{-1}(U_d))$ for
$(U_1,\dots,U_d)\sim C$. Similarly as under~\ref{A1}, under~\ref{A2}
Proposition~\ref{prop:joint:missingness:prob} implies that
$\P(M_{i,j}=1\ (i,j)\in S)$ splits into a product of probabilities. In
particular, for all $i_1\neq i_2$,
$\P(M_{i_1,j_1}=1, M_{i_2,j_2}=1)=p_{j_1}p_{j_2}$ for all
$j_1,j_2\in\{1,\dots,d\}$, and for all $j_1\neq j_2$,
$\P(M_{i,j_1}=1, M_{i,j_2}=1)=\check{C}(p_{j_1},p_{j_2})$ for all $i=1,\dots,n$.

In what follows, we mostly work under the following assumption. %
\begin{enumerate}[label=(A3), labelwidth=\widthof{(A3)}]
\item\label{A3} The rows $\bm{M}_{1,},\dots,\bm{M}_{n,}$ of $M$ are
  independent given $Y$, with equal copulas $C_1=\ldots=C_n=C$.
\end{enumerate}
This assumption
lies between
\ref{A1} and \ref{A2} in the sense that
\ref{A2} implies \ref{A3} and \ref{A3} implies \ref{A1}. Under~\ref{A3},
$F(\bm{x})=\prod_{i=1}^n C(F_{i,1}(x_{i,1}),\ldots,F_{i,d}(x_{i,d}))$, and $M$
has stochastic representation $M=(F_{i,j}^{-1}(U_{i,j}))_{i,j}$ for
$\bm{U}_{i,}=(U_{i,1},\dots,U_{i,d})\isim C$, $i=1,\dots,n$.  In contrast
to~\ref{A2}, assumption \ref{A3} can incorporate \ref{SM2} in terms of block SM
beyond the limitation of full columns being missing almost surely. Another
advantage of~\ref{A3} will become clear in
Section~\ref{sec:non:unique}. Similarly as under~\ref{A1} and~\ref{A2},
under~\ref{A3} Proposition~\ref{prop:joint:missingness:prob} implies that
$\P(M_{i,j}=1\ (i,j)\in S)$ splits into a product of probabilities of the form
as $\check{C}_{S}(\bm{p}_S)$ across different rows, now all based on the same
(row) copula $C$. In particular, for all $i_1\neq i_2$, we have
$\P(M_{i_1,j_1}=1, M_{i_2,j_2}=1)=p_{i_1,j_1}p_{i_2,j_2}$ for all
$j_1,j_2\in\{1,\dots,d\}$, and for all $j_1\neq j_2$, we have
$\P(M_{i,j_1}=1, M_{i,j_2}=1)=\check{C}(p_{i,j_1},p_{i,j_2})$ for all
$i=1,\dots,n$.

For simplicity, when covariates enter multivariate distributions
(e.g., in multivariate time series or regression models), a common
assumption is that the covariates only affect the marginal
distributions, not the underlying copula. In
Sections~\ref{sec:MAR:MNAR} and \ref{sec:mtcars}, we also make this
simplifying assumption, which is:
\begin{enumerate}[label=(S), labelwidth=\widthof{(S)}]
\item\label{S} $C$ does not depend on $Y$.
\end{enumerate}
In other words, for the construction of $M$ under~\ref{S}, $Y$ only
enters $P$, not $C$.

Note that Assumptions~\ref{A1}, \ref{A2}, \ref{A3}, \ref{S} are not necessary to
be made in our amputation approach based on~\eqref{eq:stoch:rep:M}, we merely introduce them to help us
understand the types of missingness patterns we can create, how they arise, and
since amputers would typically introduce them out of convenience to avoid having to specify a large
number of missingness patterns.

\subsection{Uniqueness and influence of margins and
  dependence}\label{sec:non:unique}
As mentioned in Section~\ref{sec:prob:setup}, the margins $F_{i,j}$ of $M$ are
always Bernoulli distributed. In particular, $\ran
F_{i,j}=\{0,1-p_{i,j},1\}$. Having discrete margins poses no problem for applying the
second part of Sklar's theorem, so for constructing joint distributions.
However, according to the first part of Sklar's theorem, the
fact that the margins are discrete implies non-uniqueness of the copula
$C$. This plays a role when we ask which copulas
$C$ actually lead to different $F$ and, thus, can generate stochastically
different $M$.
This section sheds light on the non-uniqueness of $C$, which is useful
information for the amputer in terms of choosing $C$.

A rather extreme case of the influence of discrete margins are degenerate
margins. If all marginal missingness probabilities are $0$ ($1$), then
$M=(0)_{i,j}$ ($M=(1)_{i,j}$) almost surely, irrespective of the underlying
copula. This is a good example for a situation in which $C$ has no influence on
$M$ at all. It is in line with the first part of Sklar's theorem, by which the
copula is only uniquely determined on the product of the ranges of the margins,
i.e., $\{0,1\}^{nd}$. As all copulas share the same values on the boundary of
$[0,1]^{nd}$ (by the so-called groundedness property and since they all have
$\U(0,1)$), there is no restriction on the underlying
copula, i.e., all copulas provide valid models for this setup and lead to the same
(degenerate) distribution of $M$.

Under~\ref{A2} with margin $F_j$ being $\B(1,p_j)$, $p_j\in(0,1)$, Sklar's
theorem implies that $C$ is only uniquely defined on $\prod_{j=1}^d\ran F_j$,
where $\ran F_j=\{0,1-p_j,1\}$. As all copulas share the same values on the
boundary of $[0,1]^d$, the only point where $C$ is uniquely defined
under~\ref{A2} is $\bar{\bm{p}}=(1-p_1,\dots,1-p_d)$. In other words, two
different copulas influence $F$ (and therefore $M$) only if they differ in
$\bar{\bm{p}}$. Thus, we can generate the set of all possible $F$ (and therefore
$M$) by choosing any copula family which reaches all $\bar{\bm{p}}$. By the
Fr\'echet--Hoeffding bounds theorem, see \cite{frechet1951},
$C^{\Pi}(\bar{\bm{p}})\le C(\bar{\bm{p}})\le C^{\text{M}}(\bar{\bm{p}})$ for all
positive lower orthant dependent copulas $C$ (i.e.,
$C(\bm{u})\ge C^{\Pi}(\bm{u})$ for all $\bm{u}\in[0,1]^d$). As such, all
possible joint distributions $F$ in~\eqref{eq:sklar} with $\B(1,p_j)$ margins
and positive lower orthant dependent copula $C$ can be obtained by considering
the convex combination
$C(\bm{u})=\lambda C^{\text{M}}(\bm{u})+(1-\lambda) C^{\Pi}(\bm{u})$ of
Example~\ref{ex:copulas}. Equally well, one can consider a \emph{homogeneous
  Gauss copula} $C^{\text{Ga}}_\rho$, i.e., a Gauss copula $C^{\text{Ga}}$ with
homogeneous parameter matrix $P^{\text{Ga}}_\rho=\rho J+(1-\rho)I$, where
$J=(1)\in\IR^{d\times d}$, $I$ is the identity matrix in $\IR^{d\times d}$, and
$\rho\in[0,1]$ (with $\rho=0$ leading to $C^{\text{Ga}}=C^{\Pi}$, and $\rho=1$
leading to $C^{\text{Ga}}=C^{\text{M}}$).

An advantage of~\ref{A3} over~\ref{A2} is that the $d$-dimensional
copula $C$ of each row of $M$ is uniquely defined on
$\prod_{j=1}^d\{0,1-p_{1,j},\dots,1-p_{n,j},1\}$ %
and not just in the single point $\bar{p}=(1-p_1,\dots,1-p_d)$. Under~\ref{A3}, the more unique missingness
probabilities $p_{1,j},\dots,p_{n,j}$ the $j$th column of $P$
contains, the more points there are at which $C$ is uniquely
determined, and so the stronger the influence of the choice of $C$ on
the distribution $F$ of $M$.

The following algorithm adapts Algorithm~\ref{alg:Bern:amp} to~\ref{A3}.
\begin{algorithm}[Bernoulli amputation under~\ref{A3}]\label{alg:Bern:amp:A3}
  Let $K\in\IN$ denote the number of realised missingness matrices $M$ to be simulated.
  Fix the $d$-dimensional copula $C$ and its parameters, e.g., a Gauss copula
  $C^{\text{Ga}}$ with correlation parameter matrix $P^{\text{Ga}}$. Fix the marginal missingness
  probabilities $P=(p_{i,j})_{i=1,\dots,n,\ j=1,\dots,d}$.
  \begin{enumerate}
  \item Generate independent $\bm{U}_{k,i}=(U_{k,i,1},\dots,U_{k,i,d})\sim C$, $k=1,\dots,K$, $i=1,\dots,n$.
  \item Compute and return $M_k$ with $(i,j)$th element
    $M_{k,i,j}=\I_{\{U_{k,i,j}>1-p_{i,j}\}}=\I_{\{1-U_{k,i,j}\le p_{i,j}\}}$, $k=1,\dots,K$, $i=1,\dots,n$, $j=1,\dots,d$.
  \end{enumerate}
\end{algorithm}

\subsection{Generating MCAR, MAR, and MNAR missingness indicator matrices}\label{sec:MAR:MNAR}
We are now in the position to model how $Y$ can enter $M$.
By letting $p_{i,j}$ depend on values in $Y$, we can generate missingness
indicator matrices $M$ that explicitly exhibit MCAR, MAR or MNAR missingness.
To keep things simple, we assume $Y$ does
not enter $C$, so we work under~\ref{S} and that each $p_{i,j}$ only depends on
$\bm{Y}_{i,}=(Y_{i,1},\dots,Y_{i,d})$, i.e., on variables in the same row of
$Y$, which is in line with~\ref{A1}, \ref{A2}, and \ref{A3}.
We then model $p_{i,j}$ via
\begin{align}
  p_{i,j}=h_{i,j}(\bm{Y}_{i,})\label{eq:p:via:h}
\end{align}
for functions $h_{i,j}:\IR\to[0,1]$, $i=1,\dots,n$, $j=1,\dots,d$.
Equation~\ref{eq:p:via:h} only describes a conceptual relationship between
$\bm{Y}_{i,}$ and $p_{i,j}$, not for all $i,j$ does $p_{i,j}$ have to depend on
all variables $Y_{i,1},\dots,Y_{i,d}$ of $\bm{Y}_{i,}$.  In other words,
$h_{i,j}$ could be constant with respect to some (or even all) of
$Y_{i,1},\dots,Y_{i,d}$.  As such, let $I_{i,j}\subseteq\{1,\dots,d\}$ denote
the set of indices of those $Y_{i,1},\dots,Y_{i,d}$ on which $p_{i,j}$, and thus
$M_{i,j}$, depends. If $I_{i,j}=\emptyset$, $i=1,\dots,n$, $j=1,\dots,d$, then
$M$ adheres to MCAR.  If
$\emptyset\neq I_{i,j}\subseteq\{1,\dots,j-1,j+1,\dots,d\}$, $i=1,\dots,n$,
$j=1,\dots,d$, then $M$ adheres to MAR. And if
$\{j\}\subseteq I_{i,j}\subseteq\{1,\dots,d\}$, $i=1,\dots,n$, $j=1,\dots,d$,
then $M$ adheres to MNAR. Note that the missingness mechanism MNAR differs from
MAR only in whether $j\in I_{i,j}$.

To better distinguish various setups, we refer to $|I_{i,j}|$ as the
\emph{degree}. Under MNAR, a particularly challenging setup for imputation is
degree $1$ MNAR with $I_{i,j}=\{j\}$, which we refer to as \emph{self-MNAR} (in
this case, $Y_{i,j}$ and only $Y_{i,j}$ influences its own demise; if
$M_{i,j}=1$, this is hard to detect). If MNAR is of degree at least $2$, we
speak of \emph{group-MNAR}. Example relationships for the marginal probability of missingness $p_{i,j}$
are {\small%
  \begin{align*}%
    p_{i,j}&\ \ \ \ \tmbc{MCAR}{=}{degree $0$}\ \ \ \ c_{i,j}\in[0,1],\\
    p_{i,j}&\ \ \ \ \tmbc{MAR}{=}{degree $d\!-\!1$}\ \ \ \ h_{i,j}(Y_{i,1},\dots,Y_{i,j-1}, Y_{i,j+1},\dots, Y_{i,d})\in[0,1],\\
    p_{i,j}&\ \ \ \ \tmbc{MNAR}{=}{degree $d$}\ \ \ \ h_{i,j}(Y_{i,1},\dots,Y_{i,d})\in[0,1],\\
                   &\ \ \ \ \ \,\vdots\\
    p_{i,j}&\ \ \ \ \tmbc{MNAR}{=}{degree $1$}\ \ \ \ h_{i,j}(Y_{i,j})\in[0,1],
  \end{align*}}%
where the two MNAR cases are group-MNAR (degree $d$) and self-MNAR (degree $1$), respectively.
Finally, the amputer uses $P=(p_{i,j})$ as marginal probabilities of missingness
to enter Bernoulli amputation as in Algorithm~\ref{alg:Bern:amp} or,
under~\ref{A3}, Algorithm~\ref{alg:Bern:amp:A3}, and determine $X$ via $X=\ast\,M + Y\odot(1-M)$
as already described previously.

A non-trivial decision that the amputer has to make is choosing appropriate
functions $h_{i,j}$, $i=1,\dots,n$, $j=1,\dots,d$. A simple MNAR example of
degree $d$ for $Y_{i,j}>0$ is
$h_{i,j}(\bm{Y}_{i,})=c_{i,j}\I_{\{Y_{i,j}=Y_{i,(d)}\}}$ for $c_{i,j}\in[0,1]$,
where $Y_{i,(1)}\le\dots\le Y_{i,(d)}$ denote the order statistics of the
components of $\bm{Y}_{i,}$, which leads to the missingness probability
$p_{i,j}=c_{i,j}$ for the index $j$ belonging to the largest $Y_{i,j}$ among
$Y_{i,1},\dots,Y_{i,d}$, and $0$ (not missing) otherwise. Another idea is that
$h_{i,j}$ could be or depend on $Y_{i,j}/Y_{i,(d)}$.  Whether such a
relationship is suitable depends on the data and amputation problem in mind.

Another choice, considered by \cite{Schouten2018} and \cite{schoutenvink2021}, are
transformed linear functions of the form
  \begin{align}
     p_{i,j}=h_{i,j}(\bm{Y}_{i,})=\logiti\biggl(\beta_{i,j;0}+\sum_{k\in I_{i,j}}\beta_{i,j;k} Y_{i,k}\biggr)=\frac{1}{1+e^{-(\beta_{i,j;0}+\sum_{k\in I_{i,j}}\beta_{i,j;k} Y_{i,k})}},\label{eq:regr}
  \end{align}
  where $\logiti(x)=1/(1+\exp(-x))$, $x\in\IR$, is the standard logistic
  distribution function, so $h_{i,j}$ depends on
  $\bm{\beta}_{i,j}=(\beta_{i,j;0},(\beta_{i,j;k})_{k\in I_{i,j}})$,
  $i=1,\dots,n$,
  $j=1,\dots,d$. %
  The intercept coefficient
  $\beta_{i,j;0}$ allows to adjust the missingness probability $p_{i,j}$
  overall, with larger $\beta_{i,j;0}$ leading to a larger $p_{i,j}$.  Similar
  for the other coefficients, if $Y_{i,j}$ is positive, larger $\beta_{i,j;k}>0$
  lead to larger $p_{i,j}$.  Increasing $|\beta_{i,j;k}|$ increases the
  influence of $Y_{i,k}$ on $p_{i,j}$.

  The following lemma can be helpful for choosing the coefficients in terms of
  the range of the resulting marginal missingness probabilities; see
  Section~\ref{sec:mtcars} for an application of the following result,
  Section~\ref{sec:uni:amp} for possible assumptions to further simplify the
  choice of $\bm{\beta}_{i,j}$ and Section~\ref{sec:proofs} for the proof of the
  following result.
\begin{lemma}[Probability-implied coefficients]\label{lem:ran:beta}
  For a row $i\in\{1,\dots,n\}$, column $j\in\{1,\dots,d\}$, and constants
  $c_{\text{min}}<c_{\text{max}}$, suppose
  $Y_{i,k}\in[c_{\text{min}},c_{\text{max}}]$ almost surely for all
  $k\in I_{i,j}$ for some $I_{i,j}\subseteq\{1,\dots,d\}$. Let $p\in(0,1)$ be a
  target probability of missingness for $p_{i,j}$ and $\eps>0$ such that
  $[p-\eps,p+\eps]\subseteq(0,1)$. If $\beta_{i,j;k}>0$ for all $k\in I_{i,j}$
  and
  \begin{align*}
    \sum_{k\in I_{i,j}}\beta_{i,j;k}=\frac{\logit(p+\eps)-\logit(p-\eps)}{c_{\text{max}}-c_{\text{min}}},\quad
    \beta_{i,j;0}=\logit(p-\eps)-c_{\text{min}}\sum_{k\in I_{i,j}}\beta_{i,j;k},
  \end{align*}
  then $p_{i,j}\in[p-\eps,p+\eps]$. In particular, if
  $\beta_{i,j;k}$ are equal for all $k\in I_{i,j}$, then each
  $\beta_{i,j;k}$ can be chosen as
  $(\logit(p+\eps)-\logit(p-\eps))/(|I_{i,j}|(c_{\text{max}}-c_{\text{min}}))$ to guarantee that
  $p_{i,j}\in[p-\eps,p+\eps]$.
\end{lemma}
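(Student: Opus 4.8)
The plan is to work directly from the logistic regression equation~\eqref{eq:regr}, namely $\logit(p_{i,j})=\beta_{i,j,0}+\sum_{k\in I_{i,j}}\beta_{i,j,k}Y_{i,k}$, and to exploit nothing more than the monotonicity of $\logit$ together with the sign conditions $\beta_{i,j,k}>0$ and the box constraint $Y_{i,k}\in[c_{\text{min}},c_{\text{max}}]$. First I would abbreviate $B=\sum_{k\in I_{i,j}}\beta_{i,j,k}$, which is strictly positive since each summand is. Because $\beta_{i,j,k}>0$ and $c_{\text{min}}\le Y_{i,k}\le c_{\text{max}}$ almost surely, every term satisfies $\beta_{i,j,k}Y_{i,k}\in[\beta_{i,j,k}c_{\text{min}},\beta_{i,j,k}c_{\text{max}}]$; summing over $k\in I_{i,j}$ and adding the intercept gives the almost-sure two-sided bound
\[
\beta_{i,j,0}+c_{\text{min}}B\le\logit(p_{i,j})\le\beta_{i,j,0}+c_{\text{max}}B.
\]

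Next I would substitute the prescribed coefficients. The choice $\beta_{i,j,0}=\logit(p-\eps)-c_{\text{min}}B$ collapses the lower endpoint to exactly $\logit(p-\eps)$, while the upper endpoint becomes $\logit(p-\eps)+(c_{\text{max}}-c_{\text{min}})B$. Inserting $B=(\logit(p+\eps)-\logit(p-\eps))/(c_{\text{max}}-c_{\text{min}})$ makes $(c_{\text{max}}-c_{\text{min}})B$ telescope, so the upper endpoint reduces to $\logit(p+\eps)$. Hence $\logit(p-\eps)\le\logit(p_{i,j})\le\logit(p+\eps)$.

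Finally, since $\logiti(x)=1/(1+e^{-x})$ is strictly increasing (equivalently, $\logit$ is), applying $\logiti$ to all three parts preserves the ordering and yields $p-\eps\le p_{i,j}\le p+\eps$, i.e.\ $p_{i,j}\in[p-\eps,p+\eps]$, as claimed. The equal-coefficient special case follows immediately by setting $\beta_{i,j,k}=\beta$ for all $k\in I_{i,j}$, so that $B=|I_{i,j}|\beta$, and solving $B=(\logit(p+\eps)-\logit(p-\eps))/(c_{\text{max}}-c_{\text{min}})$ for $\beta$.

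Honestly, there is no genuine obstacle here: the statement is an exact monotone-interval computation rather than an estimate. The only points needing care are (i) keeping the sign conditions $\beta_{i,j,k}>0$ in view, since they are what make the box bounds on $Y_{i,k}$ propagate in the correct direction through the sum, and (ii) noting that the hypothesis $[p-\eps,p+\eps]\subseteq(0,1)$ ensures $\logit(p\pm\eps)$ is finite, so that $B$ is a well-defined positive number and the construction of $\beta_{i,j,0}$ is meaningful.
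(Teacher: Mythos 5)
Your proof is correct and follows essentially the same route as the paper's: bound $\logit(p_{i,j})$ almost surely in $[\beta_{i,j,0}+c_{\text{min}}\sum_{k\in I_{i,j}}\beta_{i,j,k},\ \beta_{i,j,0}+c_{\text{max}}\sum_{k\in I_{i,j}}\beta_{i,j,k}]$ using positivity of the coefficients, substitute the prescribed values so the endpoints become $\logit(p\mp\eps)$, and apply the monotone inverse. The equal-coefficient special case is handled identically.
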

Lemma~\ref{lem:ran:beta} can help avoiding pitfalls in the sense
of ending up in extreme cases addressed in
Section~\ref{sec:non:unique}. For example, an amputer may unintentionally choose
a too extreme coefficient, leading to $p_{i,j}\approx 0$ or
$p_{i,j}\approx 1$ and, thus, to (non-)missing blocks (SM), even
though this may not be desired for the amputation task at hand.
The amputer should also keep Section~\ref{sec:non:unique} in mind on the impact
of the choice of marginal missingness probabilities $p_{i,j}$ on the uniqueness
of $C$ and its influence on $M$. As depicted in Figure~\ref{fig:concepts}, a
probabilistic model for $M$ is always the result of a suitable choice of both $C$ and
$P$.

\section{Empirical illustration}\label{sec:mtcars}

\subsection{Dataset \texttt{mtcars01}}
For illustrations of Bernoulli amputation, we consider the publicly
available dataset \texttt{mtcars} from the base \R\ package
\texttt{datasets}. The dataset contains $n=32$ observations (automobiles; the
rows) of $d=11$ variables (the columns) describing aspects of automobile design
and performance. To illustrate and better understand Bernoulli amputation, we
range-transform the $d$ columns (with the function
$h(x_{i,j})=\frac{x_{i,j}-x_{(1),j}}{x_{(n),j}-x_{(1),j}}$, $i=1,\dots,n$,
$j=1,\dots,d$, where $x_{(1),j}\le \dots \le x_{(n),j}$ denote the order
statistics of $x_{1,j},\dots,x_{n,j}$) and sort all rows according to increasing
miles per gallon (\texttt{mpg}; the first column). We refer to this new dataset
with columns in $[0,1]$ as \texttt{mtcars01}. Figure~\ref{fig:mtcars01} shows
this colour-coded dataset, with darker colours corresponding to larger values in
$[0,1]$, next to brief explanations of all variables.
\begin{figure}[htbp]
  \begin{minipage}[t]{0.4\textwidth}
    \vspace{0mm}
    \includegraphics[width=\textwidth]{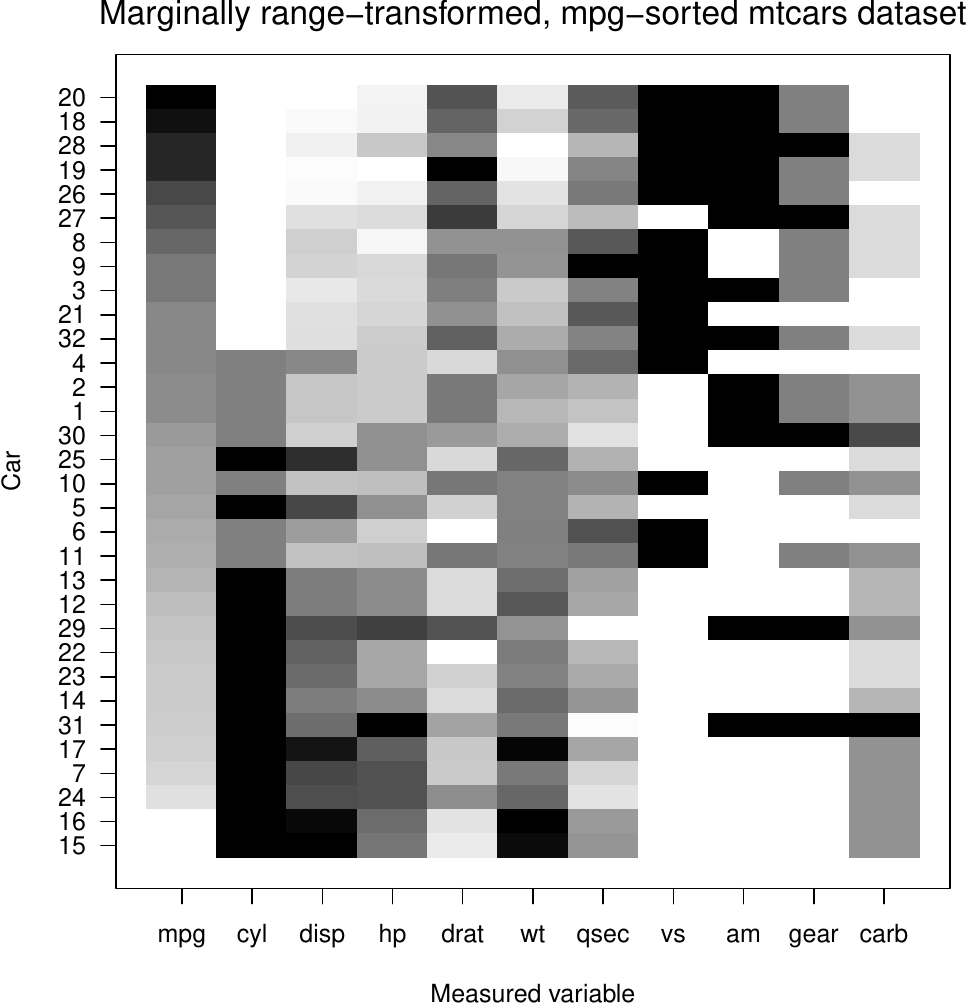}
  \end{minipage}
  \hfill
  \begin{minipage}[t]{0.56\textwidth}
    \vspace{5mm}
    \footnotesize%
    \begin{tabular}{ll}
      \toprule
    Variable & Meaning (range or values)\\
    \midrule
    \texttt{mpg} & Miles per gallon (\texttt{10.4}--\texttt{33.9})\\
    \texttt{cyl} & Number of cylinders (\texttt{4}, \texttt{6}, \texttt{8})\\
    \texttt{disp} & Displacement in $\text{in}^3$ (\texttt{71.1}--\texttt{472.0})\\
    \texttt{hp} & Gross horse power (\texttt{52}--\texttt{335})\\
    \texttt{drat} & Rear axle ratio (\texttt{2.76}--\texttt{4.93})\\
    \texttt{wt} & Weight in lb/1000 (\texttt{1.513}--\texttt{5.424})\\
    \texttt{qsec} & 1/4 mile time in seconds (\texttt{14.5}--\texttt{22.9})\\
    \texttt{vs} & Cyl.\ config (\texttt{0} = V-shaped, \texttt{1} = straight)\\
    \texttt{am} & Transmission (\texttt{0} = autom., \texttt{1} = manual)\\
    \texttt{gear} & Number of forward gears (\texttt{3}, \texttt{4}, \texttt{5})\\
    \texttt{carb} & Number of carburetors (\texttt{1}, \texttt{2}, \texttt{3}, \texttt{4}, \texttt{6}, \texttt{8})\\
    \bottomrule
    \end{tabular}
    \end{minipage}
    \caption{Marginally (per variable) range-transformed \texttt{mtcars} dataset
    sorted according to the first variable \texttt{mpg} (left), and
    explanation of the variables in \texttt{mtcars} (right).}
  \label{fig:mtcars01}
\end{figure}
\begin{remark}[About the data]
  We deliberately chose such a small-dimensional dataset so that single (missing)
  entries can still be visually identified and so that we can showcase a variety
  of missingness patterns Bernoulli amputation is able to capture; for a larger
  dataset in the context of an application, see Section~\ref{sec:app:VaR:ES}. Similarly,
  parameters of all experiments are chosen to allow for visual identification of
  their effect. As our approach only involves random variate generation (sampling)
  via the stochastic representation~\eqref{eq:stoch:rep:M}, it can easily be
  applied with different parameters and to datasets of much larger size and
  dimension. Sampling can easily be accomplished with the \R\ package
  \texttt{copula} of \cite{copula}, which is also used in our experiments in this
  section; see \cite{hofertkojadinovicmaechleryan2018} for more details. Also note
  that Bernoulli amputation applies irrespectively of the type of data $M$ is
  applied to.
\end{remark}

\subsection{Structured missingness}\label{sec:vis:SM}
Our first example demonstrates the ability of Bernoulli amputation to create
SM. The top row of Figure~\ref{fig:mtcars01:smiley} shows three
realisations of the amputed $X$. %
\begin{figure}[htbp]
  \centering
  \includegraphics[width=0.32\textwidth]{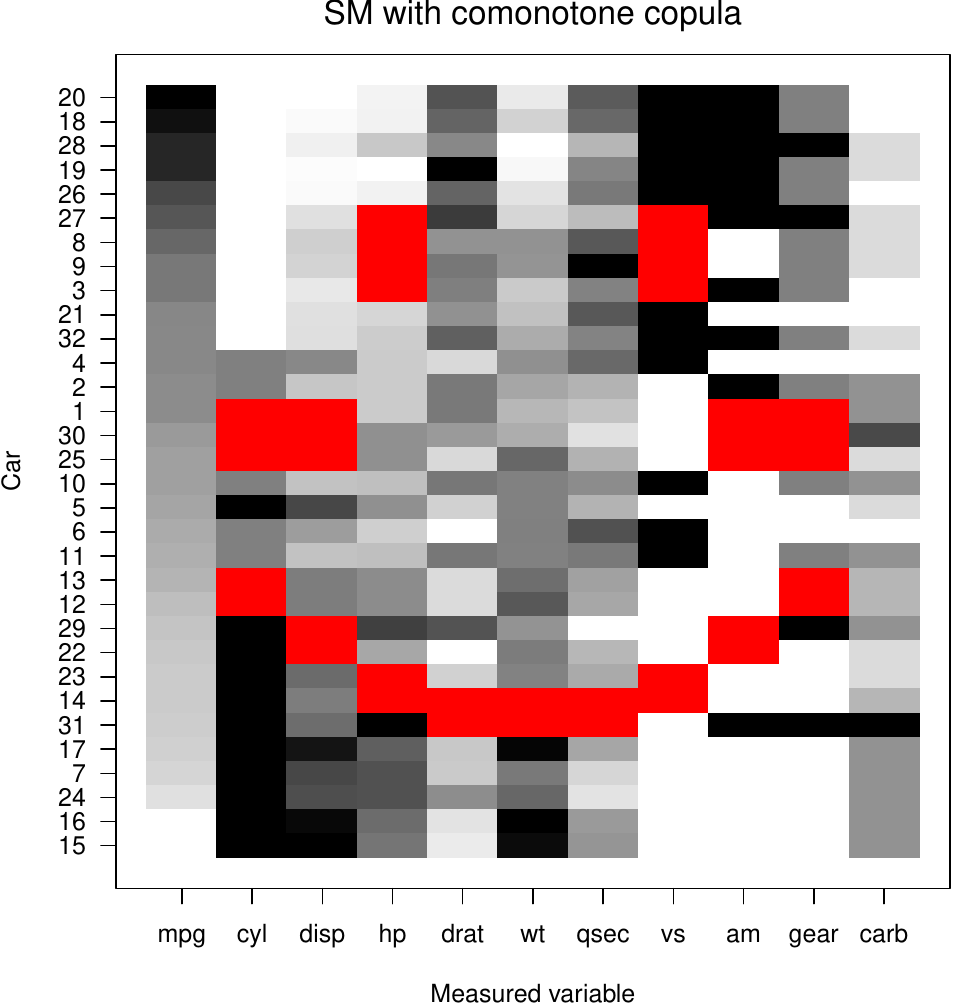}
  \hfill
  \includegraphics[width=0.32\textwidth]{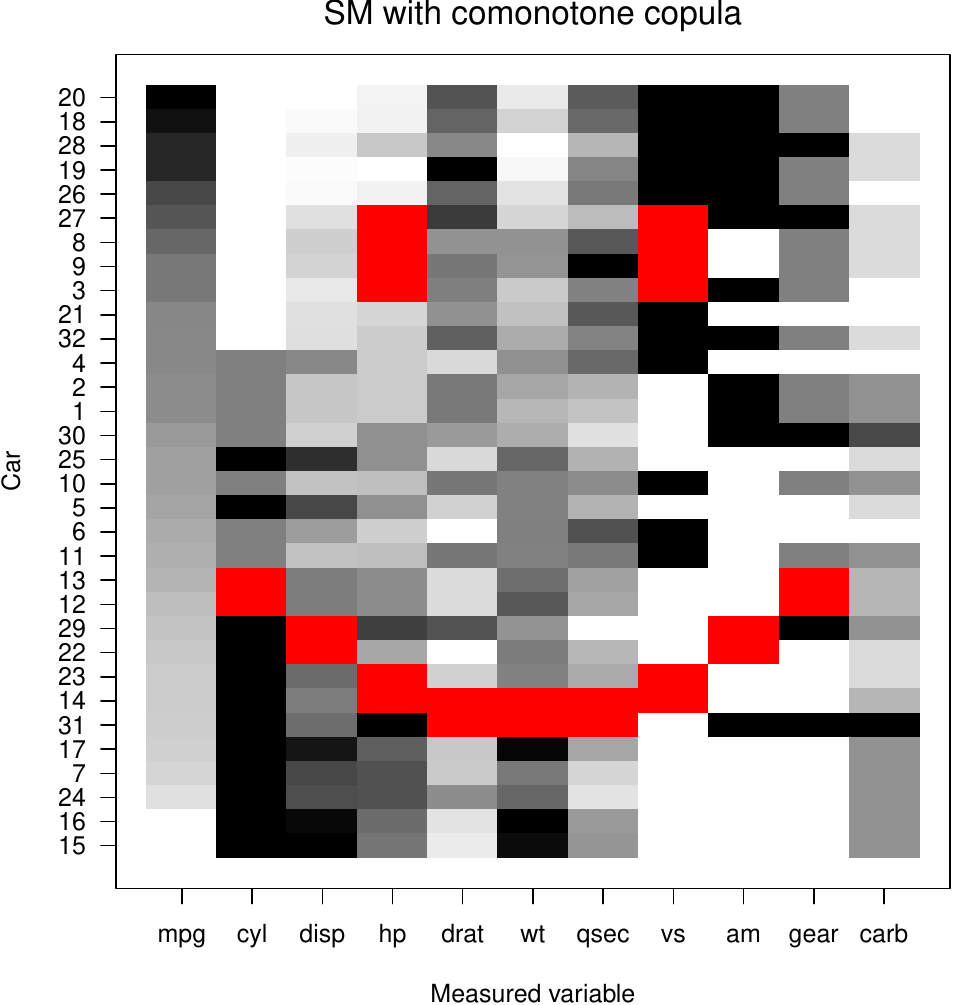}
  \hfill
  \includegraphics[width=0.32\textwidth]{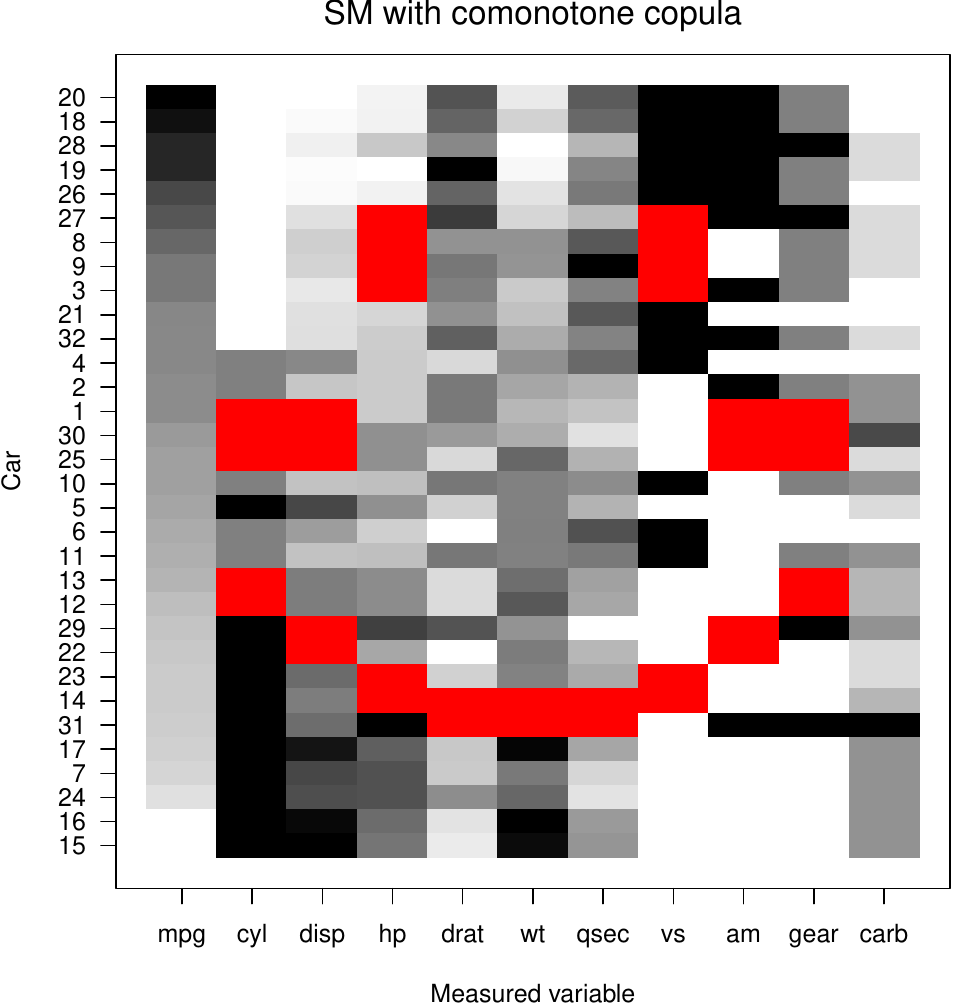}\\[4mm]
  \includegraphics[width=0.32\textwidth]{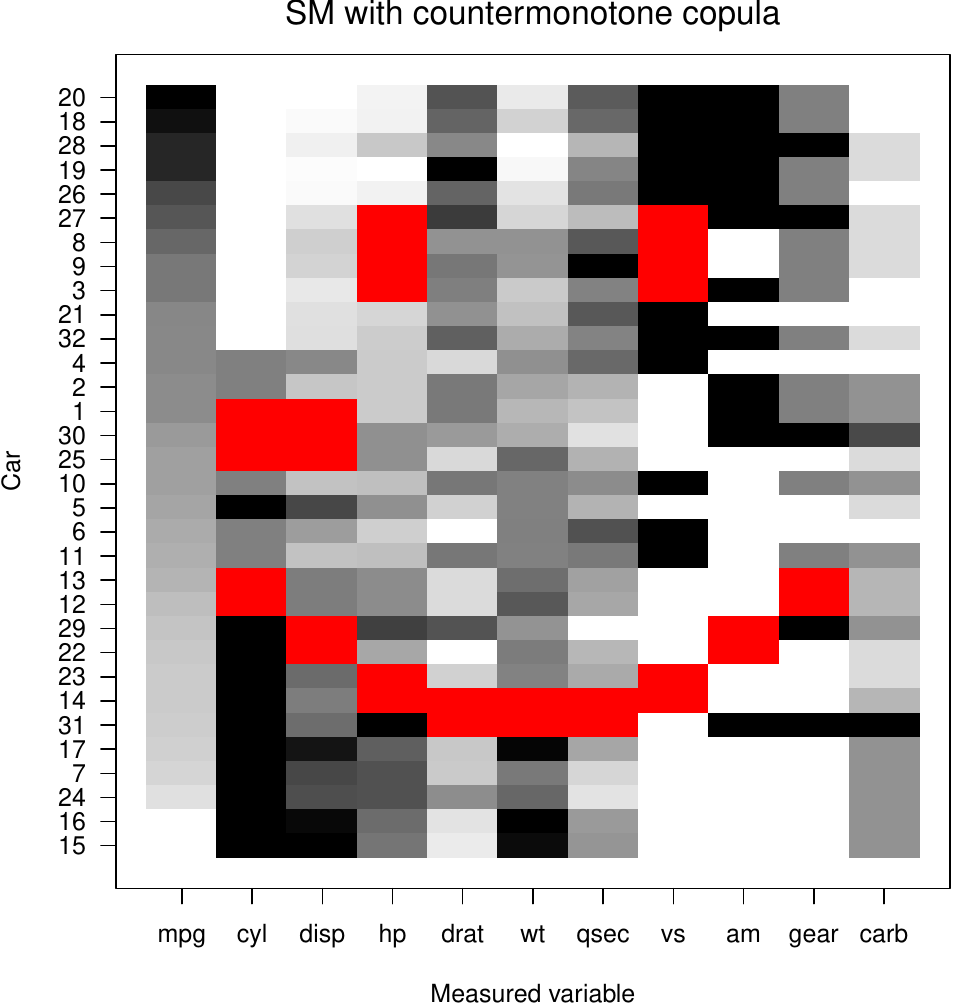}
  \hfill
  \includegraphics[width=0.32\textwidth]{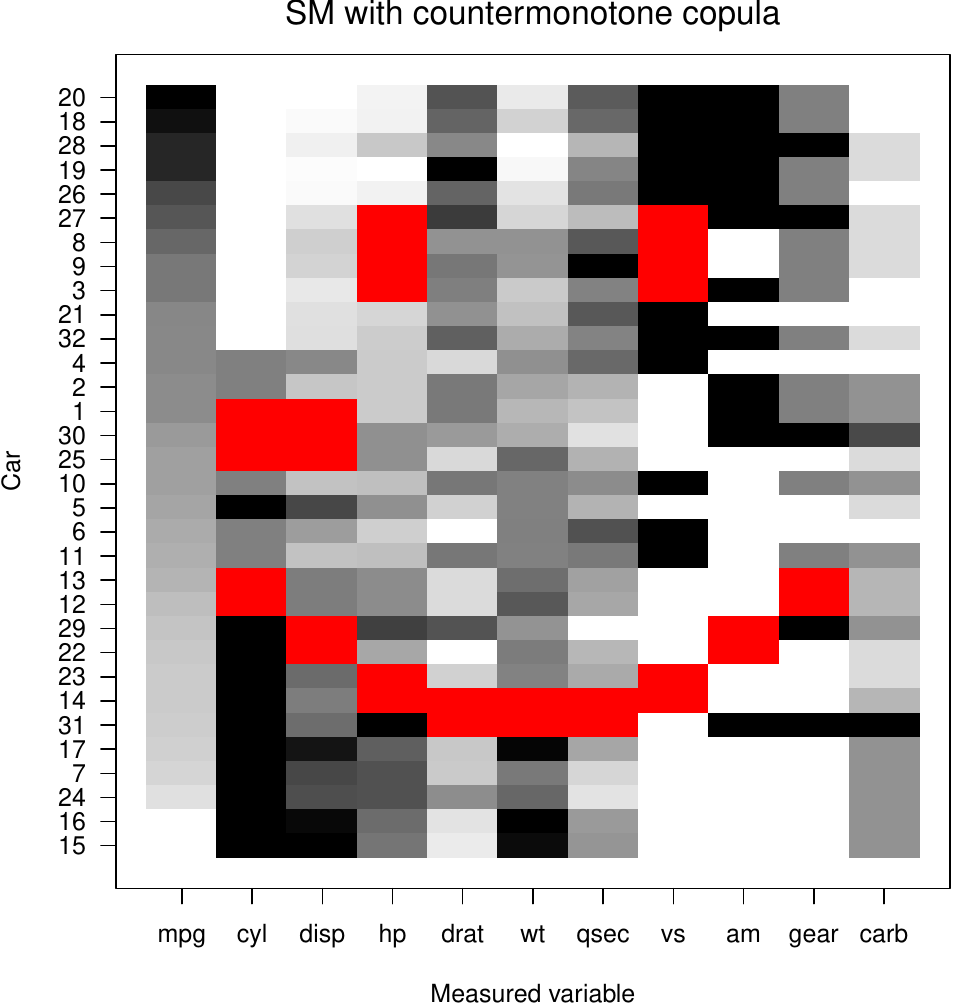}
  \hfill
  \includegraphics[width=0.32\textwidth]{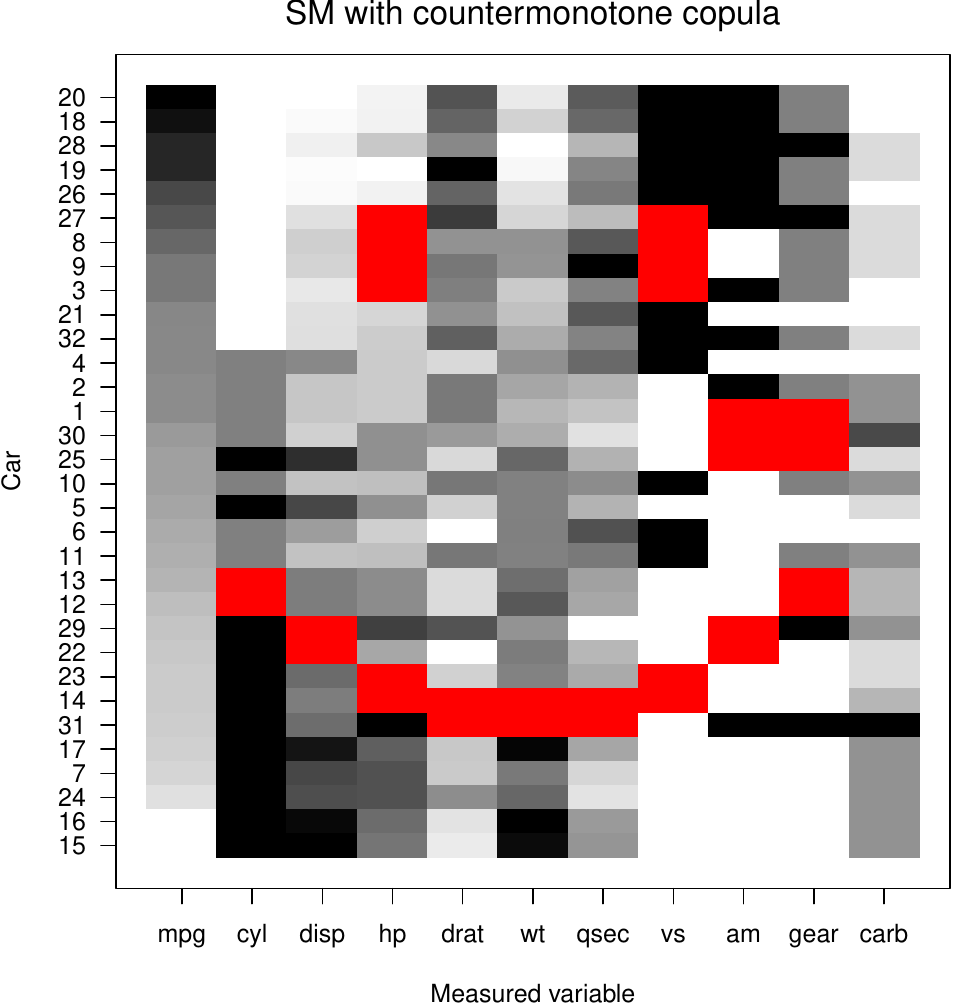}
  \caption{Realisations of SM amputed \texttt{mtcars01} dataset. The entries
    for eyes and mouth of the smiley are always missing ($p_{i,j}=1$),
    those for the blush (red cheeks) are comonotone (top row) or
    countermonotone (bottom row), and all other entries are always
    observed ($p_{i,j}=0$).  Each of the possible two outcomes of the
    blush (top row, both or none; bottom row, precisely one) is chosen
    to appear with probability $p_{i,j}=1/2$.}
  \label{fig:mtcars01:smiley}
\end{figure}
Cells in red indicate missing values. The missingness pattern shows a smiley,
which is blushing (red on both cheeks) with probability $1/2$. This example of
SM falls under the setup of Example~\ref{ex:cop:based:amputation}, with $K=3$
groups. The set $S_1$ contains all indices related to the eyes and mouth, with
corresponding homogeneous marginal probability $p_1=1$ (always missing). The set
$S_2$ contains all indices related to the blush, with corresponding homogeneous
marginal probability $p_2=1/2$ (appearing only in the first and third
realisation shown). The remaining set $S_3$ with $p_3=0$ contains the indices of
all remaining entries (never
missing). %
Any copula with marginal copula corresponding to the indices in $S_2$ being the
comonotone copula (both cheeks either blush or not) can serve as copula $C$
here; see Section~\ref{sec:non:unique}.  In particular, the dependence between
all components that are always missing and all components that are never missing
is irrelevant (as $p_1,p_3\in\{0,1\}$ for all such components), so we can take
$C=C^{\text{M}}$ (in this case,
$C_{S_1,S_2,S_3}(u_1,u_2,u_3)=\min\{u_1,u_2,u_3\}$) or, equally well, $C$ being
a product of $C^{\text{M}}$ and $C^{\Pi}$, where the former contains all
components with indices in $S_2$ and the latter contains all remaining
components (in this case, $C_{S_1,S_2,S_3}(u_1,u_2,u_3)=u_1u_2u_3$). Note how we
only specified distributions, not the missingness patterns themselves, which is
typically much easier to achieve by the amputer and explores the space of
possible missingness patterns much better than a finite set of specific
patterns.

The bottom row of Figure~\ref{fig:mtcars01:smiley} is also based on
Example~\ref{ex:cop:based:amputation}. $M$ is now constructed based on four
groups, namely the facial structure $S_1$ with $p_1=1$ (always missing), the
left cheek $S_2$ with $p_2=1/2$, the right cheek $S_3$ with $p_3=1/2$, and all
other components $S_4$ with $p_4=0$ (never missing). In this case, one can
choose $C_{S_1,S_2,S_3,S_4}(u_1,u_2,u_3,u_4)=W(M(u_1, u_2), M(u_3, u_4))$, which
implies that the two cheeks $S_2,S_3$ are countermonotone, i.e., precisely one
cheek of the smiley shows a blush, and each side appears with probability
$p_2=p_3=1/2$.

Figure~\ref{fig:mtcars01:smiley} also demonstrates that Bernoulli amputation can
not only handle stochastic missingness, but also deterministic missingness
since, by construction, the latter appears as a special case for degenerate
Bernoulli margins, i.e., for $p_{i,j}=0$ (never missing) or $p_{i,j}=1$ (always
missing).

Another form of SM is monotone missingness, shown in
Figure~\ref{fig:mtcars01:monotone}.
\begin{figure}[htbp]
  \centering
  \includegraphics[width=0.32\textwidth]{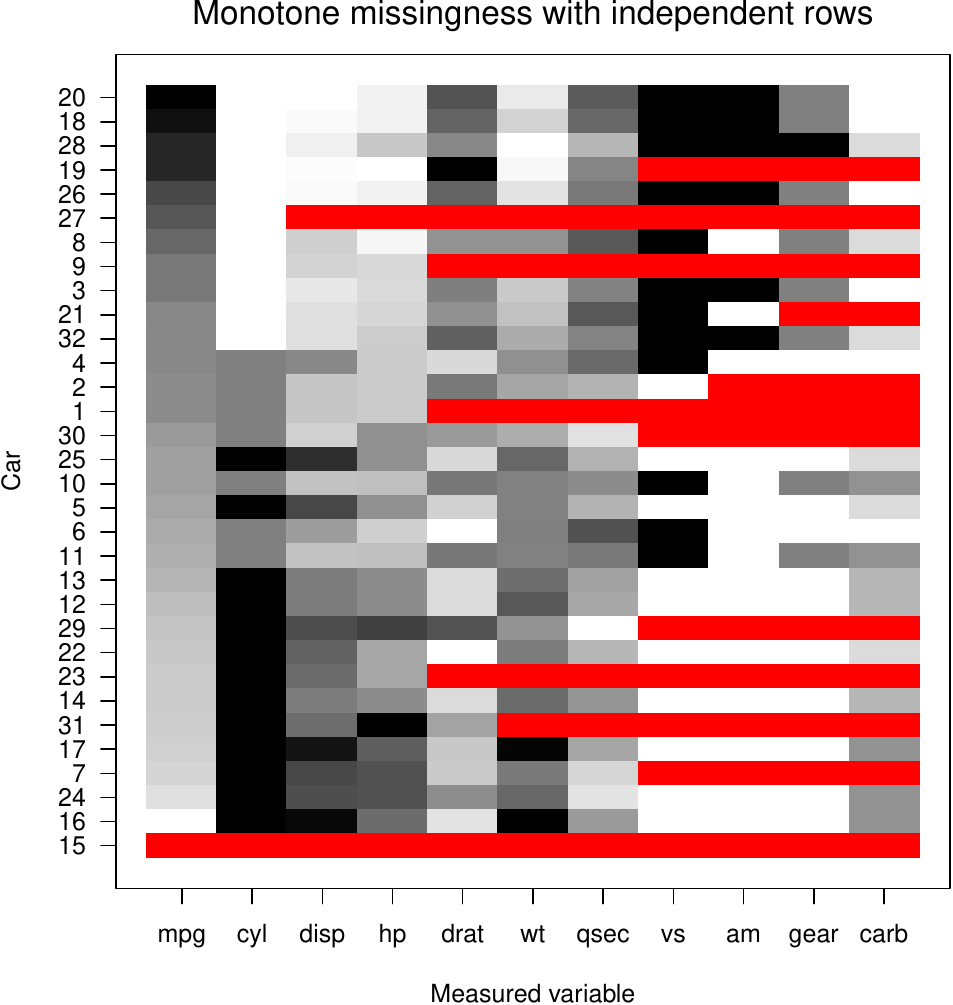}
  \hfill
  \includegraphics[width=0.32\textwidth]{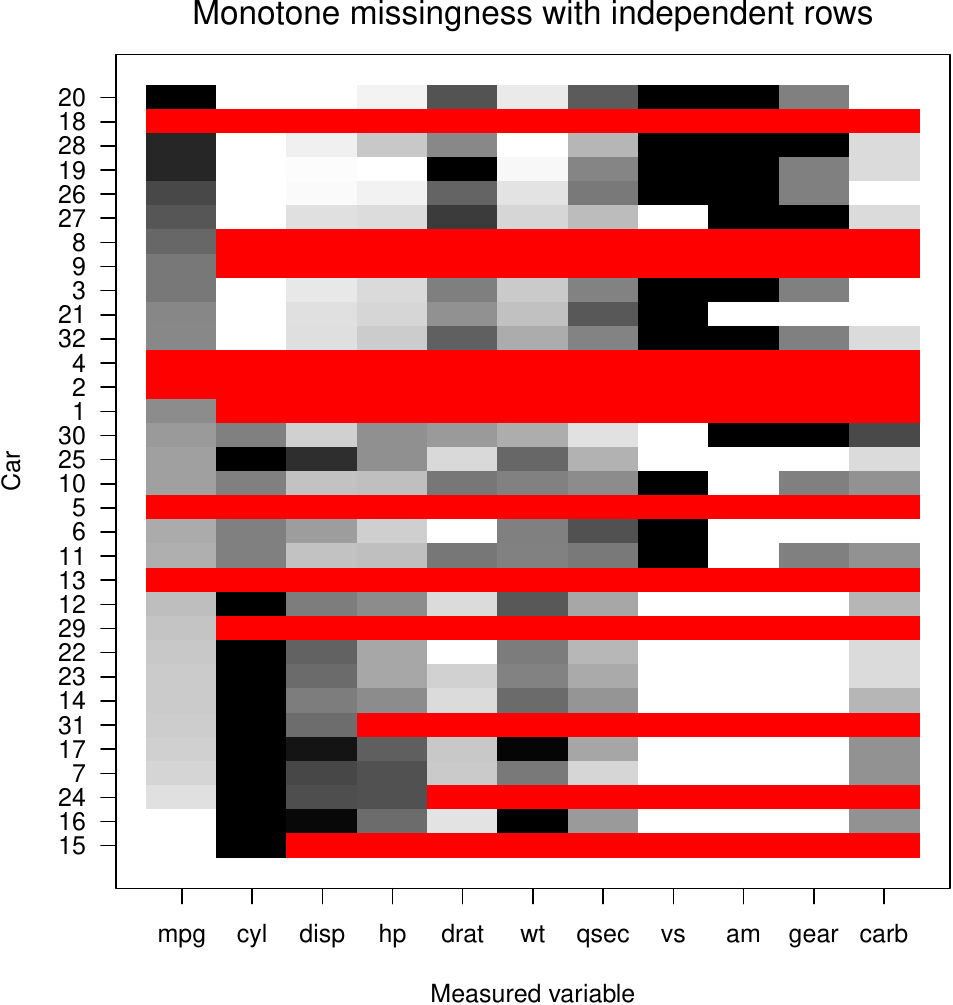}
  \hfill
  \includegraphics[width=0.32\textwidth]{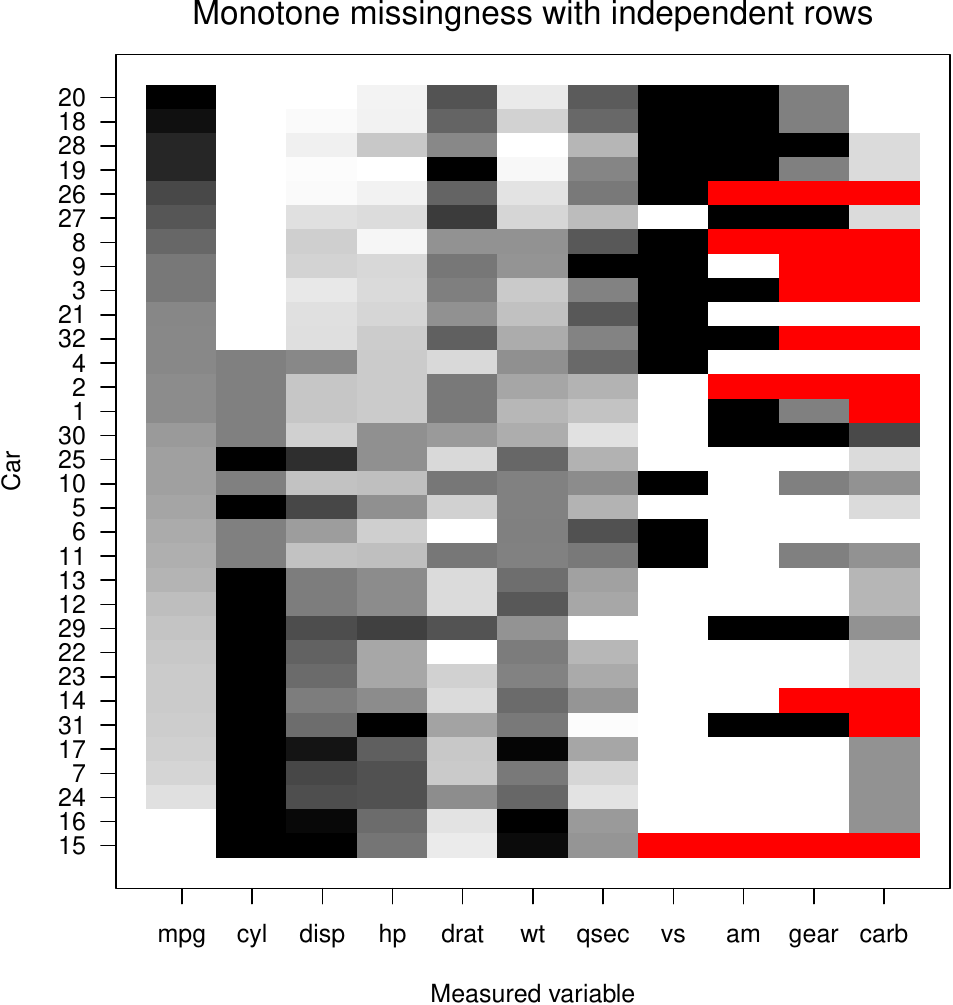}\\[4mm]
  \includegraphics[width=0.32\textwidth]{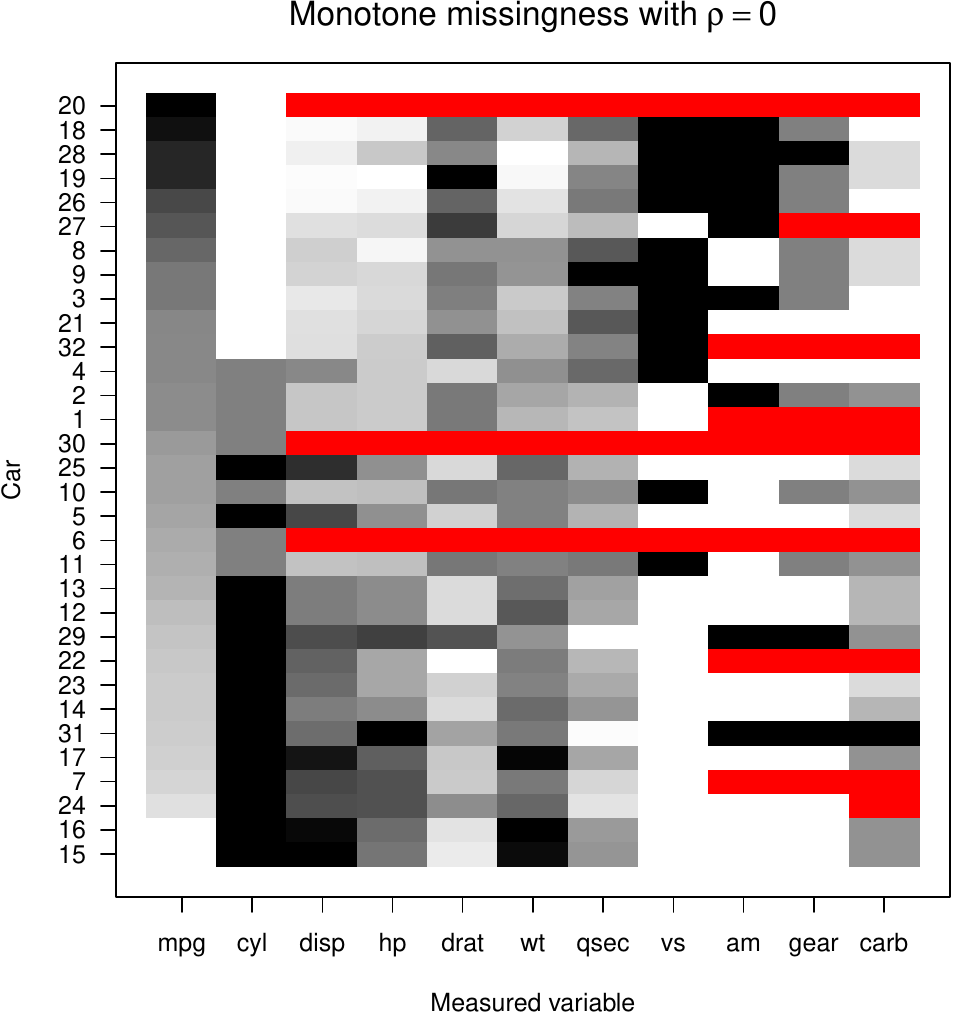}
  \hfill
  \includegraphics[width=0.32\textwidth]{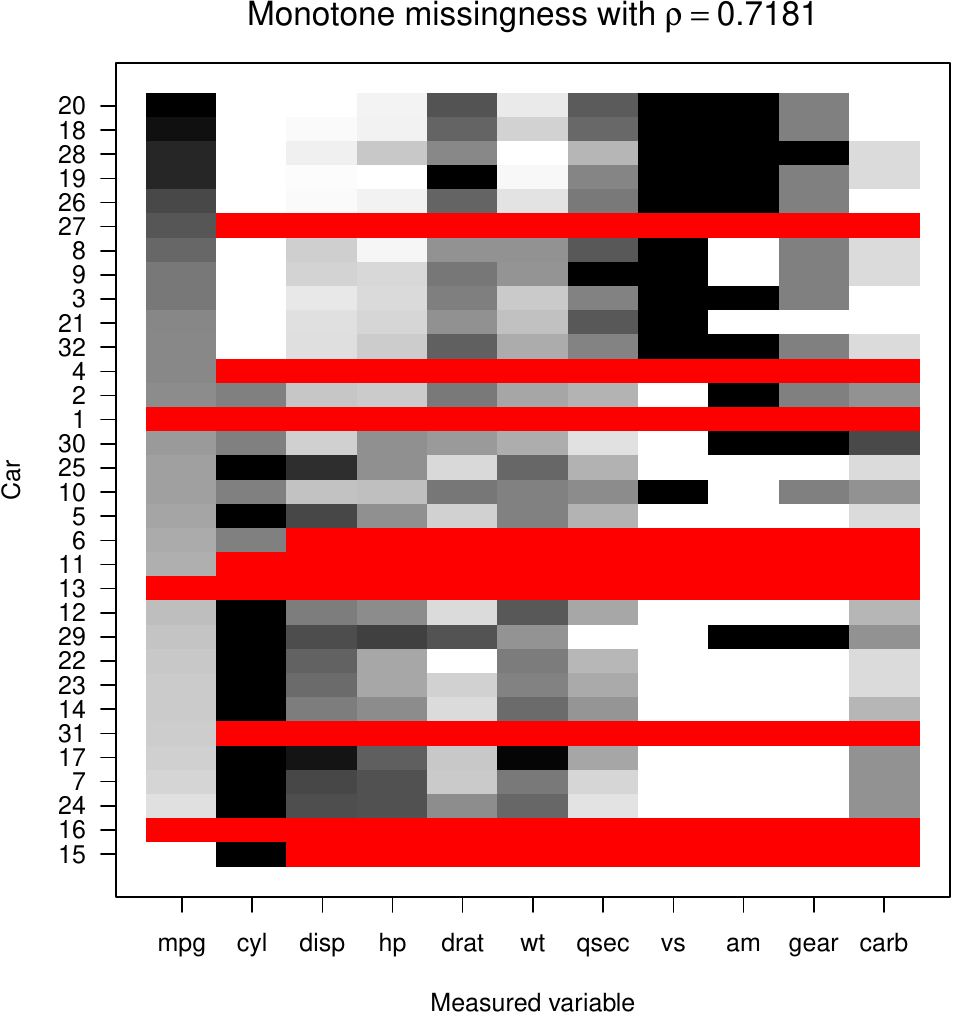}
  \hfill
  \includegraphics[width=0.32\textwidth]{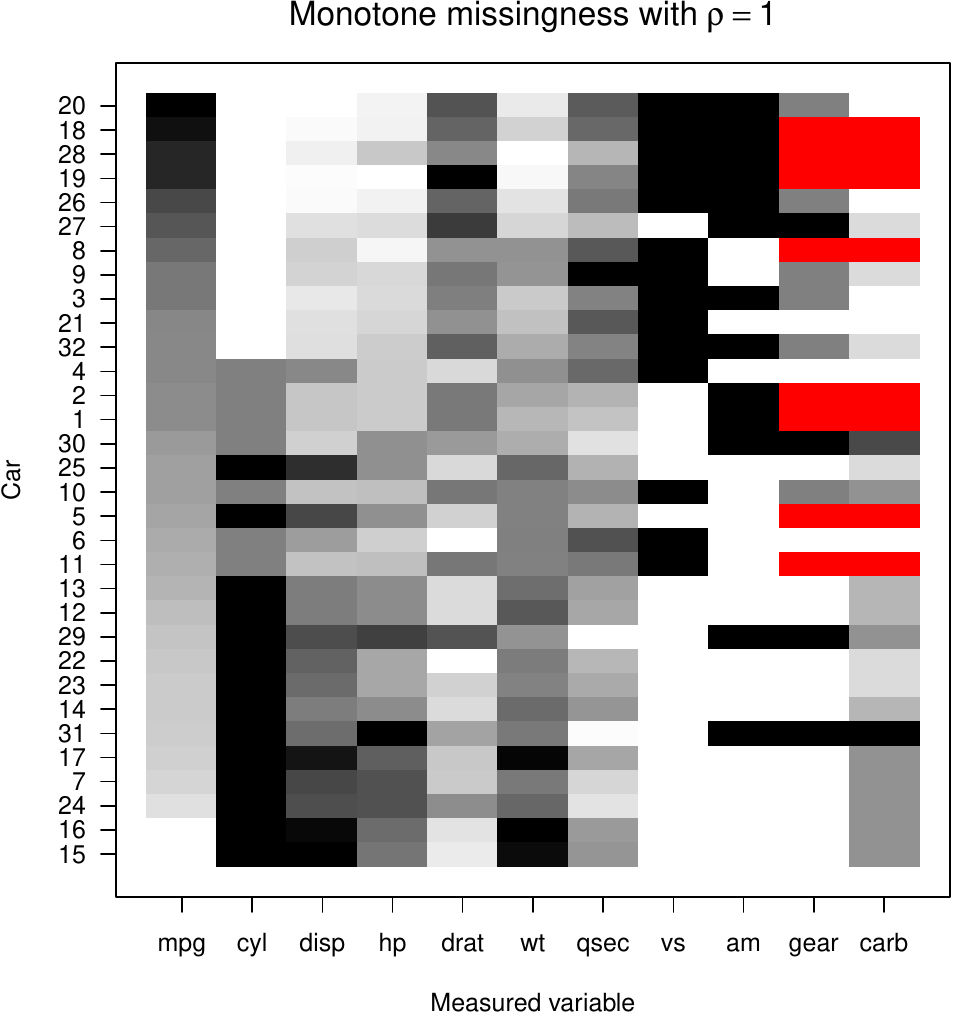}
  \caption{Realisations of monotone-SM amputed \texttt{mtcars01} dataset via
    $M=(\I_{\{j>J_i\}})_{i=1,\dots,n,\ j=1,\dots,d}$ for $J_i=d$ with
    probability $2/3$ (no missingness) and
    $J_i=\lceil d
    F_{\Beta(\alpha,\beta)}^{-1}(U_i)\rceil-1\in\{0,\dots,d-1\}$ with
    probability $1/3$ (missingess), where the parameters of the
    $\Beta(\alpha,\beta)$ distribution are $\alpha=\beta=1$ (the
    $\U(0,1)$ distribution; left), $\alpha=1$, $\beta=4$ (decreasing
    density; center), and $\alpha=4$, $\beta=1$ (increasing density;
    right). In the top row, $U_1,\dots,U_n\isim\U(0,1)$ and in the
    bottom row $(U_1,\dots,U_n)\sim C^{\text{Ga}}_{\rho}$ with
    $\rho=0$ (independence copula; left), $\rho=0.7181$ (center), and
    $\rho=1$ (comonotone copula; right).}
  \label{fig:mtcars01:monotone}
\end{figure}
In both rows we consider realisations of
$M=(\I_{\{j>J_i\}})_{i=1,\dots,n,\ j=1,\dots,d}$ for $J_i=d$ with probability
$2/3$ (leading to no missingness in such rows $i$) and
$J_i=\lceil d F_{\Beta(\alpha,\beta)}^{-1}(U_i)\rceil-1\in\{0,\dots,d-1\}$ with
probability $1/3$ (leading to missingess in such rows $i$), where the parameters
of the $\Beta(\alpha,\beta)$ distribution are $\alpha=\beta=1$ (left),
$\alpha=1$, $\beta=4$ (center), and $\alpha=4$, $\beta=1$ (right). The three
beta distributions have a flat density (standard uniform), decreasing density,
and increasing density, respectively, which is reflected in the random (left),
early (center), and late (right) starting point of the monotone missingness
pattern in each row. In the top row, $U_1,\dots,U_n\isim\U(0,1)$ (independent
rows), whereas in the bottom row, $(U_1,\dots,U_n)\sim C^{\text{Ga}}_{\rho}$
with $\rho=0$ (independence copula; left), $\rho=0.7181$ (center), and $\rho=1$
(comonotone copula; right); the value $\rho=0.7181$ was chosen for comparability
with later examples, see the following section for details. Note that both plots
in the left column are realisations of the same model yet exhibit different
missingness patterns, an advantage of a stochastic approach to amputation as it
allows to focus on principles of missingness patterns rather than specifying
them manually. We see that the dependence between the rows controls how
scattered across the columns the starting points of the monontone missingness
patterns in each row are. In the top row and the bottom left plot, the monotone
missingness patterns start independently of each other across rows, whilst in
the bottom center they tend to be closer together across different rows
(starting early due to the decreasing density of the $\Beta(1,4)$
distribution). In the bottom right they start rather late due to the increasing
density of the $\Beta(4,1)$ distribution, and in the same column due to the
comonotonicity of $(U_1,\dots,U_n)$.

\begin{remark}[Advantage over existing approaches]
  Neither of the examples presented so far can be reproduced with the multivariate
  amputation approach by \cite{Schouten2018}; see also
  Algorithm~\ref{alg:scen:approach1} in Section~\ref{sec:multi:amp}. This is
  obvious for Figure~\ref{fig:mtcars01:smiley} and the bottom row of
  Figure~\ref{fig:mtcars01:monotone} due to the dependence between the rows, which
  contradicts the shuffling of rows in Step~\ref{alg:scen:approach1:permutation}
  of Algorithm~\ref{alg:scen:approach1}. Concerning the top row of
  Figure~\ref{fig:mtcars01:monotone}, consider the first plot. Although such fixed
  patterns (those we actually see as realizations in each row in this plot) can be
  specified in Step~\ref{alg:scen:approach1:patterns} of
  Algorithm~\ref{alg:scen:approach1}, these are the only patterns we would see
  applied to rows in multiple amputation (the patterns are
  fixed), %
  whereas the stochastic nature of Bernoulli amputation allows for different
  patterns to emerge, for example one with $J_i=d-3$ or $J_i=d-1$ currently not
  appearing in the actual realization displayed in the first plot in the top row
  of Figure~\ref{fig:mtcars01:monotone}. The stochasticity of Bernoulli amputation
  (even including deterministic missingness as a special (degenerate) case)
  is a main advantage as it allows for realised missingness patterns the amputer
  has not explicitly specified and thus is more convenient from the amputer's
  perspective.
\end{remark}

\subsection{Missingness mechanisms MCAR, MAR, and MNAR}
We now provide a visual illustration of Algorithm~\ref{alg:Bern:amp:A3},
showing how the parameter $\rho$ of the underlying homogeneous Gauss copula
$C^{\text{Ga}}_\rho$ can be used to produce different strengths of
multivariate missingness in \texttt{mtcars01}.

The plots in Figure~\ref{fig:mtcars01:mcar} show MCAR missingness patterns when
amputing values for $\rho=0$ (left), $\rho=0.7181$ (center; chosen for it
implies, by Proposition~\ref{pro:pairwise:cor}, a correlation of about $0.5$
between the entries in $M$), and $\rho=1$ (right) for the same (homogeneous)
marginal missingness probability of $p=1/3$ (top) and $p=1/5$ (bottom).
\begin{figure}[htbp]
  \centering
  \includegraphics[width=0.32\textwidth]{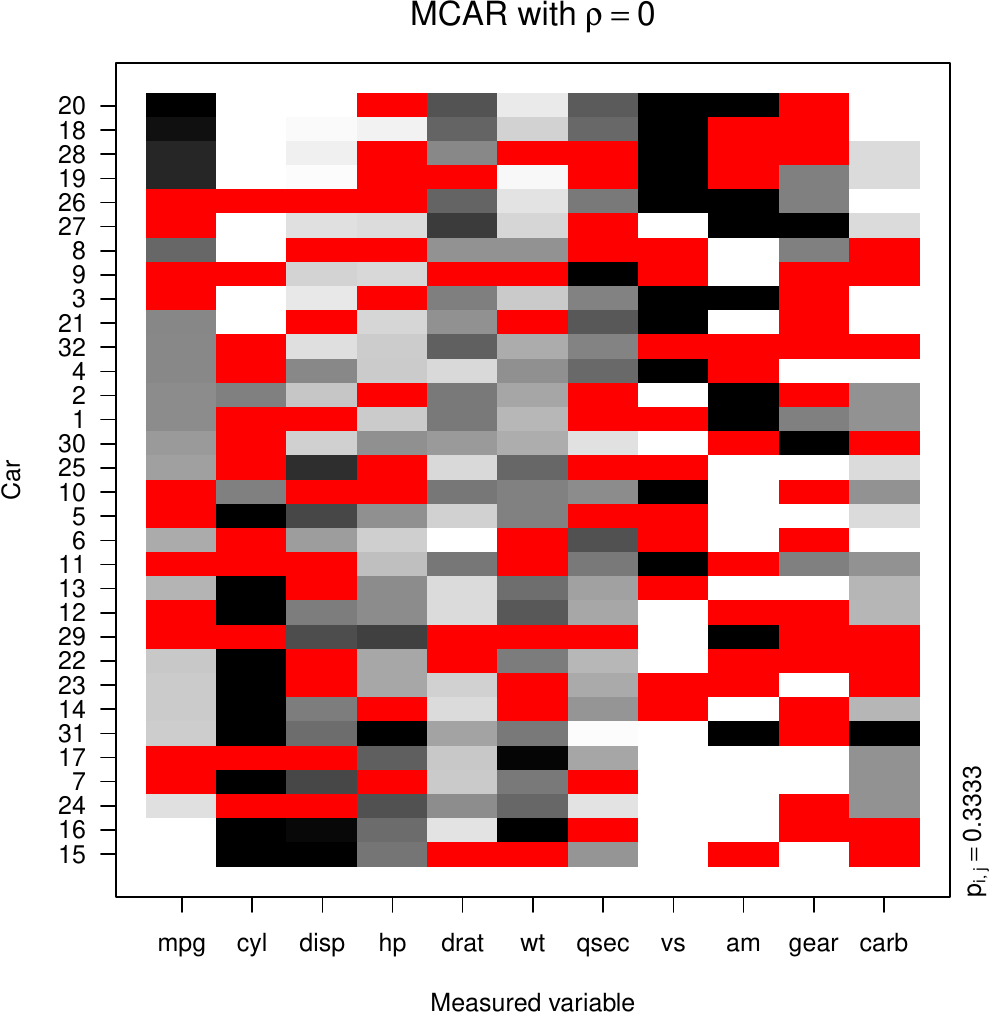}
  \hfill
  \includegraphics[width=0.32\textwidth]{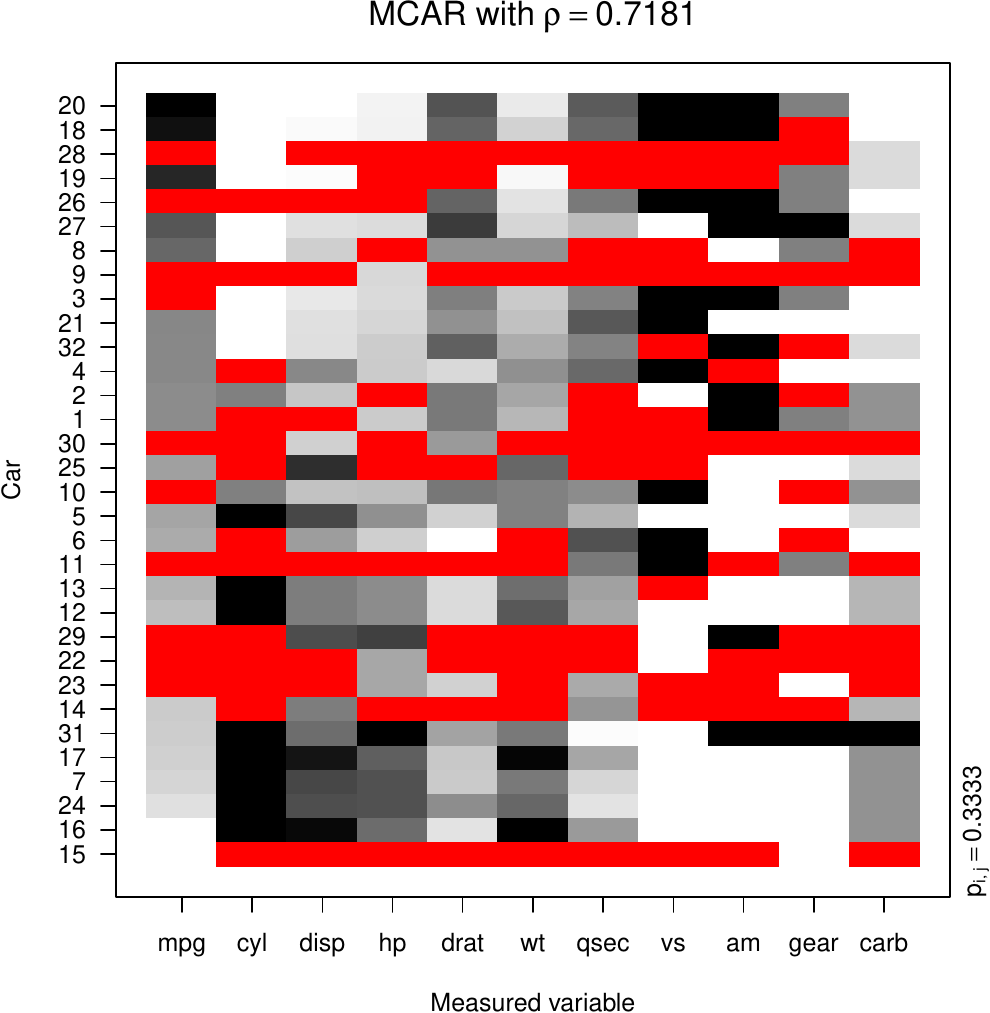}
  \hfill
  \includegraphics[width=0.32\textwidth]{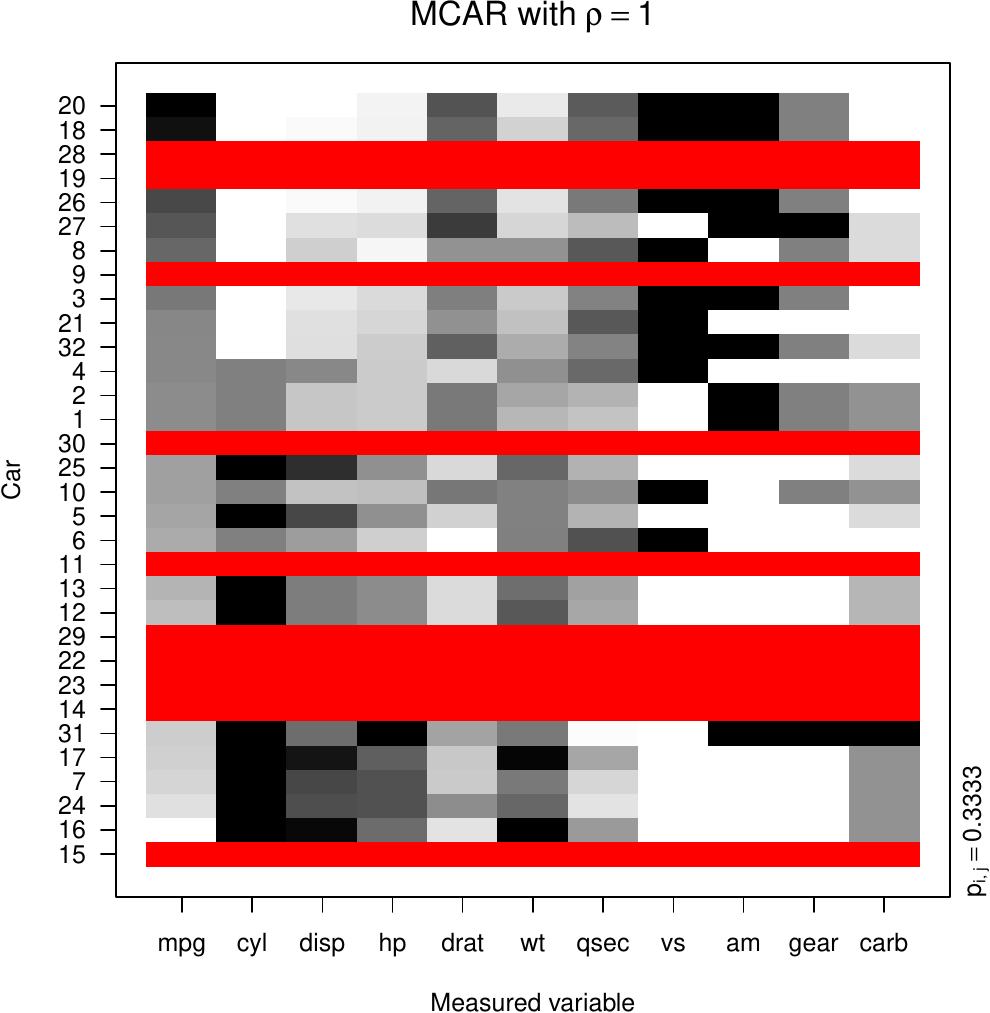}\\[4mm]
  \includegraphics[width=0.32\textwidth]{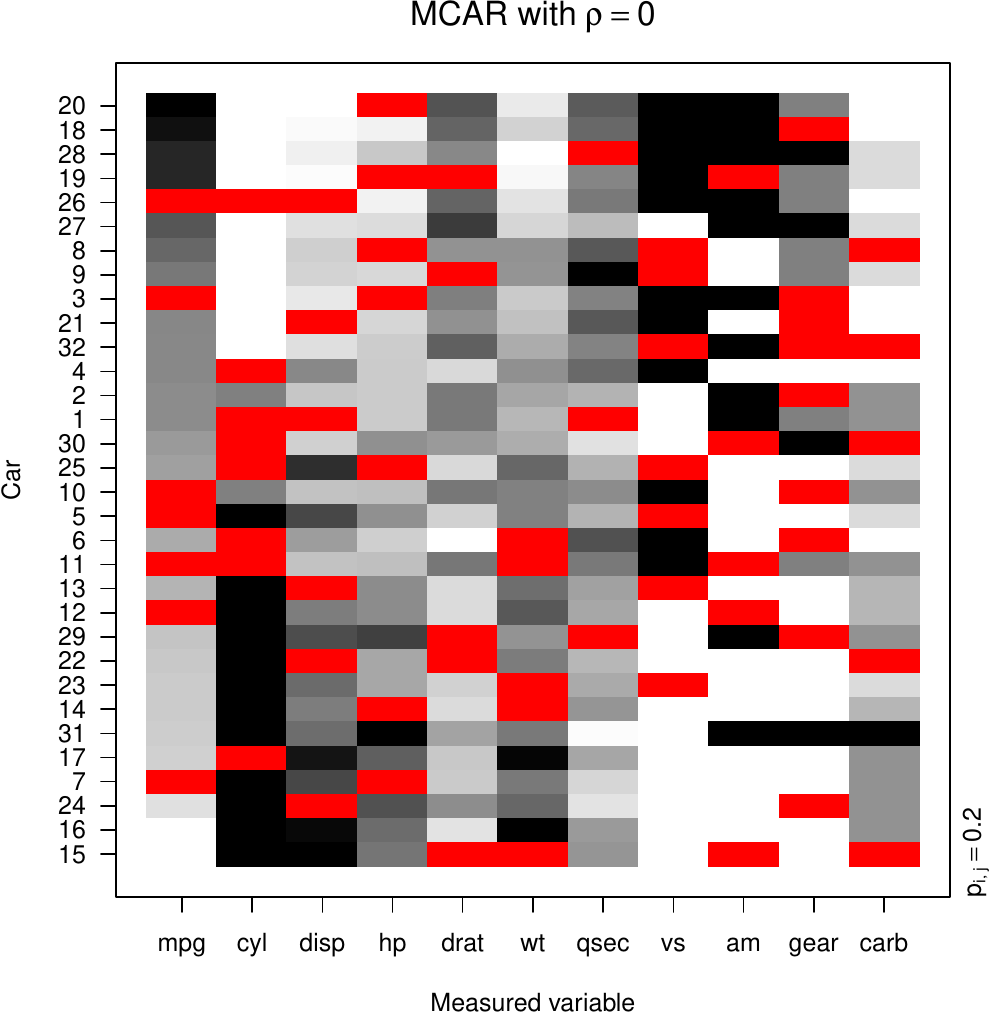}
  \hfill
  \includegraphics[width=0.32\textwidth]{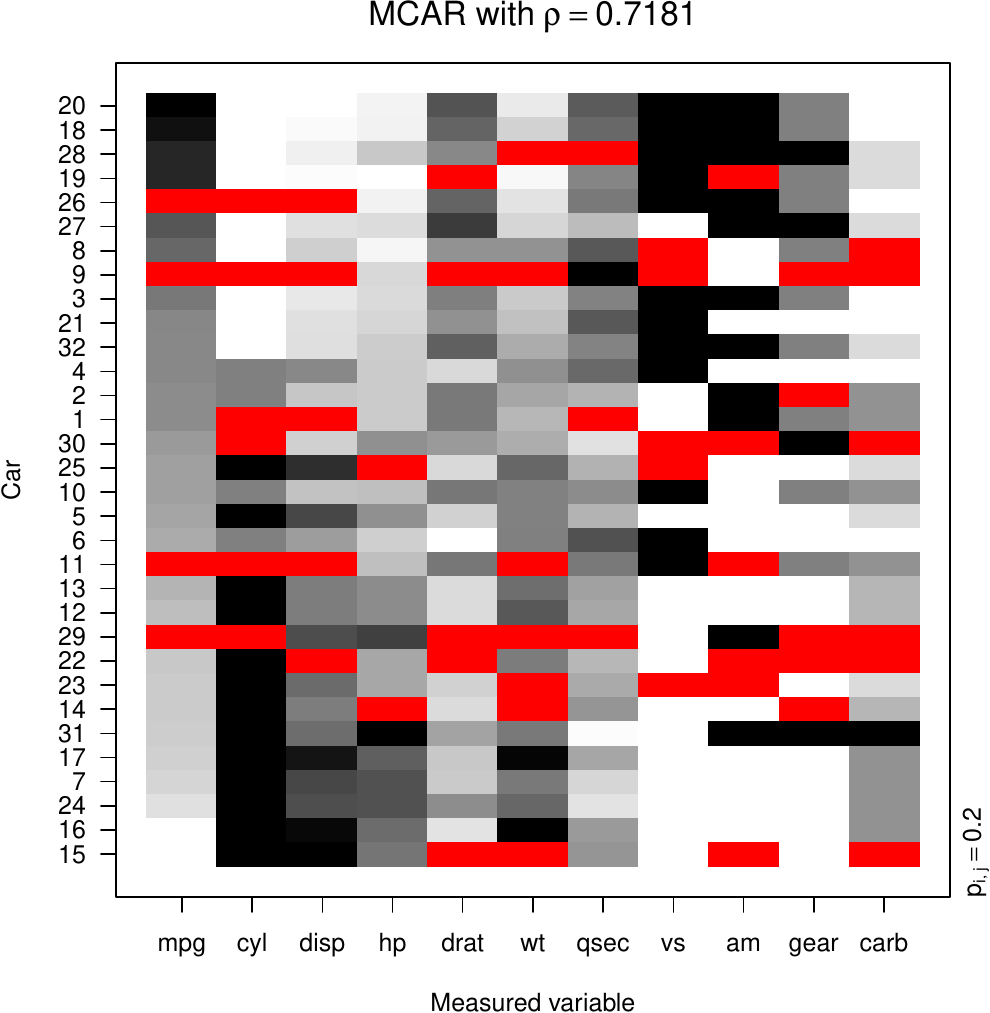}
  \hfill
  \includegraphics[width=0.32\textwidth]{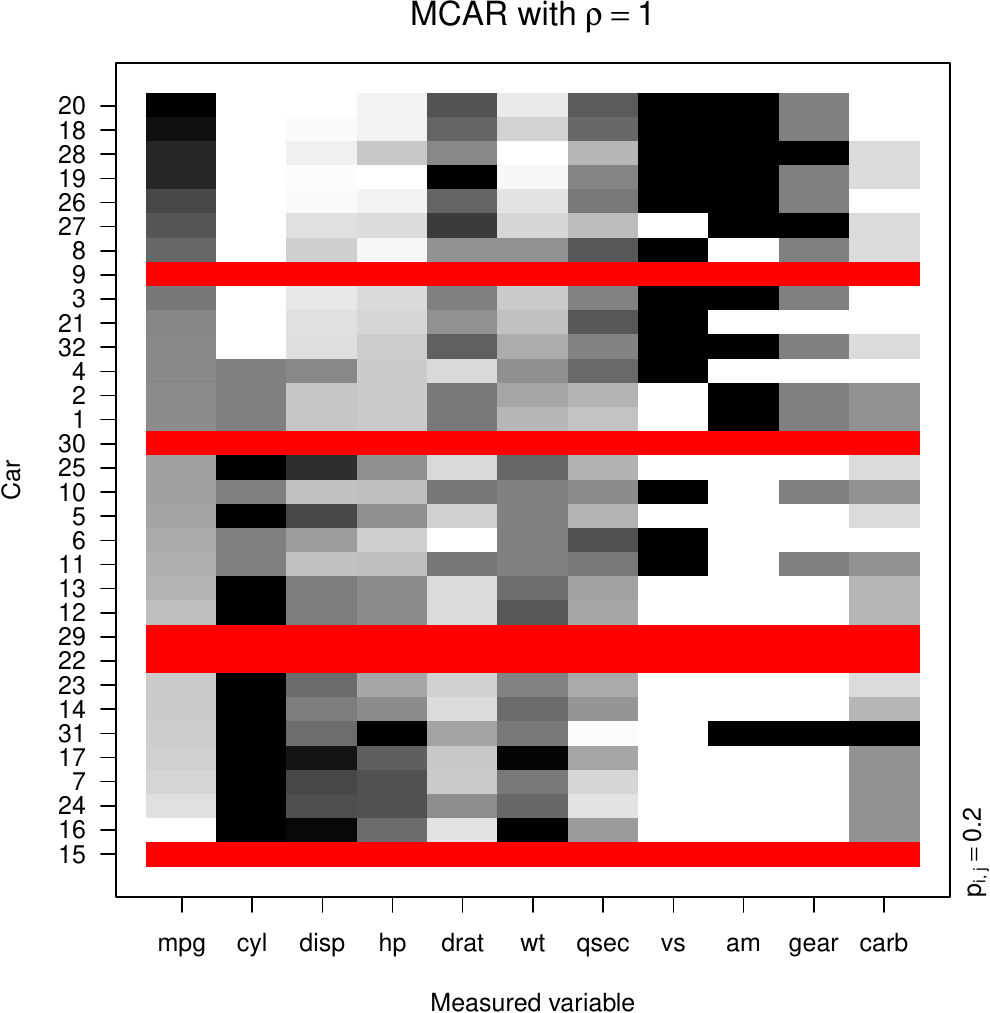}
  \caption{MCAR amputed \texttt{mtcars01} dataset according to
    Algorithm~\ref{alg:Bern:amp:A3} with different strengths $\rho$
    of dependence of the underlying Gauss copula $C^{\text{Ga}}_\rho$,
    implying pairwise correlations of $0$ (independence copula; left),
    $0.5$ (center), and $1$ (comonotone copula; right) between the
    entries in $M$. The homogeneous marginal missingness probabilities
    are $p=1/3$ (top) and $p=1/5$ (bottom).}
  \label{fig:mtcars01:mcar}
\end{figure}
Comparing the two rows of plots, the influence of $p$ on the overall
proportion of missingness is clearly visible. Furthermore, as $\rho$
increases, there is more structure to the missingness pattern. When
$\rho=0$ ($C=C^{\Pi}$), missing values occur randomly across the
dataset, whilst for $\rho=1$ ($C=C^{\text{M}}$), rows of the amputed
dataset are necessarily entirely missing or entirely complete. Under
$\rho=1$, the probability of joint missingness of a whole row is about
$1/3$ in the top right plot and about $1/5$ in the bottom right plot,
but for $\rho=0$ (in the two plots of the left column), these
probabilities are only about $1/3^{11}\approx 5.65/10^6$ and
$1/5^{11}\approx 2.048/10^8$, respectively.

The plots in Figure~\ref{fig:mtcars01:mar} show MAR missingness
patterns for the same $\rho$ values as in Figure~\ref{fig:mtcars01:mcar}.
\begin{figure}[htbp]
  \centering
  \includegraphics[width=0.32\textwidth]{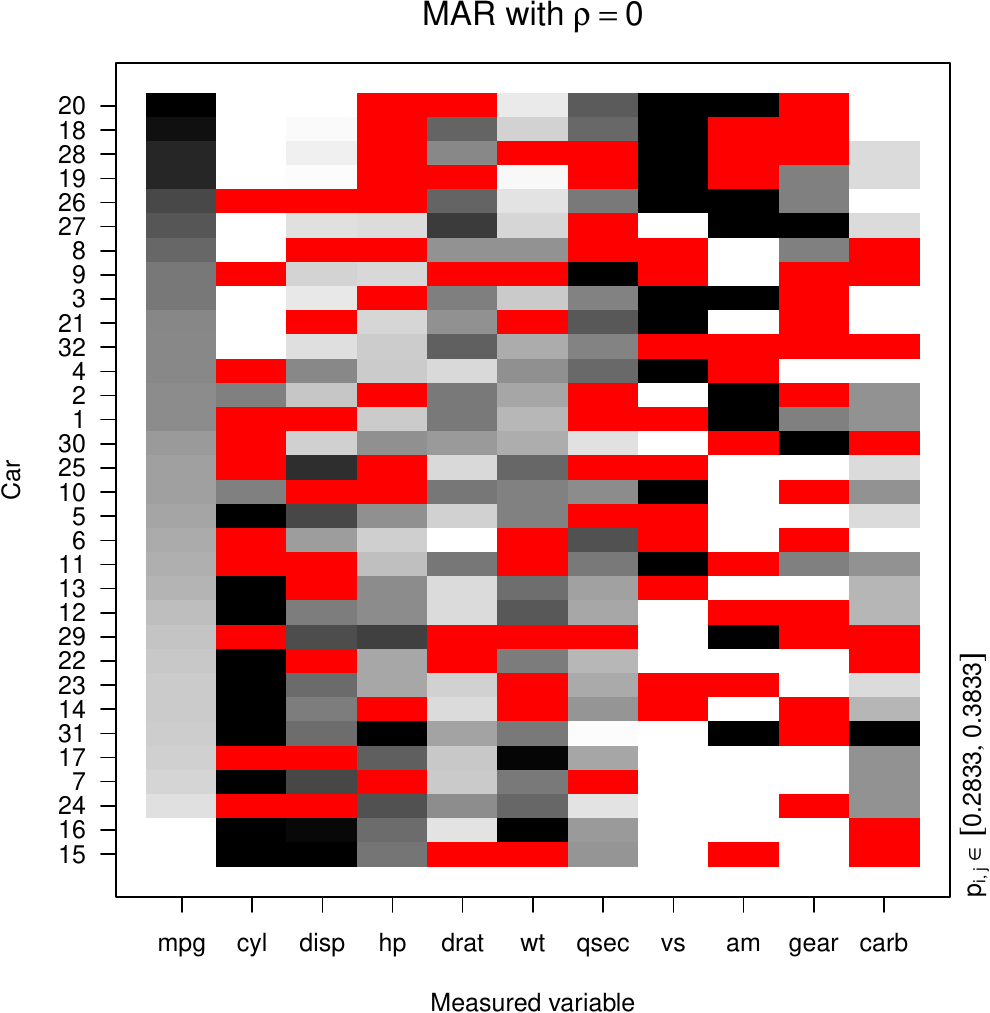}
  \hfill
  \includegraphics[width=0.32\textwidth]{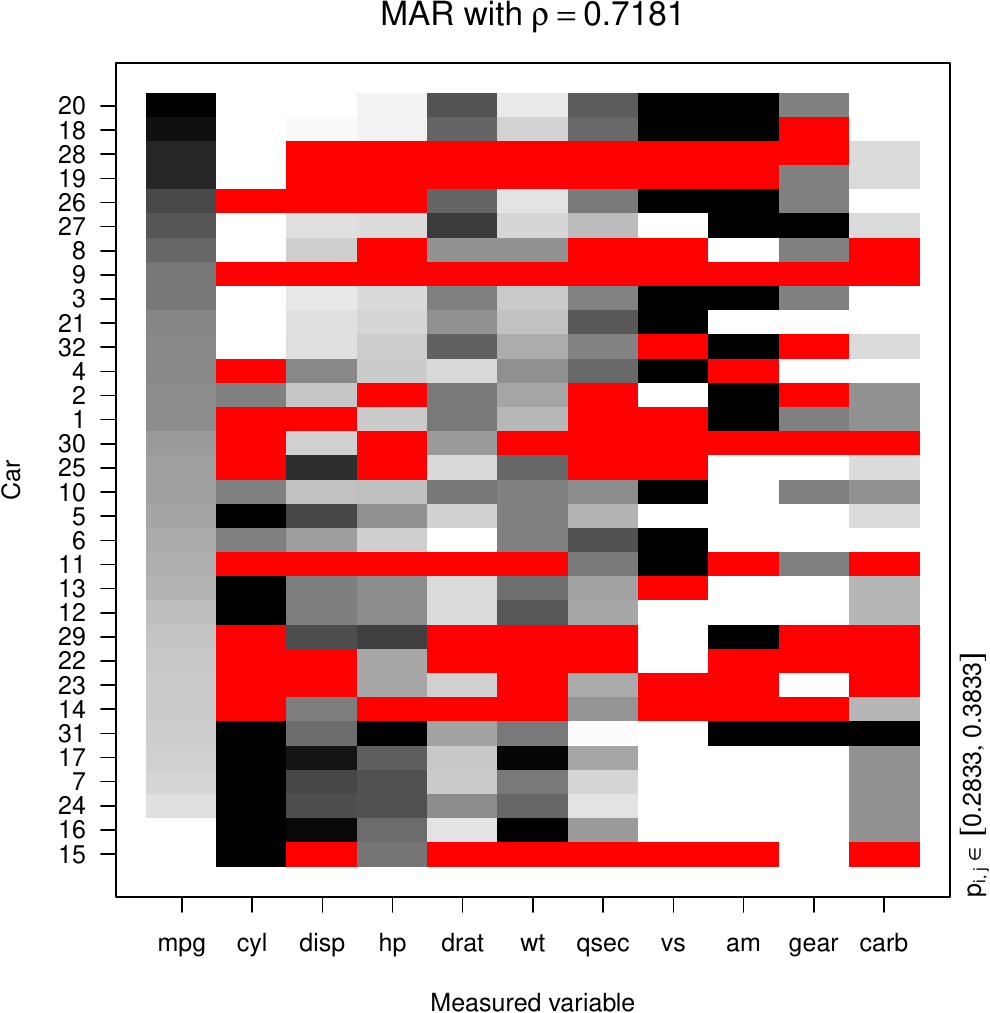}
  \hfill
  \includegraphics[width=0.32\textwidth]{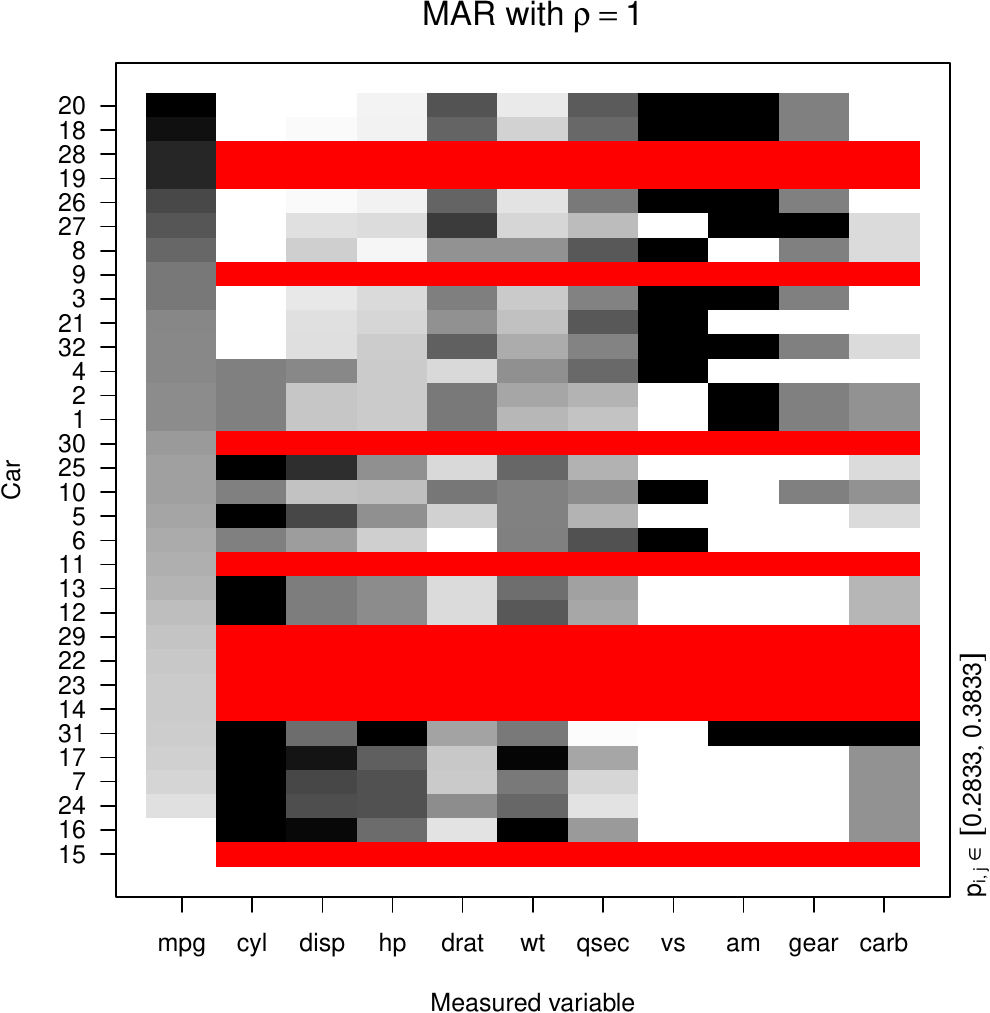}\\[4mm]
  \includegraphics[width=0.32\textwidth]{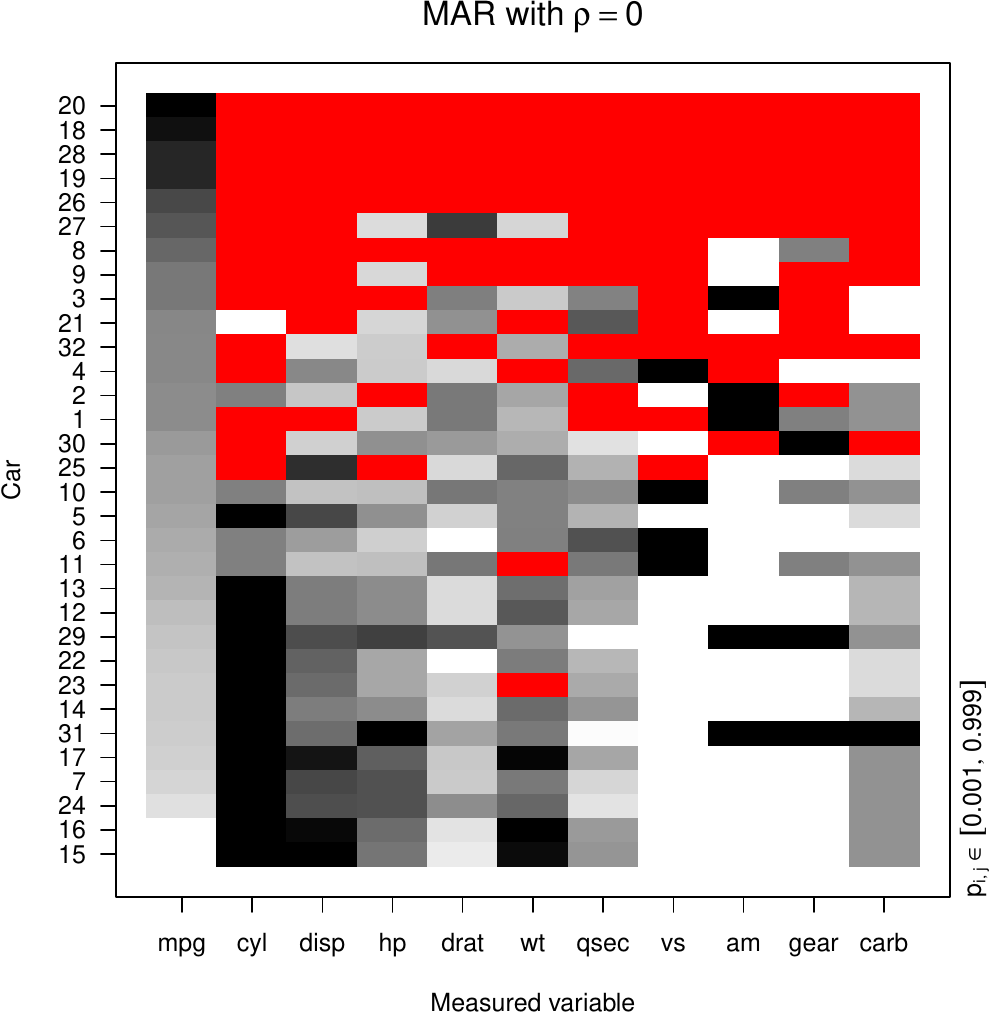}
  \hfill
  \includegraphics[width=0.32\textwidth]{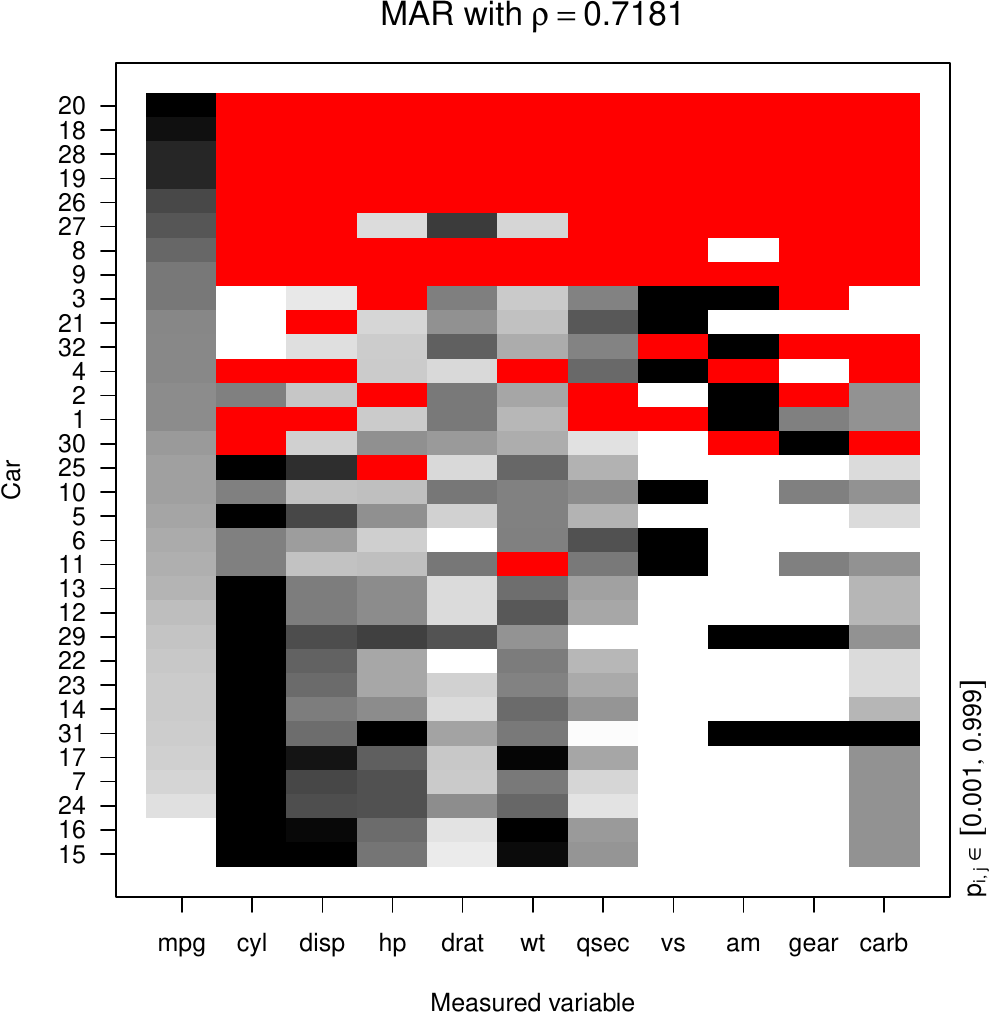}
  \hfill
  \includegraphics[width=0.32\textwidth]{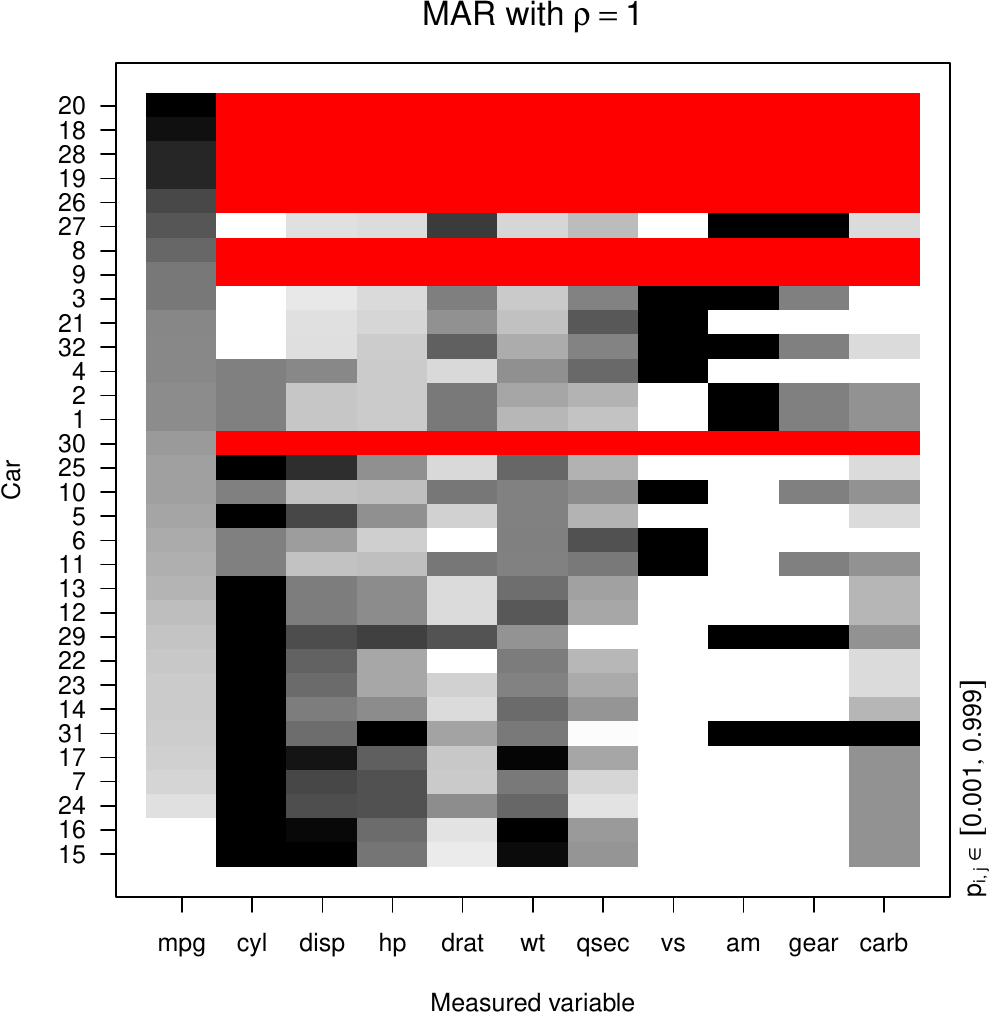}
  \caption{MAR amputed \texttt{mtcars01} dataset according to
    Algorithm~\ref{alg:Bern:amp:A3} with different strengths $\rho$ of
    dependence of the underlying Gauss copula $C^{\text{Ga}}_\rho$ (left, center, right) as in Figure~\ref{fig:mtcars01:mcar}.
    For all $i=1,\dots,32$, the marginal missingness probabilities
    $p_{i,j}$ are $p_{i,1}=0$ (no missingness) and, for
    $j=2,\dots,11$, $p_{i,j}\in [1/3-0.05, 1/3+0.05]$ (top row) and
    $p_{i,j}\in [0.001, 0.999]$ (bottom row), where $p_{i,j}$,
    $j=2,\dots,11$, depends on $Y_{i,1}$ via~\eqref{eq:regr} for equal
    $\beta_{i,j;0}$ and equal $\beta_{i,j;1}$, %
    determined by Lemma~\ref{lem:ran:beta}.}
  \label{fig:mtcars01:mar}
\end{figure}
To easily see the influence of MAR missingness, the marginal
missingness probabilities $p_{i,j}$ are chosen as $p_{i,1}=0$ (no
missingness) and, for $j=2,\dots,11$,
$p_{i,j}\in [1/3-0.05, 1/3+0.05]$ (top row) and
$p_{i,j}\in [0.001, 0.999]$ (bottom row). Each $p_{i,j}$ depends on
$Y_{i,1}$ via~\eqref{eq:regr} for equal $\beta_{i,j;0}$ and equal
$\beta_{i,j;1}$ (only two coefficients), determined by
Lemma~\ref{lem:ran:beta}. In this case, the sets $I_{i,j}$
in~\eqref{eq:regr} are $I_{i,1}=\emptyset$ and
$I_{i,2}=\ldots=I_{i,11}=\{1\}$ for all $i=1,\dots,32$. In the top
row, we see how the dependence changes in terms of $\rho$, but the
effect of MAR is not particularly strong. Although $Y_{20,1}=1$ is
large, $M_{20,j}=1$ only for few $j$. In the bottom row, where
$p_{i,j}\in [0.001, 0.999]$ (with larger values indicating larger
missingness probabilities), we can clearly
see that missingness patterns for variables $Y_{i,2},\dots,Y_{i,11}$
appear more often if $Y_{i,1}$ is large.

We also see the effect on the implied dependence the probabilities
$p_{i,2},\dots,p_{i,11}$ have. Large $Y_{i,1}$ lead to large
$p_{i,2},\dots,p_{i,11}$ and a high probability that multiple
$Y_{i,2},\dots,Y_{i,11}$ are amputed, thus creating a missingness
pattern across columns $2$ to $11$ even in the case of
$\rho=0$; see Proposition~\ref{prop:joint:missingness:prob} where we
addressed this effect of a large probability of simultaneous missingness
(including the extreme cases $p_{i,j}\in\{0,1\}$).

Similarly, the plots in Figure~\ref{fig:mtcars01:mnar} show MNAR
missingness patterns for the same $\rho$ values as before and with
$p_{i,j}\in [1/3-0.05, 1/3+0.05]$ (top and middle row) and
$p_{i,j}\in [0.001, 0.999]$ (bottom row).
\begin{figure}[htbp]
  \centering
  \includegraphics[width=0.32\textwidth]{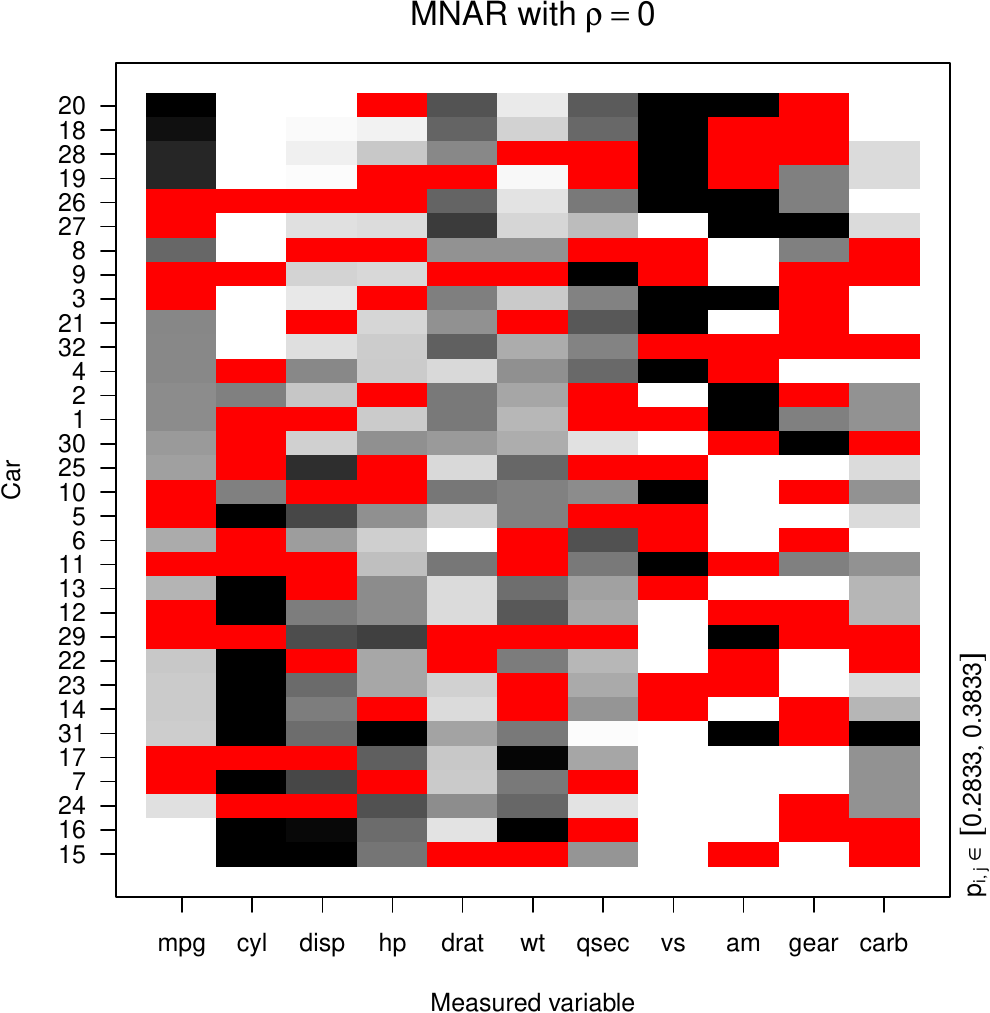}
  \hfill
  \includegraphics[width=0.32\textwidth]{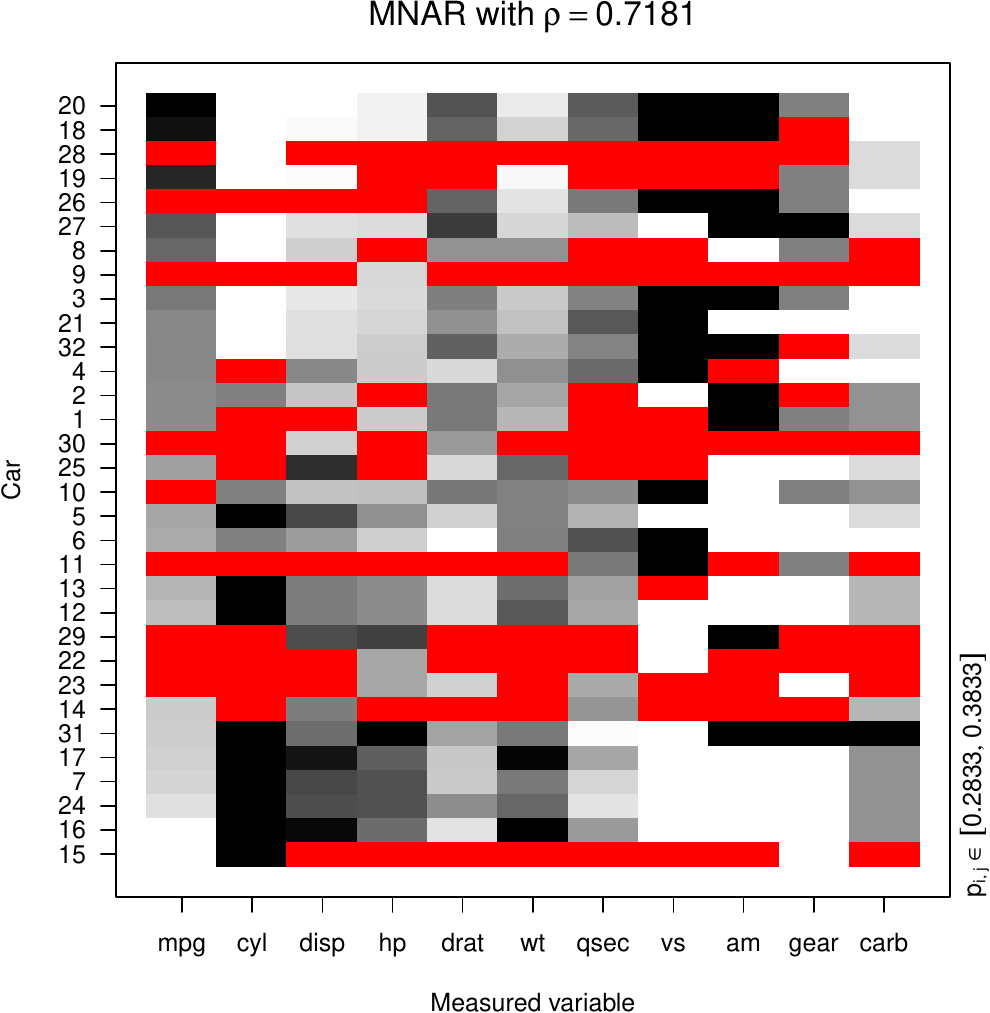}
  \hfill
  \includegraphics[width=0.32\textwidth]{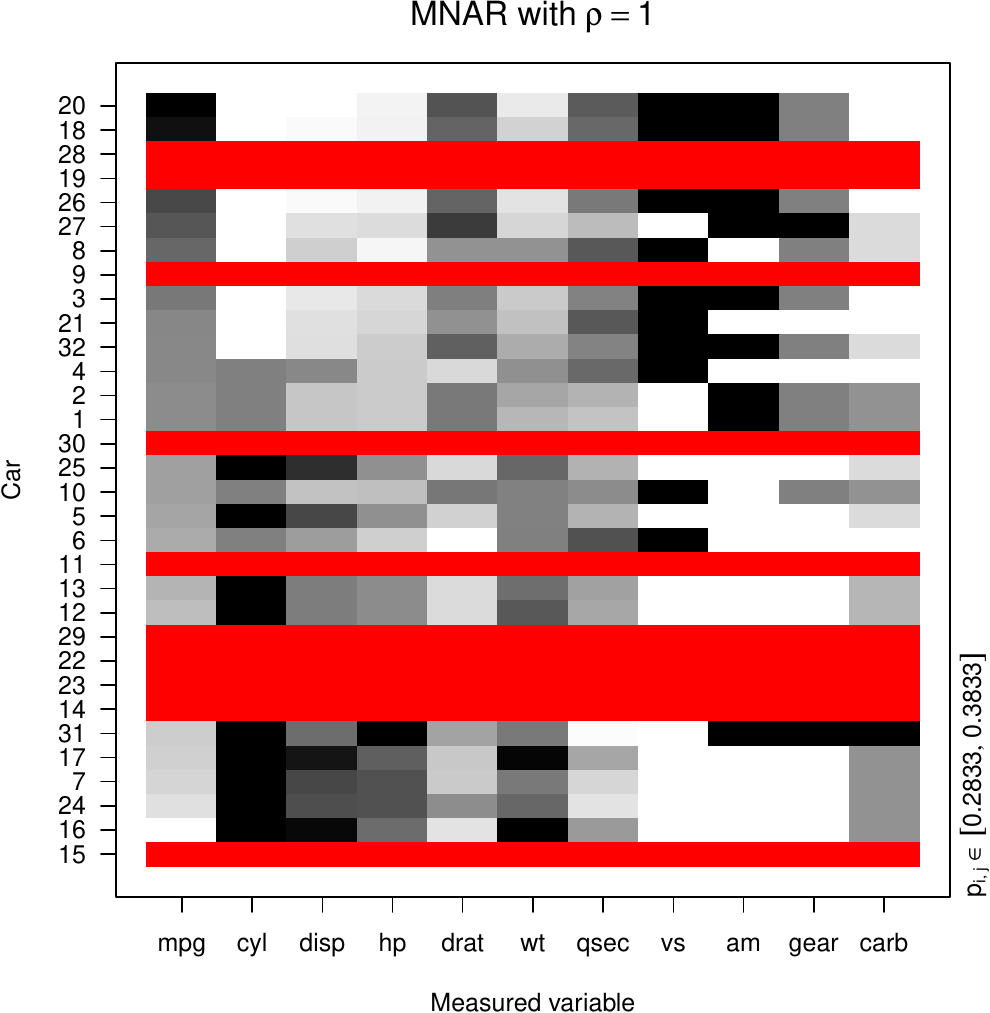}\\[4mm]
  \includegraphics[width=0.32\textwidth]{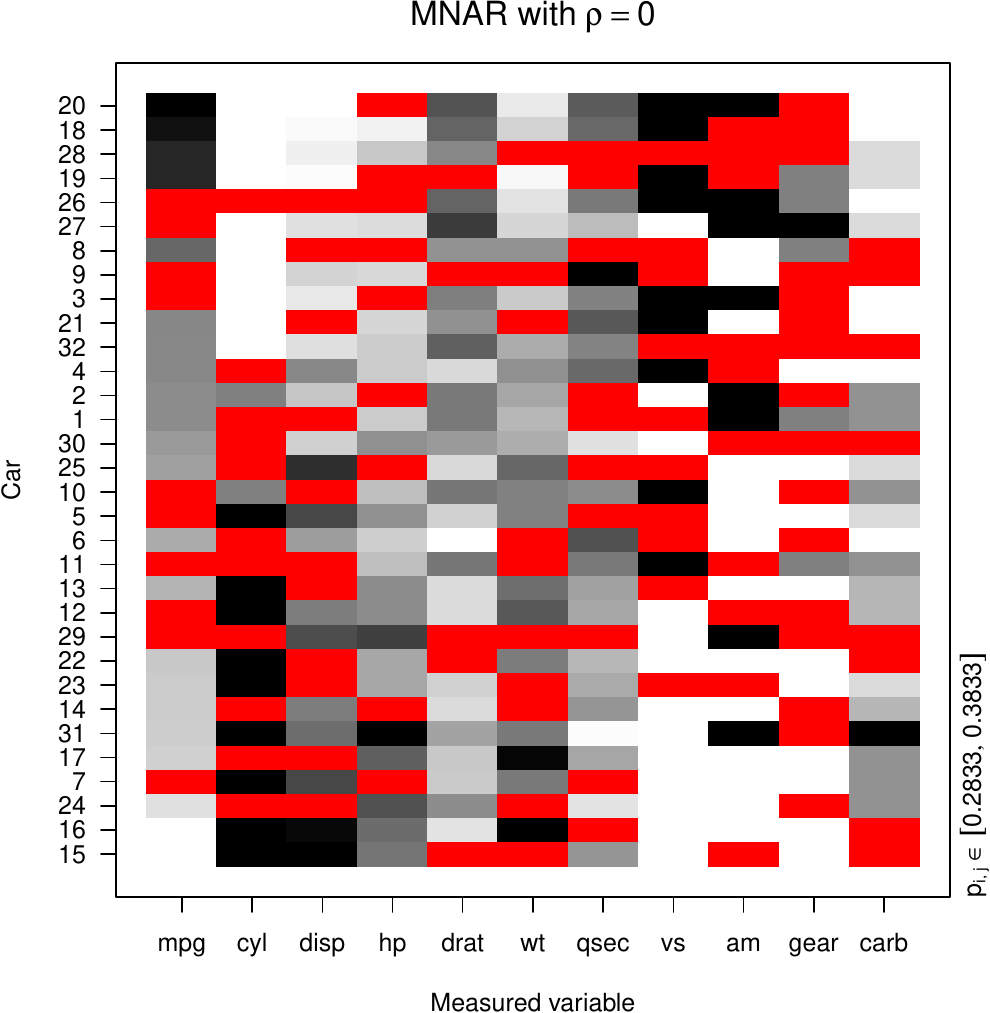}
  \hfill
  \includegraphics[width=0.32\textwidth]{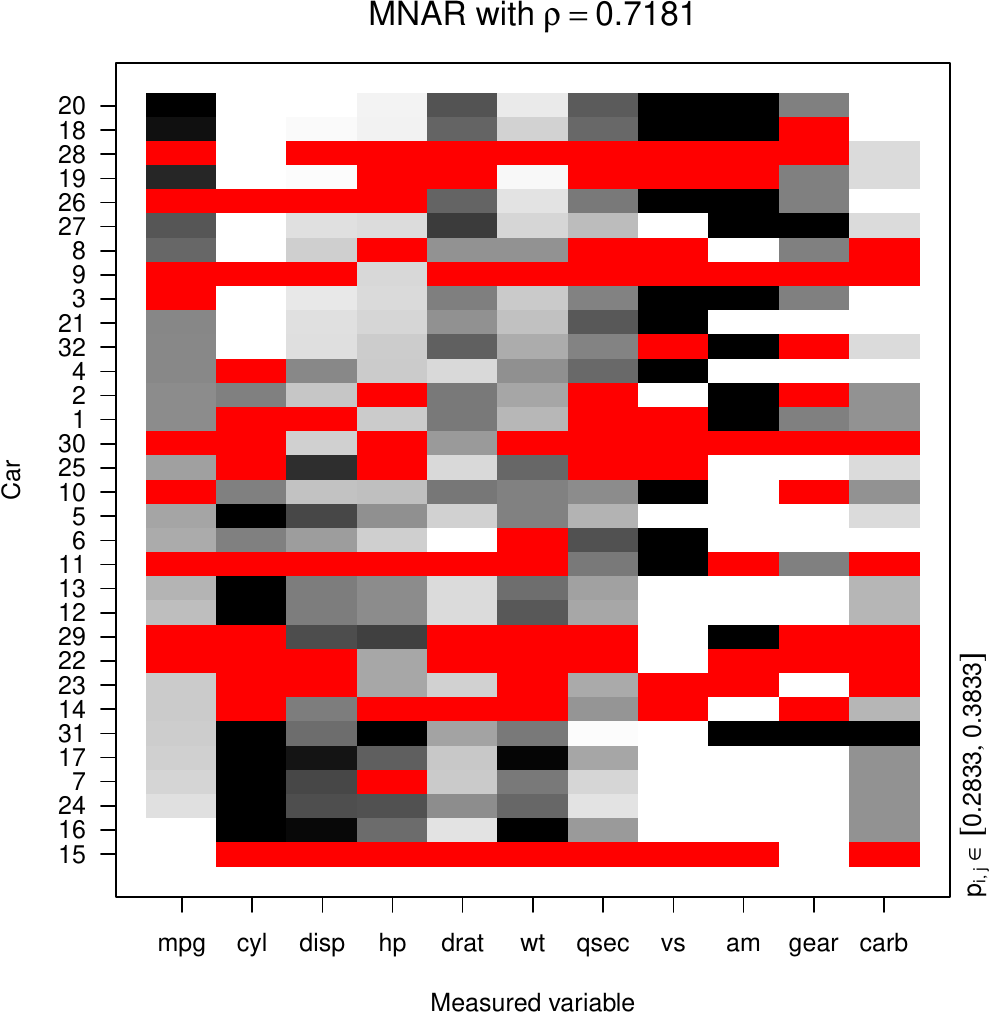}
  \hfill
  \includegraphics[width=0.32\textwidth]{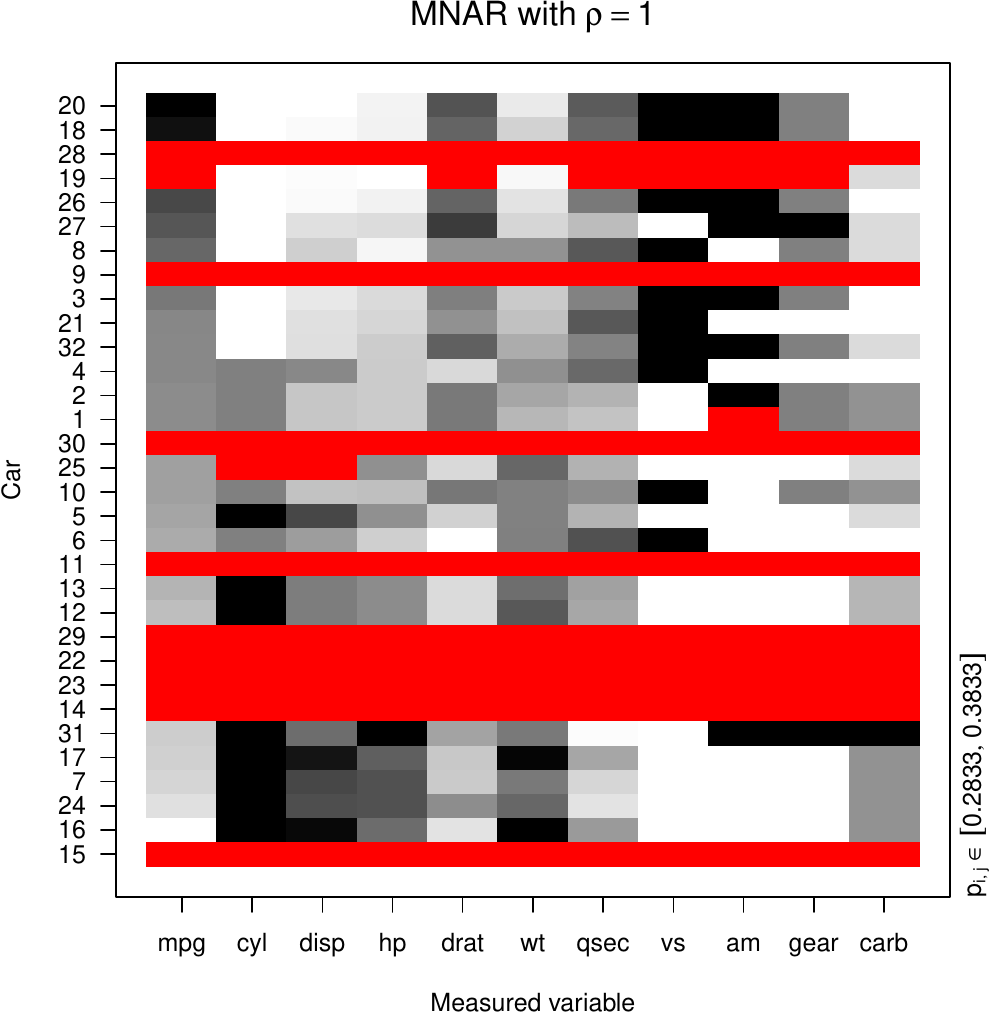}\\[4mm]
  \includegraphics[width=0.32\textwidth]{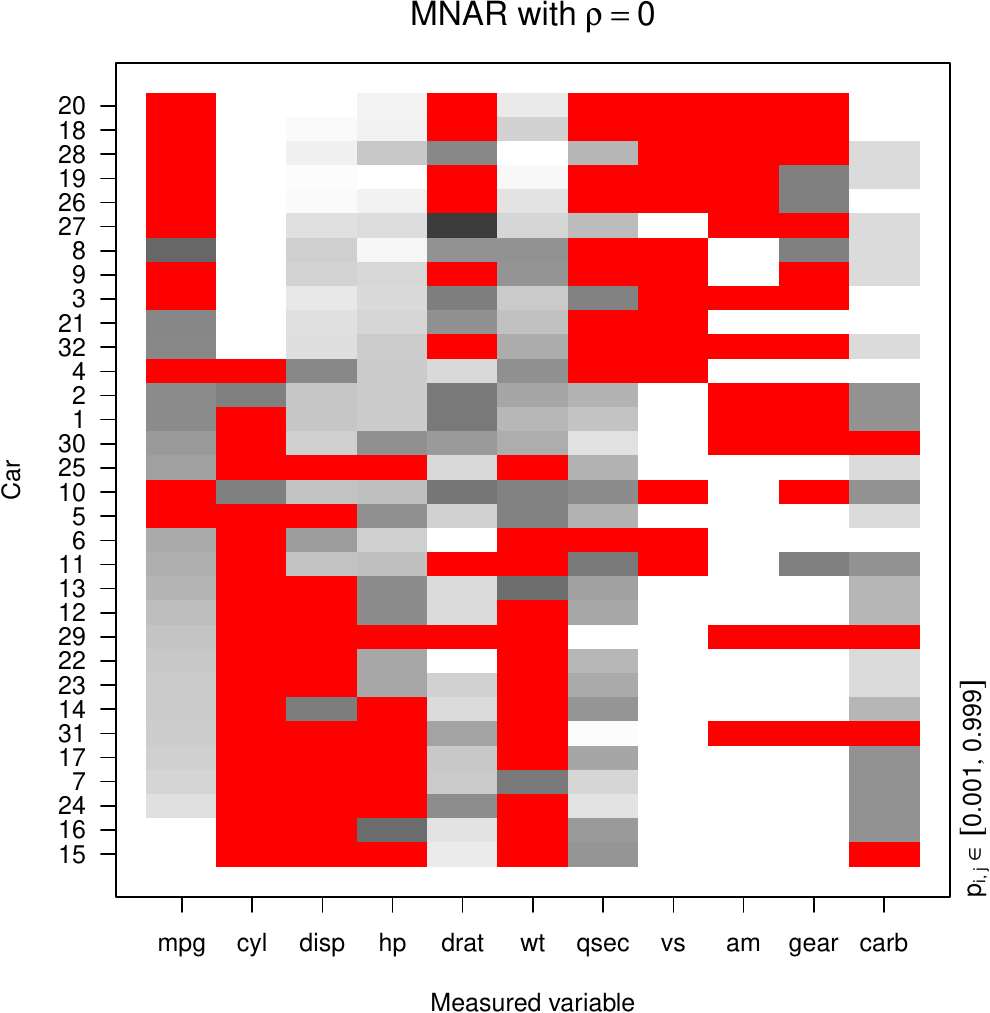}
  \hfill
  \includegraphics[width=0.32\textwidth]{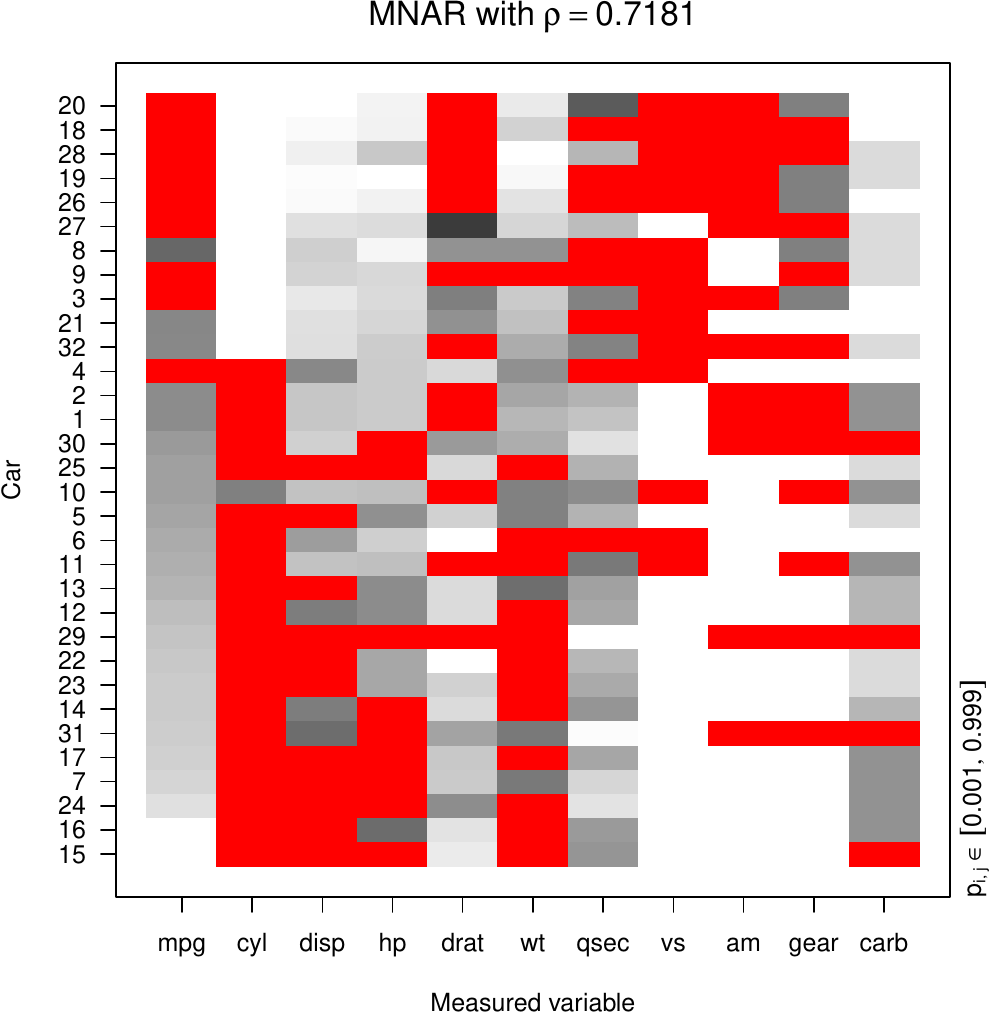}
  \hfill
  \includegraphics[width=0.32\textwidth]{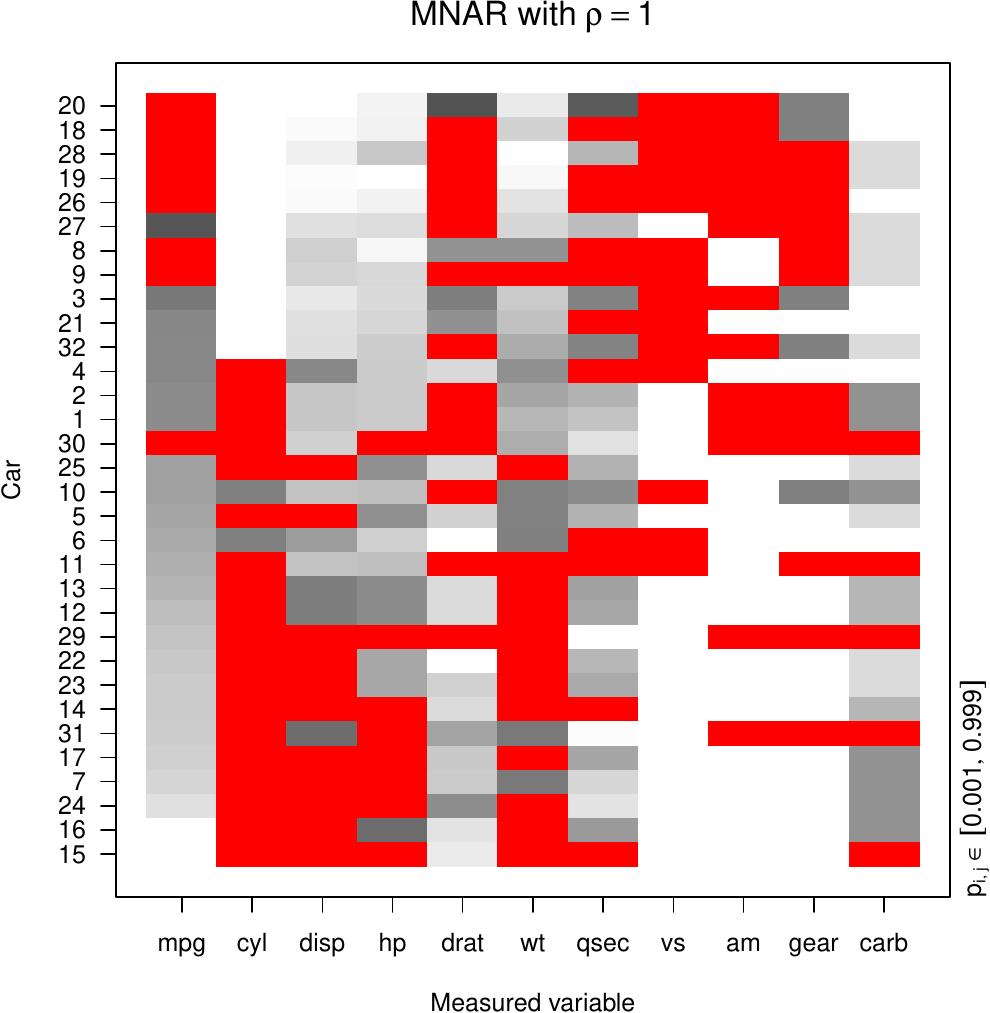}
  \caption{MNAR amputed \texttt{mtcars01} dataset according to
    Algorithm~\ref{alg:Bern:amp:A3} with different strengths $\rho$
    of dependence of the underlying Gauss copula $C^{\text{Ga}}_\rho$
    (left, center, right) as in Figure~\ref{fig:mtcars01:mcar}.
    For all $i=1,\dots,32$ and $j=1,\dots,11$, the marginal
    missingness probabilities $p_{i,j}$ are
    $p_{i,j}\in [1/3-0.05, 1/3+0.05]$ (top and middle row) and
    $p_{i,j}\in [0.001, 0.999]$ (bottom row), where $p_{i,j}$ depends
    on $Y_{i,1},\dots,Y_{i,11}$ (group-MNAR; top row) or $Y_{i,j}$
    (self-MNAR; middle and bottom row) via~\eqref{eq:regr} for
    equal $\beta_{i,j;0}$ and equal
    $\beta_{i,j;1}$, %
    determined by Lemma~\ref{lem:ran:beta}.}
  \label{fig:mtcars01:mnar}
\end{figure}
The top row shows group-MNAR missingness, where each
$Y_{i,1},\dots,Y_{i,11}$ contributes equally to $p_{i,j}$
via~\eqref{eq:regr}. As before, we chose equal $\beta_{i,j;0}$ and
equal $\beta_{i,j;1}$ (only two coefficients), determined by
Lemma~\ref{lem:ran:beta}; in this case, the sets $I_{i,j}$
in~\eqref{eq:regr} are $I_{i,1}=\ldots=I_{i,11}=\{1,\ldots,11\}$ for
all $i=1,\dots,32$. The middle and bottom row show self-MNAR
missingness, again with equal $\beta_{i,j;0}$ and equal
$\beta_{i,j;1}$; in this case, $I_{i,j}=\{j\}$ for all $i=1,\dots,32$,
$j=1,\ldots,11$. Owing to the requirement
$p_{i,j}\in [1/3-0.05, 1/3+0.05]$ in the top and middle row, the
self-MNAR vs.\ group-MNAR effect is not very pronounced. Similarly, as
in the MAR case, the bottom row allows for
$p_{i,j}\in [0.001, 0.999]$, and we immediately see the self-MNAR
effect as most dark cells are set to missing now at the expense of the
influence of the dependence (parameter $\rho$); see
Section~\ref{sec:non:unique}. %

\subsection{Application to stress testing of risk measures}\label{sec:app:VaR:ES}
We now turn to an application from the realm of quantitative risk management;
see \cite{mcneilfreyembrechts2015} for more details on the presented
concepts. Financial firms are required to hold a certain amount of capital,
known as \emph{risk capital}, to absorb potential future losses on exposures
subject to market, credit, and operational risk. A \emph{loss distribution} is
the distribution function $F_L$ of the (random) loss $L$ the firm potentially
faces over a predetermined future time horizon. A \emph{risk measure} maps a
loss distribution to a real number, interpreted as the amount of capital to put
aside to account for future losses. Two widely used risk measures are
\emph{value-at-risk} at \emph{level} $\alpha\in(0,1)$, given by
$\VaR_\alpha=F_L^{-1}(\alpha)=\inf\{x\in\IR:F_L(x)\ge\alpha\}$, and
\emph{expected shortfall} at \emph{level} $\alpha\in(0,1)$, given by
$\ES_\alpha=\frac{1}{1-\alpha}\int_{\alpha}^1 F_L^{-1}(u)\,\rd u$.  A financial
firm is naturally interested in reducing its risk capital, as holding any amount
beyond the unknown amount of what is actually needed is expensive and could
instead be used to make investments.

As a concrete setup, suppose a financial firm holds a portfolio exposed to
market risk, with its components subject to losses from market movements. The
firm is therefore naturally interested in the effect that avoiding large losses
has on its risk capital estimates. To investigate this effect, we consider
$d=434$ constituents of the S\&P~500 available in the \R\ package
\texttt{qrmdata} from 2001-01-01 to 2015-12-31; the components are selected to
have less than 10\% missing data and completed by linear interpolation (and
extension at the boundaries). To handle serial dependence, $n=3772$ negative
logarithmic returns are formed, componentwise deGARCHed by fitting
$\ARMA(1,1)-\GARCH(1,1)$ models with standard normal
innovations, %
and the standardized residuals extracted. The standardized residuals serve as
iid observations for the firm's analysis. The population loss to be modeled here is
$L=\sum_{j=1}^d Z_j$, where $Z_j$ denotes a random variable from the innovation
distribution of the $j$th component, $j=1,\dots,d$. The sample losses are
$L_i=\sum_{j=1}^d Z_{i,j}$, $i=1,\dots,n$, where $Z_{i,j}$ are the standardized residuals
obtained from deGARCHing. Based on $L_1,\dots,L_n$, $\VaR_\alpha$ and $\ES_\alpha$
can be nonparametrically estimated via
\begin{align*}
  \widehat{\VaR}_\alpha &= \hat{F}_{L,n}^{-1}(\alpha) = L_{(\lceil n\alpha\rceil)},\\
  \widehat{\ES}_\alpha  &= \frac{1}{n-\lceil n\alpha\rceil}\sum_{i=\lceil n\alpha\rceil+1}^n L_{(i)},
\end{align*}
where $\hat{F}_{L,n}$ denotes the empirical distribution function of $L_1,\dots,L_n$ and
$L_{(1)},\dots,L_{(n)}$ the order statistics of $L_1,\dots,L_n$.

To study the effect of a reduced capital requirement, the firm introduces
missingness into the standardized residuals $Z_{i,j}$ (effectively replacing
selected ones by $0$ for estimating $\VaR_\alpha$, $\ES_\alpha$ here). For each
$i=1,\dots,n$ such that $L_i>\hat{F}_{L,n}^{-1}(\alpha)$, so for each joint
observation such that the corresponding total loss takes a realization in its
right tail, we replace $Z_{i,j}$ in the above formulas by
$Z_{i,j}'=Z_{i,j}\I_{\{U_{i,j}>p\}}$, where $U_{i,j}\isim\U(0,1)$ and the
probability of missingness $p$ runs in
$p\in\{\frac{k}{200}:k=0,\dots,200\}$. The corresponding total losses are
then $L_i'=\sum_{j=1}^d Z_{i,j}'$, $i=1,\dots,n$. Computing $\widehat{\VaR}_\alpha$
and $\widehat{\ES}_\alpha$ from $L_1',\dots,L_n'$ allows us to gradually study, as $p$
increases, the decrease in capital requirements; in the context of Bernoulli
amputation, this corresponds to an MNAR case of degree $d$.

Figure~\ref{fig:SP500:VaR:ES:99} shows plots of $\widehat{\VaR}_\alpha$ and
$\widehat{\ES}_\alpha$ computed from $L_1',\dots,L_n'$ as functions of $p$.
\begin{figure}[htbp]
  \centering
  \includegraphics[width=0.48\textwidth]{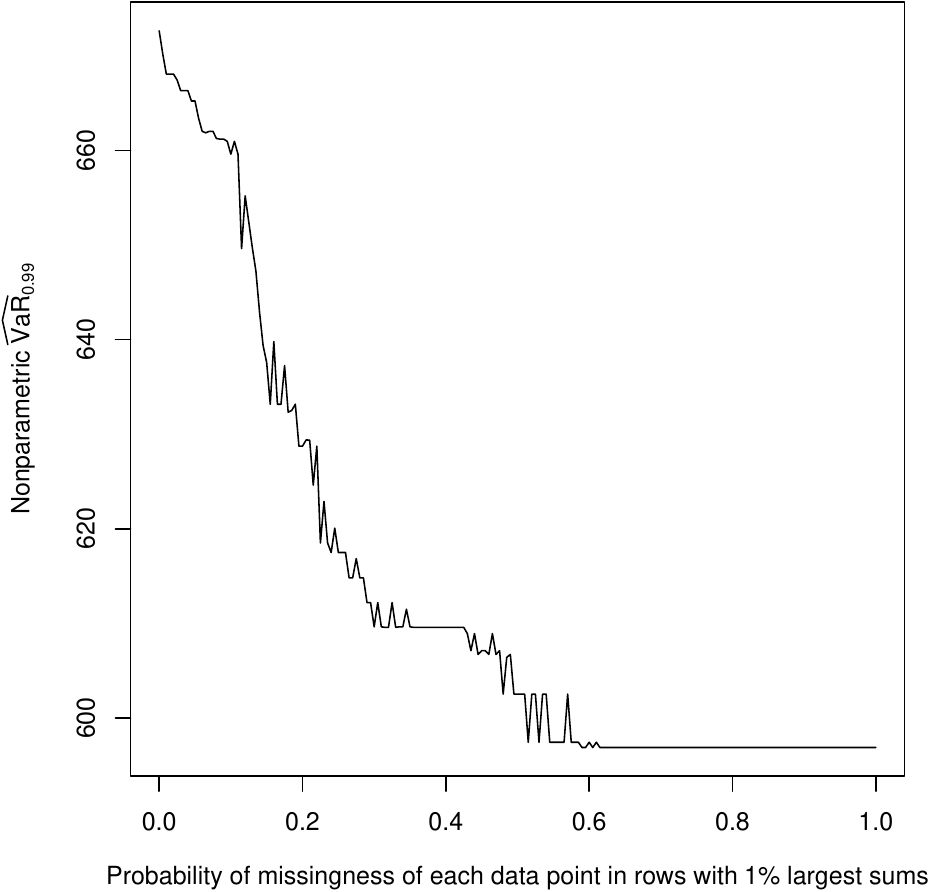}
  \hfill
  \includegraphics[width=0.48\textwidth]{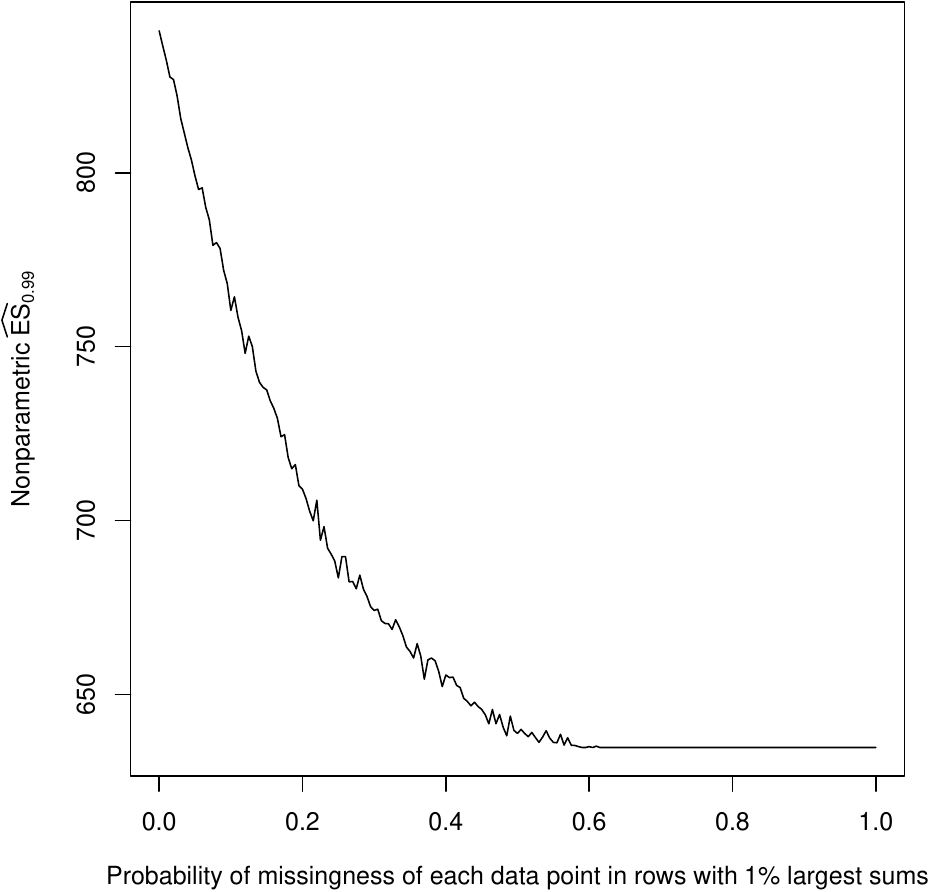}
  \caption{Effect on nonparametric $\VaR_{0.99}$ (left) and $\ES_{0.99}$ (right)
    estimates when varying the probability of MNAR missingness of
    entries %
    in rows with large sums (in their 1\% tail) of standardized residuals after
    deGARCHing of $n=3772$ negative log-returns (from 2001-01-01 to 2015-12-31) of
    $d=434$ constituents of the S\&P~500 dataset \texttt{SP500\_const} of the \R\
    package \texttt{qrmdata}.}
  \label{fig:SP500:VaR:ES:99}
\end{figure}
We can clearly see the effect on the risk capital estimates when individual
losses leading to larger total losses are gradually omitted more and more.
We also see that the drop is more substantial (25\% vs 11\%) for $\widehat{\ES}_{0.99}$
than for $\widehat{\VaR}_{0.99}$, since $\ES_{0.99}$ (as integral over the tail region)
is more heavily affected by changes of the distribution in the tail. Finally,
let us note again, that such a missingness pattern as used in this application cannot
be replicated with existing amputation approaches.

\section{Conclusion}\label{sec:concl}
We introduced Bernoulli amputation, a stochastic approach to amputation based on
Bernoulli margins and copula dependence, for generating multivariate missingness
in complete datasets. At its core, Bernoulli amputation allows to construct a
dependent missingness indicator matrix $M\in\{0,1\}^{n\times d}$ via copulas and
Bernoulli $\B(1,p_{i,j})$ distributed $(i,j)$th margins, $i=1,\dots,n$,
$j=1,\dots,d$.

Besides its ability to capture MCAR, MAR, and MNAR missingness patterns, one
main advantage of Bernoulli amputation is to allow for structured missingness in
a principled manner. Random monotone missingness patterns can also be covered,
namely via mixtures. The stochastic nature of Bernoulli amputation can be
particularly useful to apply in simulation studies to evaluate and assess
imputation methods' effectiveness in repeated experiments under the same
distributional assumptions but with varying missingness patterns; especially
under multiple amputation, an amputer cannot easily construct multiple
missingness matrices manually such that they can be viewed as realisations of
the same distribution of $M$ given $Y$. Also, Bernoulli amputation allows the
amputer to focus on structural decisions (dependence, margins,
who-influences-whom under MAR and MNAR) rather than many specific (and thus
stochastically non-exhausting) missingness patterns. Furthermore, Bernoulli
amputation is straightforward to implement with already existing software and we
demonstrated the approach in terms of examples and illustrations based on
a publicly available dataset.

We could also derive mathematical quantities such as joint missingness
probabilities of any collection of entries in $M$, as well as their correlation,
under Bernoulli amputation. These results can be helpful in setting up
reasonable parameter choices for Bernoulli amputation. Besides that, we
recommend to conduct pilot studies based on manageable sample sizes and plot the
corresponding $M$ in order to tune the distribution of $M$ to produce
missingness patterns of interest.

\appendix
\section{Appendix}\label{sec:appendix}

\subsection{Missingness mechanisms}\label{sec:missingness:mechanisms}
\subsubsection{The classical concepts of MCAR, MAR, and MNAR}
According to \cite{Rubin1976}, \cite{Seaman2013},
\cite{meallirubin2015}, and \cite[Section~1.3]{LittleRubin2020},
missingness mechanisms are often categorised as \emph{missing completely
  at random (MCAR)}, \emph{missing at random (MAR)}, and \emph{missing
  not at random (MNAR)}.  Analytically, based on the conditional
probability mass function
$f_{M|Y}(m\,|\,y;\,\bm{\theta})=\P(M=m\,|\,Y=y;\ \bm{\theta})$ of $M$
given $Y=y$ and depending on a parameter vector
$\bm{\theta}\in\Theta\subseteq\IR^r$, the following definitions of the
missingness mechanisms MCAR, MAR, and MNAR capture the nature of the
relationship between $Y$ and $M$, leading to $X$ via
$X=\ast\,M + Y\odot(1-M)$.%
\begin{definition}[MCAR missingness]\label{def:MCAR}
  The missingness mechansims is \emph{missing completely at random (MCAR)}
  if $f_{M|Y}(m\,|\,y;\,\bm{\theta})=f_{M|Y}(m\,|\,y';\,\bm{\theta})$
  for all $m, \bm{\theta}, y, y'$.
\end{definition}
By Definition~\ref{def:MCAR}, MCAR assumes that the distribution of
$M\,|\,Y=y$ remains invariant under the choice of realisation $y$ of
$Y$, so the distribution of $M$ does not depend on $Y$.

\begin{definition}[MAR missingness]\label{def:MAR}
  The missingness mechanism is \emph{missing at random (MAR)}
  if %
  $f_{M|Y}(m\,|\,y;\,\bm{\theta})=f_{M|Y}(m\,|\,y';\,\bm{\theta})$ for
  all $m, \bm{\theta}, y, y'$ such that
  $x^{\text{obs}}=x'^{\,\text{obs}}$.
\end{definition}
MAR assumes that the distribution of $M\,|\,Y=y$ remains invariant
under the choice of realisation $y$ of $Y$ as long as the observed
part $x^{\text{obs}}$ remains the same, so the distribution of
$M\,|\,Y=y$ does not depend on the missing part $x^{\text{mis}}$ (but
can, and typically does, depend on the observed part
$x^{\text{obs}}$).

Adapted to our notation, the definition of MAR in
\cite[Equation~(1.2)]{LittleRubin2002} involved the equation
$f_{M|Y}(m\,|\,y;\bm{\theta})=f_{M|Y}(m\,|\,y^{\text{obs}};\bm{\theta})$, %
which was mentioned by \cite{Seaman2013} for its recursive nature
since $M$ itself is influenced by
$Y^{\text{obs}}$; %
the definition was later changed in
\cite[Equation~(1.2)]{LittleRubin2020}.

\begin{definition}[MNAR missingness]\label{def:MNAR}
  The missingness mechanism is \emph{missing not at random (MNAR)}
  if %
  there exist $m,\bm{\theta},y,y'$ with
  $x^{\text{obs}}=x'^{\,\text{obs}}$ such that
  $f_{M|Y}(m\,|\,y;\,\bm{\theta})\neq
  f_{M|Y}(m\,|\,y';\,\bm{\theta})$.
\end{definition}
MNAR assumes that the distribution of $M\,|\,Y=y$ depends on
the choice of realisation $y$ of $Y$, in particular also on the
missing part $x^{\text{mis}}$. %

\begin{remark}[``everywhere'' vs.\ ``realised'']\label{rem:everywhere:vs:realised}
  The literature contains different ways to state
  Definitions~\ref{def:MCAR}, \ref{def:MAR}, and \ref{def:MNAR}.  Our
  MCAR definition coincides with the ``everywhere MCAR'' definition of
  \cite[Definition~5]{Seaman2013} and the ``missing always completely
  at random (MACAR)'' definition of
  \cite[Definition~4]{meallirubin2015} or
  \cite[Equation~18]{little2021}. The classic definition of MCAR of
  \cite{Rubin1976} or \cite[Equation~17]{little2021}, named ``realised
  MCAR'' in \cite[Definition~4]{Seaman2013}, is that for fixed $m$
  (instead of all $m$), %
  $f_{M|Y}(m\,|\,y;\,\bm{\theta})=f_{M|Y}(m\,|\,y';\,\bm{\theta})$ for
  all $\bm{\theta}, y, y'$. \cite[Equation~1.1]{LittleRubin2020} also
  follow this definition of MCAR but only define it under~\ref{A1} and
  omit explicitly stating that the equality has to hold for all
  $\bm{\theta}$.

  Similarly, our MAR definition coincides with the ``everywhere MAR''
  definition of \cite[Definition~2]{Seaman2013}, and the ``missing
  always at random (MAAR)'' definition of
  \cite[Definition~2]{meallirubin2015} or
  \cite[Equation~9]{little2021}. The classical definition of MAR of
  \cite{Rubin1976} or \cite[Equation~3]{little2021}, called ``realised
  MAR'' in \cite[Definition~1]{Seaman2013}, is that for fixed $m$ and
  $y'$ (instead of all $m$ and
  $y'$), %
  $f_{M|Y}(m\,|\,y;\,\bm{\theta})=f_{M|Y}(m\,|\,y';\,\bm{\theta})$ for
  all $\bm{\theta}, y$ such that $x^{\text{obs}}=x'^{\,\text{obs}}$;
  \cite[Definition~1]{meallirubin2015} and
  \cite[Equation~3]{little2021} also follow this definition of MAR.

  Neither \cite{Seaman2013} nor \cite{little2021} seem to provide an explicit
  definition of MNAR, understanding it as ``not MAR''. In terms of
  Definition~\ref{def:MNAR}, MNAR is the logical negation of
  Definition~\ref{def:MAR}. Similarly,
  \cite[Definition~5]{meallirubin2015}, translated to our notation,
  define MNAR as there exist $\bm{\theta},y,y'$ with
  $x^{\text{mis}}\neq x'^{\,\text{mis}}$ such that
  $f_{M|Y}(m\,|\,y;\,\bm{\theta})\neq
  f_{M|Y}(m\,|\,y';\,\bm{\theta})$, %
  which is equivalent to
  ours. %

  As the ``realised'' versions are more conducive to imputation (fixed $M=m$),
  we work with the ``everywhere'' concept, having multiple amputation in mind;
  \cite{Seaman2013} also mention that ``Many other authors [...] have used
  `MCAR' to mean everywhere MCAR.''
\end{remark}

\cite{little2021} uses the terms \emph{unit MCAR}, \emph{unit MAR},
and \emph{unit MNAR} to refer to MCAR, MAR, and MNAR under~\ref{A1},
in which case the defining equalities of conditional probability mass
functions have to hold for every row $i=1,\dots,n$. In general, this
assumption is not necessary.

\subsubsection{The concept of structured missingness}\label{app:SM:def}
Real-world datasets often exhibit \emph{structured missingness (SM)}, e.g.,
missingness mechanisms arising from non-random sampling or data linkage; see
\cite{Schouten2018}, %
\cite{Mitra2023} and \cite{Jackson2023}. SM is an umbrella term covering a range
of missingness mechanisms, revolving around the following two notions:
\begin{enumerate}[label={(SM\arabic*)}, labelwidth=\widthof{(SM2)}]
\item\label{SM1} multivariate missingness, where missingness in at
  least one variable of a row influences missingness in other
  variables of the same row;
\item\label{SM2} deterministic missingness, where data (often blocks)
  are almost surely missing or almost surely not missing.
\end{enumerate}
Assumptions made on the columns of $M$ include
the following (unstructured) assumption~\ref{U}:
\begin{enumerate}[label=(U), labelwidth=\widthof{(U)}]
\item\label{U} The columns $\bm{M}_{,1},\dots,\bm{M}_{,d}$ of $M$ are
  independent given
  $Y$. %
\end{enumerate}
Assumption~\ref{U} leads to the following definition of unstructured and
structured missingness concepts related to MCAR, MAR, and MNAR; see also
\cite{Jackson2023}. ``Unstructured'' here is rather a misnomer, though, as $M$, and
thus $X$, can show a lot of structure (also block missingness) if the respective
(block of) $p_{i,j}$'s are all $1$, irrespective of whether \ref{U} is satisfied
or not; see Section~\ref{sec:non:unique}.
\begin{definition}[Unstructured and structured missingness]
  \begin{enumerate}
  \item
  The missingness mechanism is \emph{missing completely at random
    unstructured (MCAR-U)} if it is MCAR and satisfies~\ref{U},
  otherwise it is \emph{missing completely at random structured
    (MCAR-S)}.
  \item
  The missingness mechanism is \emph{missing at random unstructured
    (MAR-U)} if it is MAR and satisfies~\ref{U}, otherwise it is
  \emph{missing at random structured (MAR-S)}.
  \item
  The missingness mechanism is \emph{missing not at random unstructured
    (MNAR-U)} if it is MNAR and satisfies~\ref{U}, otherwise it is
  \emph{missing not at random structured
    (MNAR-S)}. %
  \end{enumerate}
\end{definition}

Drop-outs in longitudinal studies produce monotone missingness in the
following sense; see \cite{libaccinimealli2014} and
\cite[Section~4.1.1]{vanBuuren2018}.
\begin{definition}[Monotone missingness]\label{def:monotone:miss}%
  \emph{Monotone missingness} patterns are of the form
  $M=(\I_{\{j>j_i\}})_{i=1,\dots,n,\ j=1,\dots,d}$, %
  where $j_i\in\{0,1,\dots,d\}$ for $i=1,\dots,n$.
\end{definition}
We included the case $j_i=0$ to allow for completely missing rows in
the resulting $X$; see Figure~\ref{fig:mtcars01:monotone}. The concept
of monotone missingness could be extended to \emph{locally monotone
  missingness}, where there are $j_{i,1}<j_{i,2}$ such that
$M_{i,j_{i,1}}=\ldots=M_{i,j_{i,2}}=1$ almost surely, which includes
\emph{eventually monotone missingness} where $j_{i,2}=d$. Monotone
missingness is SM and falls under~\ref{SM1}. With
$p_{i,j}=\I_{\{j> j_i\}}$, $i=1,\dots,n$, $j=1,\dots,d$, monotone
missingness also falls under~\ref{SM2}, %
and we obtain that $M$ does not depend on the underlying dependence
structure; see also Section~\ref{sec:non:unique}. However, this $P=(p_{i,j})$
only provides one deterministic monotone missingness pattern, so is degenerate
in the distributional sense.

Suppose an amputer wants to simulate random monotone missingness
patterns $M\in\{0,1\}^{n\times d}$ where, in each row $i$,
$j_i\in\{0,1,\dots,d\}$ is chosen randomly and independently of the
other rows. We can set up a mixture over (a subset of or all) monotone
missingness patterns to model $M$. This allows for the straightforward
and easy-to-simulate stochastic representation
$M=(\I_{\{j>\lceil(d+1)U_i\rceil-1\}})_{i=1,\dots,n,\ j=1,\dots,d}$
for $U_1,\dots,U_n\isim\U(0,1)$. As in many areas of statistics,
mixtures can make a model more versatile; here in the context of
monotone missingness patterns. The assignment of probability $1/(d+1)$
to each of the $d+1$ monotone missingness patterns can be
generalised to any discrete distribution on $\{0,1,\dots,d\}$ (see
Section~\ref{sec:vis:SM} for an example), potentially even depending
on $Y$, being dependent on the row number $i$, or being dependent
across different rows (see Section~\ref{sec:main}).

\subsection{On existing amputation approaches}\label{sec:lit}

\subsubsection{Univariate amputation}\label{sec:uni:amp}
\emph{Univariate amputation} focuses on introducing missingness in one of the
$d$ (as opposed to possibly $nd$) variables, say in the $j$th. Missingness in
multiple columns can be achieved one variable at a time %
(\emph{stepwise univariate amputation}), each based on $Y$. %
Univariate amputation is often implemented based on a logistic regression equation
as in~\eqref{eq:regr} under assumptions to simplify the choices of coefficients;
see also \cite{White2010}, \cite{Hu2013}, \cite{Miao2016}, \cite{Schouten2018}.

If one assumes that $I_{1,j}=\ldots=I_{n,j}=I_j$ for each $j=1,\dots,d$, then
each row $i$ has the same variables $(Y_{i,k})_{k\in I_j}$ that influence
$p_{i,j}$. In this case the amputer has to specify
$\bm{\beta}_{1,j}=\ldots=\bm{\beta}_{n,j}=\bm{\beta}_j$, $j=1,\dots,d$, and so
up to $d(d+1)$ %
coefficients. Under MNAR, one could further assume that
$\bm{\beta}_1=\ldots=\bm{\beta}_d=\bm{\beta}$. This is not possible under MAR, though,
because whatever $\bm{\beta}$ is, there exists at least one component
$j\in\{1,\dots,d\}$ such that $\bm{\beta}$ puts weight on $Y_{i,j}$, which is
not allowed under MAR.

A different assumption could be that for each row $i$, $|I_{i,j}|=1$
(degree $1$), so that $p_{i,j}$ only depends on one variable. In this
case, the amputer has to specify $n$ coefficients. Under MAR, this
would need to be $Y_{i,k}$ for $k\neq j$ (for each $i$,
$I_{i,j}=\{k_i\}\subseteq\{1,\dots,d\}$ for some $k_i\neq j$). Under
MNAR, $p_{i,j}$ would need to only depend on $Y_{i,j}$ (for each $i$,
$I_{i,j}=\{j\}$; this is self-MNAR). In the MNAR case,
\eqref{eq:regr} reads
$p_{i,j} =\logiti(\beta_{i,j;0}+\beta_{i,j;j} Y_{i,j})$ for all
$i=1,\dots,n$, $j=1,\dots,d$. Combining this with the aforementioned
assumption of using the same coefficients per row gives
$p_{i,j} =\logiti(\beta_{j,0}+\beta_{j,j} Y_{i,j})$ for all
$i=1,\dots,n$, $j=1,\dots,d$. If all variables $Y_{i,j}$ are
distributed similarly, then choosing the same coefficients per
variable would lead to having to choose just two coefficients
($\beta_{1,0}=\ldots=\beta_{d,0}=\beta_0$, and
$\beta_{1,1}=\ldots=\beta_{d,d}=\beta_1$).

In practice, amputation of more than one variable is of main interest, but
stepwise univariate amputation typically leads to too small probabilities of
joint missingness through neglect of dependence, as well as an
underrepresentation of missingness structures encountered in real-world
datasets. %
For example, under MCAR, if $p_{i,j}=p$ for all $j$ in row $i$, under univariate amputation,
the probability of the whole row $i$ being missing is $p^d$ due to independence, %
whereas, when incorporating dependence among the entries in the $i$th row, this
probability can be up to $p$; this follows from
Proposition~\ref{prop:joint:missingness:prob}. %

\subsubsection{Multivariate amputation}\label{sec:multi:amp}
\emph{Multivariate amputation} by \cite{Schouten2018} %
based on \cite{Brand1999},
works under Assumptions~\ref{U} and also focuses on
modelling $d$ (as opposed to possibly $nd$) variables. It proceeds by specifying
missingness patterns on groups of rows \emph{a priori}; the missingness patterns
could be devised by considering those present in similar datasets or by expert
opinion, see also those we describe at the end of
Section~\ref{sec:missingness:mechanisms}. The following algorithm summarizes the
procedure.
\begin{algorithm}[Multivariate amputation]\label{alg:scen:approach1}%
  Fix the number $K\in\IN$ of missingness patterns.
  \begin{enumerate}
  \item\label{alg:scen:approach1:patterns} For $k=1,\dots,K$, do: Specify missingness pattern
    $\bm{m}_{k,} \in \{0,1\}^d$ potentially applied to any row
    $\bm{Y}_{1,},\dots,\bm{Y}_{n,}$ of $Y$.
  \item Partition the row numbers $\{1,\dots,n\}$ into sets $I_1,\ldots,I_K$; for
    each $k\in\{1,\dots,K\}$, all rows $\bm{Y}_{i,}$, $i\in I_k$, are candidates
    for receiving missingness pattern $\bm{m}_{k,}$ (the size of $I_k$ relative
    to $n$ specifies the relative frequency of rows $\bm{Y}_{i,}$, $i\in I_k$,
    potentially receiving missingness pattern $\bm{m}_{k,}$).
  \item\label{alg:scen:approach1:permutation} Randomly permute the rows in $Y$.
  \item For $k=1,\dots,K$, do:
    \begin{enumerate}
    \item\label{alg:weight:choice:step} Specify weights
      $\bm{w}_k=(w_{k,0},w_{k,1},\dots,w_{k,d})$ affecting the probability that
      each $\bm{Y}_{i,}$, $i\in I_k$, receives missingness pattern
      $\bm{m}_{k,}$ (or no missingness pattern) %
      according to
      \begin{align}
        \P(\bm{M}_{i,}=\bm{m}_{k,}\,|\,\bm{Y}_{i,}) = \logiti(w_{k,0} + w_{k,1}
        Y_{i,1} + \ldots + w_{k,d} Y_{i,d}). \label{eq:prob}
      \end{align}
    \item\label{alg:weight:choice:step:decision} For each $i\in I_k$,
      do independently: Select $\bm{Y}_{i,}$ to receive missingness
      pattern $\bm{m}_{k,}$ according to the
      probability~\eqref{eq:prob}. %
    \end{enumerate}
  \end{enumerate}
\end{algorithm}
To summarise, block $k$ of $K$ blocks of rows of $Y$ is
assigned missingness pattern $\bm{m}_{k,}\in\{0,1\}^d$, and each row
$\bm{Y}_{i,}$, $i\in I_k$, (each row of block $k$) receives
missingness pattern $\bm{m}_{k,}$ (according to~\eqref{eq:prob}) or
remains complete. Randomly permuting the rows of $Y$ in
Step~\ref{alg:scen:approach1:permutation} guarantees that each row has
some positive chance to receive any of the $K$ missingness
patterns. \cite{Schouten2018} provide qualitative advice on how to
choose the weights. %

The main failing point of multivariate amputation is that manually
specifying missingness patterns $\bm{m}_{1,},\dots,\bm{m}_{K,}$ in
Algorithm~\ref{alg:scen:approach1} is burdensome in high-dimensional
datasets (large $d$), often consequently resulting in a
lack of heterogeneity among missingness patterns. In amputation
algorithms, burden on the amputer is not entirely
avoidable, but the amputer's efforts are better spent on
structural decisions (dependence, margins, who-influences-whom under
MAR and MNAR) rather than designing specific patterns. Moreover, if multiple
amputation is required, an amputer cannot
construct multiple missingness matrices manually that can be viewed as
realisations of the same distribution of $M$ given $Y$.

\subsection{Proofs}\label{sec:proofs}

\begin{proof}[Proof of Proposition~\ref{prop:joint:missingness:prob}]\mbox{}
  \begin{enumerate}
  \item Since $F_{i,j}^{-1}(u)=\I_{(1-p_{i,j},1]}(u)$, the stochastic representation~\eqref{eq:stoch:rep:M} implies
    that
    \begin{align*}
      \P(M=m_S)&=\P(F^{-1}_{i,j}(U_{i,j})=1\ (i,j)\in S,\ F^{-1}_{i,j}(U_{i,j})=0\ (i,j)\in S^c)\\
             &=\P(U_{i,j}>1-p_{i,j}\ (i,j)\in S,\ U_{i,j}\le 1-p_{i,j}\ (i,j)\in S^c)=\Delta_{\left(\tb{\bm{1}-\bm{p}_S,\,\bm{1}}{\bm{0},\,\bm{1}-\bm{p}_{S^c}}\right]}C\\
             &=\P(1-U_{i,j}\le p_{i,j}\ (i,j)\in S,\ 1-U_{i,j}> p_{i,j}\ (i,j)\in S^c)=\Delta_{\left(\tb{\bm{0},\,\bm{p}_{S}}{\bm{p}_{S^c},\,\bm{1}}\right]}\check{C}.
    \end{align*}
    The remaining part of the statement holds as an expectation of a discrete
    distribution.
  \item Since $F_{i,j}^{-1}(u)=\I_{(1-p_{i,j},1]}(u)$, the stochastic representation~\eqref{eq:stoch:rep:M} implies
    that
    \begin{align*}
      \P(M_{i,j}=1\ (i,j)\in S)&=\P(F^{-1}_{i,j}(U_{i,j})=1\ (i,j)\in S)=\P(U_{i,j}>1-p_{i,j}\ (i,j)\in S)\\
                               &=\P(1-U_{i,j}\le p_{i,j}\ (i,j)\in S)=\check{C}_S(\bm{p}_S).
    \end{align*}
    In particular, for $S=\{(i_1,j_1),(i_2,j_2)\}$,
    \begin{align*}
      \P(M_{i_1,j_1}=1,M_{i_2,j_2}=1)&=\check{C}_{(i_1,j_1),(i_2,j_2)}(p_{i_1,j_1},p_{i_2,j_2})\\
                                     &=-1+p_{i_1,j_1}+p_{i_2,j_2}+C_{(i_1,j_1),(i_2,j_2)}(1-p_{i_1,j_1},1-p_{i_2,j_2}).
    \end{align*}
  \end{enumerate}
\end{proof}

\begin{proof}[Proof of Proposition~\ref{pro:pairwise:cor}]
  With stochastic representations
  \begin{align*}
    M_{i_1,j_1}=\I_{(1-p_{i_1,j_1},1]}(U_{i_1,j_1}),\quad M_{i_2,j_2}=\I_{(1-p_{i_2,j_2},1]}(U_{i_2,j_2})
  \end{align*}
  for $(U_{i_1,j_1},U_{i_2,j_2})\sim C_{(i_1,j_1),(i_2,j_2)}$, we
  obtain
  \begin{align*}
    \E(M_{i_1,j_1}M_{i_2,j_2})&=\P(U_{i_1,j_1}>1-p_{i_1,j_1},U_{i_2,j_2}>1-p_{i_2,j_2})\\
                              &=\P(1-U_{i_1,j_1}\le p_{i_1,j_1}, 1-U_{i_2,j_2}\le p_{i_2,j_2})=\check{C}_{(i_1,j_1),(i_2,j_2)}(p_{i_1,j_1},p_{i_2,j_2}).
  \end{align*}
  With $\E(M_{i_k,j_k})=p_{i_k,j_k}$ and
  $\var(M_{i_k,j_k})=p_{i_k,j_k}(1-p_{i_k,j_k})$, $k=1,2$, the result
  follows as stated. The second statement follows by the
  Fr\'echet--Hoeffding bounds theorem, see \cite{frechet1951}, by
  which, pointwise,
  $C^{\text{W}}\le\check{C}_{(i_1,j_1),(i_2,j_2)}\le C^{\text{M}}$,
  and the bounds are attained for the provided copulas $C^{\text{W}}$
  and $C^{\text{M}}$, respectively.
\end{proof}

\begin{proof}[Proof of Lemma~\ref{lem:ran:beta}]
  By~\eqref{eq:regr},
  $p_{i,j}=\logiti(\beta_{i,j;0}+\sum_{k\in I_{i,j}}\beta_{i,j;k} Y_{i,k})$ which
  is, for positive $\beta_{i,j;k}$, $k\in I_{i,j}$, almost surely contained in
  $[\beta_{i,j;0}+c_{\text{min}}\sum_{k\in
    I_{i,j}}\beta_{i,j;k},\beta_{i,j;0}+c_{\text{max}}\sum_{k\in
    I_{i,j}}\beta_{i,j;k}]$. Therefore,
  \begin{align*}
    p_{i,j}\in\biggl[\logiti\biggl(\beta_{i,j;0}+c_{\text{min}}\sum_{k\in I_{i,j}}\beta_{i,j;k}\biggr),\ \logiti\biggl(\beta_{i,j;0}+c_{\text{max}}\sum_{k\in I_{i,j}}\beta_{i,j;k}\biggr)\biggr].
  \end{align*}
  This equals $[p-\eps,p+\eps]$ if and only if $\beta_{i,j;0}$ and
  $\sum_{k\in I_{i,j}}\beta_{i,j;k}$ take the values as stated, which
  proves the first statement. The remaining statement follows from the
  special case of equal $\beta_{i,j;k}$, $k\in I_{i,j}$.
\end{proof}

\printbibliography[heading=bibintoc]
\end{document}

